\documentclass{article}

\usepackage[preprint]{neurips_2026}

\usepackage[utf8]{inputenc}
\usepackage[T1]{fontenc}
\usepackage{hyperref}
\usepackage{url}
\usepackage{booktabs}
\usepackage{longtable}
\usepackage{wrapfig}
\usepackage{float}
\usepackage{subcaption}
\usepackage{amsmath,amssymb,amsthm,mathtools}
\usepackage{graphicx}
\usepackage{microtype}
\usepackage{enumitem}
\usepackage{nicefrac}
\usepackage{xcolor}
\usepackage{array}
\usepackage{bbm}
\usepackage{algorithm}
\usepackage{algpseudocode}

\hypersetup{colorlinks=true,linkcolor=blue,citecolor=blue,urlcolor=blue}

\newtheorem{assumption}{Assumption}

\newtheorem{proposition}{Proposition}
\newtheorem{theorem}{Theorem}
\newtheorem{corollary}{Corollary}
\newtheorem{lemma}{Lemma}
\theoremstyle{remark}
\newtheorem{remark}{Remark}

\newcommand{\DR}{\mathcal D^r}
\newcommand{\DO}{\mathcal D^o}
\newcommand{\D}{\mathcal D}
\newcommand{\R}{\mathbb R}
\newcommand{\E}{\mathbb E}

\newcommand{\Prob}{\mathbb P}
\newcommand{\one}{\mathbbm 1}

\newcommand{\calZ}{\mathcal Z}

\newcommand{\calG}{\mathcal G}
\newcommand{\ipm}{\operatorname{IPM}}

\newcommand{\expit}{\operatorname{expit}}
\newcommand{\Normal}{\mathcal N}
\newcommand{\Bern}{\operatorname{Bernoulli}}
\newcommand{\KL}{\operatorname{KL}}

\newcommand{\ATE}{\operatorname{ATE}}
\newcommand{\ess}{n_{\mathrm{eff}}}

\title{B-CALM: Bias-Limited Bayesian Borrowing \\for RCT-Anchored Treatment Effects \\under Covariate Mismatch}

\author{%
  Amir Asiaee\thanks{Corresponding author: \texttt{amir.asiaeetaheri@vumc.org}.} \\
  Department of Biostatistics\\
  Vanderbilt University Medical Center\\
  Nashville, TN 37203, USA \\
  \And
  Samhita Pal \\
  Department of Biostatistics\\
  Vanderbilt University Medical Center\\
  Nashville, TN 37203, USA \\
}

\begin{document}
\maketitle

\begin{abstract}
Randomized controlled trials (RCTs) identify treatment effects in the randomized trial population but are often too small for reliable heterogeneity estimation; observational studies (OS) are larger but confounded and measured on only partially overlapping covariates. We develop Bayesian Calibrated ALignment under covariate Mismatch (B-CALM), a Bayesian borrowing method for RCT-defined conditional average treatment effect (CATE) estimation. B-CALM maps source-specific covariates into a shared latent state, jointly models trial and observational outcome surfaces, and uses baseline-bias and comparative-bias functions to represent how the OS departs from the trial estimand. The comparative-bias prior becomes an explicit sensitivity knob: we prove a finite-feature bias-limited information bound showing that observational contrast information about the trial treatment-effect function is capped by the prior precision of this bias function, and derive an effective-sample-size formula showing that the RCT-equivalent information contributed by the OS saturates as OS sample size grows. The theory also combines a PAC-Bayes trial-risk bound with an integral-probability-metric (IPM) alignment and calibration decomposition that separates RCT empirical risk, latent alignment, and residual calibration of the debiased OS surface. In synthetic, semi-synthetic, and pediatric-obesity external-control studies, B-CALM maintains near-nominal average coverage and low negative transfer while pooled and causal-forest baselines can become overconfident under comparative bias.
\end{abstract}

\section{Introduction}
\label{sec:intro}

Randomized controlled trials (RCTs) remain the main design for internally valid treatment-effect estimation. In many clinical and policy settings, however, RCTs are powered for an average treatment effect, not for conditional average treatment effects (CATEs) over patient subgroups. Large observational studies (OS) offer the obvious source of additional information: they contain many more individuals, often measured through detailed health systems, registries, or administrative databases. That information carries its own bias. Treatment assignment is not randomized, unmeasured confounding can distort treatment contrasts, patient populations can differ from the trial, and the measured covariates in the two sources often overlap only partially.

The problem is therefore not simply how to combine an RCT and an OS. The harder task is to let the OS reduce uncertainty about the \emph{trial-population} CATE while preventing a large but biased source from becoming the de facto estimand. Recent calibrated-borrowing methods were built around this constraint. R-OSCAR uses observational outcome predictions as nuisance functions and calibrates their residual discrepancy in the trial \citep{asiaee2025roscar}; MR-OSCAR extends the same calibration strategy by imputing trial-missing observational covariates \citep{pal2026mroscar}; and CALM avoids explicit imputation by aligning source-specific embeddings before calibration \citep{asiaee2026calm}. These methods use the RCT to define and identify the target contrast, but primarily return point estimates. Bayesian borrowing takes a different route: power priors, commensurate priors, robust meta-analytic-predictive priors, and joint external-control models use priors or explicit bias parameters to regulate external evidence \citep{ibrahim2000power,hobbs2011hierarchical,schmidli2014robust,harrell2023hxcontrol}, but usually target scalar or low-dimensional treatment summaries, such as average effects, response rates, or prespecified subgroup effects, rather than a covariate-indexed CATE function under covariate mismatch.

\vspace{-0.6em}
\paragraph{Modeling principle.}
Bayesian Calibrated ALignment under covariate Mismatch (B-CALM) views the observational study as a biased, partially mismatched measurement of the trial outcome surfaces. Let trial covariates be $X^r=(W,U)$ and observational covariates be $X^o=(W,V)$, where $W$ is shared, $U$ is trial-only, and $V$ is observational-only. Source-specific encoders map each covariate block into a common latent patient state $Z$. The latent state is not itself the estimand; it is the coordinate system for comparing the two data sources at matched patient states.

On this shared coordinate system, the RCT defines the trial outcome surface
\begin{equation*}
  \eta^r(a,z)=\mu(z)+a\tau(z).
\end{equation*}
Here $\mu(z)$ is the trial control-outcome surface, the prognosis at $A=0$, and $\tau(z)$ is the trial treatment-contrast surface. Thus $\eta^r(0,z)=\mu(z)$ and $\eta^r(1,z)-\eta^r(0,z)=\tau(z)$; learning $\tau$ is the CATE problem after integrating over the trial encoder. Both $\mu$ and $\tau$ are unknown functions because the RCT is finite and covers patient states sparsely.

The observational study is allowed to have a different surface:
\begin{equation*}
  \eta^o(a,z)=\mu(z)+a\tau(z)+b_0(z)+a b_\Delta(z).
\end{equation*}
The function $b_0(z)$ is the \emph{baseline bias}: at the same latent patient state, it is the difference between OS and RCT controls. The function $b_\Delta(z)$ is the \emph{comparative bias}: it is the difference between the OS treatment contrast and the RCT treatment contrast after the baseline shift is removed. Equivalently, the OS contrast is $\tau(z)+b_\Delta(z)$, while the trial contrast of interest remains $\tau(z)$. This decomposition is the causal accounting of B-CALM: the OS may inform prognosis, representation, and even treatment contrasts, but only through an explicit model for how it may be biased relative to the RCT.

\vspace{-0.6em}
\paragraph{Borrowing as sensitivity.}
Priors on $b_0$ and $b_\Delta$ control borrowing. A tight prior on $b_\Delta$ encodes the belief that OS and RCT treatment contrasts are commensurate at each latent state, so the OS can sharpen inference for $\tau$. A diffuse prior allows the OS treatment contrast to be arbitrarily biased, so comparative borrowing reverts toward RCT-only inference. This is a function-valued analogue of commensurate and power-prior borrowing \citep{ibrahim2000power,hobbs2011hierarchical,hobbs2012commensurate}: instead of discounting the entire external likelihood by one scalar, B-CALM places a prior on the covariate-indexed discrepancy that determines which parts of the OS contrast are allowed to inform the trial contrast.

\vspace{-0.6em}
\paragraph{Main claim.}
The key consequence is \emph{bias-limited borrowing}. If the comparative-bias prior has nonzero variance, observational information about the trial treatment-effect function has a finite ceiling, no matter how large the OS becomes. In a linearized function model, the OS Fisher information for $\tau$ is bounded by the prior precision of $b_\Delta$; in the scalar special case for a one-number treatment summary, the effective sample size (ESS) saturates at a level determined by the ratio of observational noise to prior comparative-bias variance. This makes sensitivity analysis concrete: changing the prior standard deviation for comparative bias changes the maximum amount of treatment-contrast information the OS is allowed to contribute.

\paragraph{Contributions and positioning.}
We make four contributions.
\begin{enumerate}[leftmargin=*,itemsep=2pt,topsep=2pt]
\item We develop \textbf{Bayesian calibrated alignment} for RCT-defined CATE estimation in the partially overlapping-covariate setting, yielding a posterior over encoders, outcome surfaces, bias functions, and trial-population CATEs.
\item We introduce an explicit \textbf{baseline-bias} and \textbf{comparative-bias} decomposition that separates the trial estimand $\tau$ from the observational contrast $\tau+b_\Delta$ after alignment onto the shared latent representation.
\item We prove \textbf{function-valued bias-limited borrowing}: the comparative-bias prior precision bounds the information observational contrasts can contribute about the trial treatment-effect surface, with a scalar ESS corollary for one-number treatment summaries and the limiting full-pooling and RCT-only regimes.
\item We give a variational implementation and evaluate it on synthetic, semi-synthetic, and pediatric-obesity external-control studies. Across the main factorial simulation study, B-CALM has near-nominal 90\% credible-interval coverage (0.939 on average, with a minimum of about 0.86 under the most adversarial comparative-bias settings) with low negative transfer (0.023), while pooled and causal-forest baselines become overconfident under comparative bias; in the real-data study, B-CALM achieves a modest width reduction while keeping sensitivity to external-control bias explicit.
\end{enumerate}

Relative to R-OSCAR, MR-OSCAR, and CALM \citep{asiaee2025roscar,pal2026mroscar,asiaee2026calm}, B-CALM casts calibrated borrowing as posterior sensitivity analysis, rather than a point estimate with bootstrap or other separately propagated uncertainty. Relative to power, commensurate, robust MAP, and dynamic borrowing priors \citep{ibrahim2000power,hobbs2011hierarchical,schmidli2014robust,kaizer2018bayesian,kotalik2021framework}, B-CALM does not introduce a single global borrowing weight, such as a likelihood discount or exchangeability indicator; the amount of borrowing is implied locally by the prior and posterior uncertainty in the bias function $b_\Delta(z)$, learned jointly with source-specific encoders. Relative to representation-based CATE and heterogeneous-feature transfer methods \citep{johansson2016learning,shalit2017estimating,bica2022heterogeneous}, the relevant shift is RCT-versus-OS rather than treated-versus-control, and the observational source is a biased proxy for the trial estimand rather than its own target population. Generalizing from the randomized-trial population to a broader target population is a distinct transportability problem, reviewed by \citet{colnet2024causal} and \citet{degtiar2023review}. More recently, it has been studied under outcome-distribution shift using several sensitivity-analysis frameworks, including the marginal sensitivity model \citep{asiaee2026sharp} and the omitted-variable-bias framework \citep{asiaee2026ovb}.

\vspace{-0.6em}
\section{Related work}
\vspace{-0.6em}
\paragraph{Observational borrowing for trial estimands.}
The closest line of work starts from observational outcome models to improve trial-based CATE estimation while using the randomized trial to define and identify the RCT-defined CATE. R-OSCAR calibrates observational conditional-mean predictions using RCT data, reducing variance while preserving unbiased trial contrasts under regularity conditions \citep{asiaee2025roscar}. MR-OSCAR carries the calibration idea to covariate mismatch by imputing observational-only variables in the trial and deriving a finite-sample decomposition with an imputation-error term \citep{pal2026mroscar}. CALM, the direct predecessor of B-CALM, replaces explicit imputation with source-specific embeddings aligned into a shared representation and decomposes risk into alignment, outcome-model, and calibration terms \citep{asiaee2026calm}. B-CALM inherits this calibrated-alignment target and the same estimand discipline: observational information can reduce uncertainty only through calibration or explicit bias modeling, not by redefining the RCT estimand toward the OS contrast. The change is inferential rather than estimand-level: B-CALM turns calibrated alignment from a point-estimation procedure with separately propagated uncertainty into a Bayesian joint model, yielding a posterior over encoders, trial and observational outcome surfaces, source-bias functions, and trial-population CATEs.

\vspace{-0.6em}
\paragraph{Bayesian borrowing and external controls.}
B-CALM's borrowing mechanism sits nearer to Bayesian external-data methods, which regulate how much historical or external evidence enters an analysis. Power priors downweight the historical likelihood by a scalar \citep{ibrahim2000power}; commensurate priors borrow according to parameter similarity \citep{hobbs2011hierarchical}; robust meta-analytic-predictive priors add a weakly informative mixture component to protect against prior-data conflict \citep{neuenschwander2010summarizing,schmidli2014robust}; and multisource exchangeability models use discrete indicators to decide which sources are exchangeable \citep{kaizer2018bayesian}, with heterogeneous-effect dynamic borrowing studied in \citet{kotalik2021framework}. These methods answer the question of \emph{how much} external evidence to use, typically for average effects or matched-covariate models; reviews by \citet{vanrosmalen2018including} and \citet{lewis2019borrowing} emphasize the danger of uncontrolled prior-data conflict. B-CALM asks the borrowing question pointwise in latent state: how much should the OS treatment contrast at state $z$ inform the posterior for the RCT treatment contrast at the same state? The prior on $b_\Delta(z)$ answers that question: tight priors permit local pooling, while diffuse priors return the contrast toward RCT-only inference. This is a function-valued version of the explicit bias parameters advocated in joint external-control modeling \citep{harrell2023hxcontrol}.

\vspace{-0.6em}
\paragraph{Representation learning under treatment and domain mismatch.}
Representation-based CATE methods such as TARNet and CFRNet \citep{johansson2016learning,shalit2017estimating} address treatment-arm imbalance within an observational study: they learn a representation $\Phi(X)$ and outcome heads for $Y(0)$ and $Y(1)$, with balance penalties or bounds involving an integral-probability metric between $P\{\Phi(X)\mid A=1\}$ and $P\{\Phi(X)\mid A=0\}$. HTCE \citep{bica2022heterogeneous} addresses a domain-transfer version of CATE estimation in which source and target domains may have different covariate spaces, using shared and private representations to transfer potential-outcome information. B-CALM uses representation learning for a different purpose: the mismatch is between randomized and observational sources, the estimand remains the RCT-defined CATE, and OS treatment-contrast information is allowed to help only through the explicit comparative-bias function $b_\Delta(z)$.

\paragraph{CATE uncertainty and RCT--OS fusion.}
Bayesian CATE learners provide posterior uncertainty for treatment-effect surfaces, including multi-task Gaussian processes for individualized effects \citep{alaa2017bayesian} and Bayesian causal forests \citep{hahn2020bayesian}. These methods model CATE uncertainty, but they do not by themselves solve the problem of borrowing from a potentially biased OS under covariate mismatch. A separate RCT--OS fusion literature uses trial data to correct or constrain observational CATE learning: \citet{kallus2018removing} use limited experimental data to correct hidden confounding in observational models, \citet{hatt2022combining} use observational data to learn shared representations and randomized data to learn trial-specific structure, and \citet{ilse2023combining} combine observational and interventional distributions through causal reductions. Closest to our Bayesian framing, Causal-ICM \citep{dimitriou2024causalicm} uses multi-task Gaussian processes with a borrowing parameter for RCT--OS heterogeneous-treatment-effect estimation. B-CALM differs by targeting the RCT-defined CATE under partially overlapping covariates and by making source bias an explicit baseline and comparative-bias process whose prior controls local borrowing.

\vspace{-0.6em}
\section{Problem setup}
\vspace{-0.6em}

We observe two samples: a randomized trial $\DR=\{(Y_i^r,A_i^r,W_i^r,U_i^r):i=1,\ldots,n^r\}$ and an observational study $\DO=\{(Y_j^o,A_j^o,W_j^o,V_j^o):j=1,\ldots,n^o\}$. The treatment $A\in\{0,1\}$ is randomized in $\DR$ but not in $\DO$. The covariate block $W$ is shared, $U$ is trial-only, and $V$ is observational-only. We use $s\in\{r,o\}$ for source. The source indicator is written as a superscript throughout ($n^r$, $X^r$, $\eta^o$, $R^r$); the only exception is the residual variances $\sigma_r^2,\sigma_o^2$.

Let $Y^r(a)$ denote the trial potential outcome under treatment $a$. We focus on continuous outcomes. The trial estimand is
\begin{equation}
  \tau^r(x^r)=\E\{Y^r(1)-Y^r(0)\mid X^r=x^r\}.
  \label{eq:continuous_estimand}
\end{equation}
Binary outcomes can be handled by placing the same bias decomposition on the linear-predictor scale of a generalized linear model and transforming posterior draws back to risks and risk differences; Appendix~\ref{app:binary-outcomes} provides the notation. The OS superscript denotes a data source, not a second causal estimand: because $A^o$ is not randomized, OS treatment-level outcome surfaces are treated as observed conditional surfaces that may be biased relative to the trial causal surface.

\begin{assumption}[Trial identification]
\label{ass:trial-identification}
Within the RCT, treatment is randomized and satisfies positivity: $0<\Prob(A^r=1\mid X^r)<1$. Consistency holds: $Y^r=Y^r(A^r)$.
\end{assumption}

\begin{assumption}[Trial-relevant latent state]
\label{ass:trial-latent-state}
Write $m_a^r(x^r)=\E\{Y^r(a)\mid X^r=x^r\}$. There exist a trial encoder $q_0^r(z\mid x^r)$, functions $\mu_0,\tau_0:\calZ\to\R$, and a representation error $\varepsilon_{\mathrm{rep}}\ge0$ such that, for $a\in\{0,1\}$,
\begin{equation}
  \sup_{x^r}
  \left|
    m_a^r(x^r)
    -
    \E_{Z\sim q_0^r(\cdot\mid x^r)}
    \{\mu_0(Z)+a\tau_0(Z)\}
  \right|
  \le
  \varepsilon_{\mathrm{rep}}.
  \label{eq:trial-latent-state-assumption}
\end{equation}
\end{assumption}
This assumption says that the latent state is a sufficient working coordinate system for the RCT outcome surfaces, up to approximation error. It is not an assumption that the OS is causal, and it is not an assumption that $Z$ is a literal biological hidden variable. In the deterministic-encoder case, Eq.~\eqref{eq:trial-latent-state-assumption} reduces to $m_a^r(x^r)\approx\mu_0\{e^r(x^r)\}+a\tau_0\{e^r(x^r)\}$; in the variational-encoder case, the approximation is averaged over the encoder distribution.

Alignment can be encouraged in several ways. If the shared covariate block $W$ is informative, learned encoders can be constrained to preserve it, for example by adding an auxiliary decoder that predicts $W$ from $Z$. If linked records are available, such as trial participants linked to external registry or EHR records, one can directly penalize disagreement between the two encoded views of the same unit. In the theory, we use an integral probability metric (IPM) to quantify the remaining difference between the induced trial and OS latent distributions. For probability distributions $P_1$ and $P_2$ on $\calZ$ and a discriminator class $\calG$, define:
\begin{equation}
  \ipm_{\calG}(P_1,P_2)
  =
  \sup_{g\in\calG}
  \left|
    \E_{Z\sim P_1}\{g(Z)\}
    -
    \E_{Z\sim P_2}\{g(Z)\}
  \right|.
  \label{eq:ipm-definition}
\end{equation}
In a learned-encoder implementation, the IPM could also be used as a regularizer. Maximum mean discrepancy is one special case, obtained when $\calG$ is the unit ball of a reproducing-kernel Hilbert space \citep{gretton2012kernel}. The assumption below records the residual source mismatch after the chosen encoder construction or regularization; it applies to any encoder construction that satisfies the stated IPM bound.

\begin{assumption}[Calibratable source alignment]
\label{ass:source-alignment}
Let $P_Z^r(\phi^r)$ and $P_Z^o(\phi^o)$ denote the latent distributions induced by $Z^r\sim q_{\phi^r}(\cdot\mid X^r)$ and $Z^o\sim q_{\phi^o}(\cdot\mid X^o)$. For a discriminator class $\calG$, define
\begin{equation}
  \Delta_Z(\phi^r,\phi^o)
  =
  \ipm_{\calG}\{P_Z^r(\phi^r),P_Z^o(\phi^o)\}.
  \label{eq:latent-alignment-assumption}
\end{equation}
There exist source-specific encoders $(\phi_0^r,\phi_0^o)$, with $\phi_0^r$ satisfying Assumption~\ref{ass:trial-latent-state}, such that
\(
  \Delta_Z(\phi_0^r,\phi_0^o)
  \le
  \varepsilon_{\mathrm{align}}
\)
for some residual alignment error $\varepsilon_{\mathrm{align}}$.
\end{assumption}
Here ``source-specific'' means that the two latent distributions are generated from different observed covariate systems and different encoders: trial units pass through $q_{\phi^r}$ and OS units pass through $q_{\phi^o}$. Assumption~\ref{ass:source-alignment} is therefore the mathematical version of ``the two covariate systems can be put on a common latent coordinate system, but not perfectly.'' The remaining mismatch is not ignored; it appears as the IPM term $\Delta_Z$. Small alignment error makes OS information more useful, while large alignment error pushes the analysis toward RCT-only information.

We make no ignorability assumption for the observational treatment assignment. Instead, the observational conditional outcome surface is allowed to differ from the trial causal surface through bias functions. This separates B-CALM from point-identification designs that presume observational treatment comparisons are causal: the observational data are useful but not presumed causal for treatment comparisons.

\section{B-CALM model}
\label{sec:model}
\vspace{-0.6em}
\begin{figure}[H]
  \centering
  \includegraphics[width=0.9\textwidth]{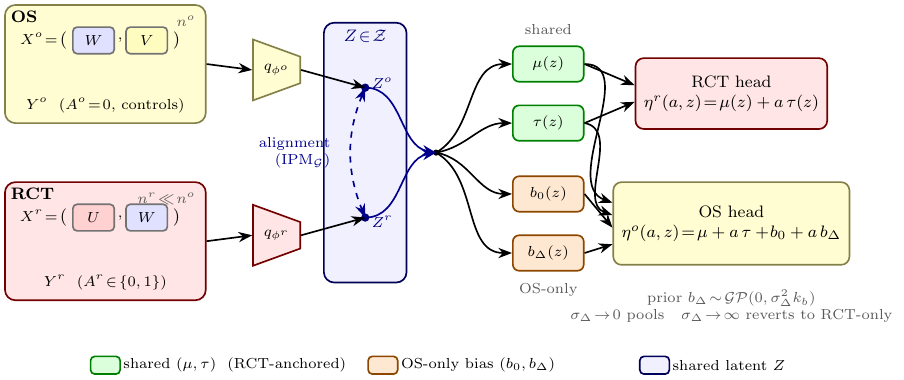}
  \caption{B-CALM architecture and causal bookkeeping. Source-specific encoders $q_{\phi^r}$ and $q_{\phi^o}$ map the partially overlapping covariate blocks $X^r=(W,U)$ and $X^o=(W,V)$ into a shared latent state $Z$, with source alignment encoded by the chosen latent maps and, when those maps are learned, regularized by an IPM or related calibration loss. The RCT head defines the RCT-defined functions $(\mu,\tau)$, while the OS head observes the shifted functions $(\mu+b_0,\tau+b_\Delta)$. Priors on the bias functions, especially $b_\Delta$, determine how much OS contrast information can enter the posterior for the trial CATE.}
  \label{fig:bcalm-arch}
\end{figure}

\subsection{Probabilistic calibrated alignment}

Figure~\ref{fig:bcalm-arch} summarizes our framework. B-CALM introduces source-specific encoders $Z_i^r\sim q_{\phi^r}(z\mid W_i^r,U_i^r)$ and $Z_j^o\sim q_{\phi^o}(z\mid W_j^o,V_j^o)$, where $q_{\phi^r}$ and $q_{\phi^o}$ may be deterministic neural encoders, variational encoders, sparse linear projections, or GP latent-variable maps. The corresponding source-specific latent distributions are the distributions of $Z_i^r$ when trial covariates are passed through $q_{\phi^r}$ and of $Z_j^o$ when OS covariates are passed through $q_{\phi^o}$. The encoder does not make the OS causal; it defines a common coordinate system on which trial and OS outcome surfaces can be compared, while the remaining source discrepancy is absorbed by explicit bias functions. In the general framework, learned encoders can be regularized by matching latent distributions, for example through the IPM in Eq.~\eqref{eq:ipm-definition}; shared-covariate reconstruction can be added if one specifies an auxiliary decoder.

We define the outcome model on the shared latent space. For the continuous-outcome model used in the paper, the likelihood is Gaussian; write $p_{\mathrm G}(y;m,\sigma^2)$ for the $\Normal(m,\sigma^2)$ density. The RCT likelihood is
\begin{equation}
  Y_i^r\mid A_i^r,Z_i^r,\mu,\tau \sim \Normal\!\left\{\eta^r(A_i^r,Z_i^r),\sigma_r^2\right\},
  \quad
  \eta^r(a,z)=\mu(z)+a\tau(z).
  \label{eq:rct_eta}
\end{equation}
The OS likelihood has the same Gaussian form but with source-bias functions:
\begin{equation}
  Y_j^o\mid A_j^o,Z_j^o,\mu,\tau,b_0,b_\Delta \sim \Normal\!\left\{\eta^o(A_j^o,Z_j^o),\sigma_o^2\right\},
  \quad
  \eta^o(a,z)=\mu(z)+a\tau(z)+b_0(z)+a b_\Delta(z).
  \label{eq:os_eta}
\end{equation}
The function $\mu$ is the trial baseline outcome surface and $\tau$ is the trial treatment-effect function. The two source-specific bias functions record how the OS departs from the trial. Evaluating Eq.~\eqref{eq:os_eta} at $a=0$ and $a=1$, the control-surface gap $\eta^o(0,z)-\eta^r(0,z)=b_0(z)$ is the \textbf{baseline bias} (how OS controls differ from RCT controls at the same latent $z$, through unobserved comorbidities, selection, or measurement drift) and the contrast gap $[\eta^o(1,z)-\eta^o(0,z)]-[\eta^r(1,z)-\eta^r(0,z)]=b_\Delta(z)$ is the \textbf{comparative bias} (the residual that confounding leaves on the contrast after the baseline shift is removed). The OS has its own pair $(\mu+b_0,\tau+b_\Delta)$ for the control surface and treatment effect, while the RCT identifies $(\mu,\tau)$. The posterior CATE is computed by integrating the encoder:
\begin{equation}
  \tau^r(x^r)=\E\left[\tau(Z)\mid Z\sim q_{\phi^r}(\cdot\mid x^r),\D\right],
  \label{eq:posterior_cont}
\end{equation}

\paragraph{Priors and borrowing interpretation.}
One nonparametric specification places independent GP priors $\mu\sim\mathcal{GP}(0,k_\mu)$, $\tau\sim\mathcal{GP}(0,k_\tau)$, $b_0\sim\mathcal{GP}(0,\sigma_0^2 k_b)$, and $b_\Delta\sim\mathcal{GP}(m_\Delta,\sigma_\Delta^2 k_b)$. The mean $m_\Delta$ can encode directional prior knowledge about confounding. The variance $\sigma_\Delta^2$ controls comparative borrowing. If $\sigma_\Delta^2=0$ and $m_\Delta=0$, the observational and randomized treatment contrasts are pooled directly. If $\sigma_\Delta^2\to\infty$, the OS treatment contrast is allowed to differ arbitrarily from the RCT contrast and therefore contributes essentially no comparative information about $\tau$. Intermediate values define a sensitivity analysis.

This prior standard deviation connects B-CALM to classical Bayesian borrowing. A power prior discounts an external likelihood globally; a commensurate prior borrows according to similarity between low-dimensional parameters. B-CALM instead places the commensurability question on the contrast discrepancy $b_\Delta(z)$ itself. The OS can be highly precise about its own contrast $\tau(z)+b_\Delta(z)$, but its information about the trial contrast $\tau(z)$ is limited by how tightly the analyst is willing to concentrate $b_\Delta$ around its prior mean.

\paragraph{Variational implementation.}
For computation, we approximate the posterior over functions by basis-function or neural-network parameterizations, with weight vectors $\omega_\mu,\omega_\tau,\omega_0,\omega_\Delta$ for $\mu,\tau,b_0,b_\Delta$ respectively. Let $\theta=(\omega_\mu,\omega_\tau,\omega_0,\omega_\Delta,\phi^r,\phi^o)$ collect these weights together with the encoder parameters. Given priors $p(\theta)$, a variational approximation $q_\lambda(\theta)$ maximizes
\begin{align}
  \mathcal L(\lambda)
  ={}&\E_{q_\lambda}\left[
      \sum_{i=1}^{n^r}\log p_{\mathrm G}\{Y_i^r;\eta^r(A_i^r,Z_i^r),\sigma_r^2\}
      +\sum_{j=1}^{n^o}\log p_{\mathrm G}\{Y_j^o;\eta^o(A_j^o,Z_j^o),\sigma_o^2\}
  \right] \nonumber\\
  &-\KL\{q_\lambda(\theta)\|p(\theta)\}
  -\lambda_{\mathrm{align}}\,\widehat{\mathcal A}(q_{\phi^r},q_{\phi^o}),
  \label{eq:elbo}
\end{align}
where $Z_i^r\sim q_{\phi^r}(\cdot\mid W_i^r,U_i^r)$ and $Z_j^o\sim q_{\phi^o}(\cdot\mid W_j^o,V_j^o)$. For learned-encoder implementations, the alignment penalty $\widehat{\mathcal A}$ may be an MMD penalty, reconstruction loss for shared covariates, adversarial source-classification loss, or a combination; in the fixed-basis experiments, this term is absent because the representation is chosen before fitting. We compute the posterior by Monte Carlo draws of $(\theta,Z)$ and transformation through Eq.~\eqref{eq:posterior_cont}.

\paragraph{Algorithm sketch.}
B-CALM follows five steps: (1) choose an encoder family and, when encoders are learned, an alignment loss; (2) choose priors for $\mu,\tau,b_0,b_\Delta$, including prespecified sensitivity values for $\sigma_\Delta$; (3) optimize the evidence lower bound (ELBO) in Eq.~\eqref{eq:elbo}; (4) draw posterior samples of the trial control surface, treated surface, and CATEs for trial-population units; and (5) report posterior summaries and sensitivity curves over $\sigma_\Delta$.

\section{Theory}
\label{sec:theory}

B-CALM does not claim that observational data identify the RCT CATE. Assumption~\ref{ass:trial-identification} fixes the RCT estimand throughout; Assumption~\ref{ass:trial-latent-state} justifies writing the trial outcome surfaces on $Z$; and Assumption~\ref{ass:source-alignment} enters the IPM-calibration risk decomposition through $\Delta_Z$. The information bounds below condition on fixed encoders and are algebraic consequences of the Gaussian contrast model and the comparative-bias prior. The section proceeds in four steps. We first study the two extremes of the comparative-bias prior: full pooling when the OS and RCT treatment contrasts are assumed identical, and RCT-only inference when the OS contrast may differ arbitrarily. We then collapse the estimand to a one-number treatment-effect summary, such as an ATE, to derive the effective-sample-size calculation behind bias-limited borrowing. Next, we extend the same calculation to finite-dimensional representations of the CATE function, showing that the comparative-bias prior precision caps OS information about the RCT treatment-effect surface. Finally, we state a calibrated-alignment risk decomposition that separates RCT empirical risk, latent alignment, and residual calibration of the debiased OS surface.

\paragraph{Borrowing limit cases.}
We first study the two extremes of the comparative-bias prior.

\begin{proposition}[Full-pooling and RCT-only limits]
\label{prop:limits}
Consider the Gaussian B-CALM model in Eqs.~\eqref{eq:rct_eta}--\eqref{eq:os_eta} with fixed encoders and proper priors on $\mu$ and $\tau$. If $b_\Delta\equiv 0$, the OS and RCT share the same treatment-effect surface and the OS likelihood contributes directly to the posterior precision for $\tau$. If the prior on $b_\Delta$ converges to a flat Gaussian process prior under a uniform flat-precision condition on the OS contrast directions, the OS likelihood identifies only $\tau+b_\Delta$ and contributes no finite information to $\tau$ beyond information about shared baseline and representation parameters.
\end{proposition}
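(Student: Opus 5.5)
The plan is to reduce both claims to Gaussian conjugate linear algebra in a finite-feature (basis) representation, and then let the comparative-bias prior precision tend to its two extremes. With the encoders fixed, the latent states $Z_i^r,Z_j^o$ are fixed inputs. Write $\mu(z)=\phi_\mu(z)^\top\omega_\mu$, $\tau(z)=\phi_\tau(z)^\top\omega_\tau$, $b_0(z)=\phi_b(z)^\top\omega_0$ and $b_\Delta(z)=\phi_b(z)^\top\omega_\Delta$. The GP case then follows by taking finite-dimensional marginals at the observed latent points, which is exact for a Gaussian likelihood evaluated only there. The OS design for $b_\Delta$ is the row $A_j^o\phi_b(Z_j^o)^\top$, and for $\tau$ it is $A_j^o\phi_\tau(Z_j^o)^\top$. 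The joint posterior precision of $\theta=(\omega_\mu,\omega_\tau,\omega_0,\omega_\Delta)$ is
\begin{equation*}
\Lambda_{\mathrm{post}}=\Lambda_{\mathrm{prior}}+\sigma_r^{-2}X_r^\top X_r+\sigma_o^{-2}X_o^\top X_o .
\end{equation*}
The precision for $\omega_\tau$ is the Schur complement of this matrix after marginalizing the nuisance blocks $(\omega_\mu,\omega_0,\omega_\Delta)$.

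\emph{Full pooling.} Setting $b_\Delta\equiv0$ (prior precision of $\omega_\Delta$ equal to $\infty$, mean $0$) deletes the $\omega_\Delta$ block. The OS contribution to the $\tau$ block is then
\begin{equation*}
\sigma_o^{-2}\sum_j (A_j^o)^2\phi_\tau\phi_\tau^\top,
\end{equation*}
minus its Schur correction through $(\omega_\mu,\omega_0)$. I would show that this marginal contribution is positive semidefinite, which holds because the Schur complement of a PSD Gram matrix is PSD. It is strictly positive in directions where treated OS units are not collinear with the baseline features. This proves that the OS contributes directly to the $\tau$ precision.

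\emph{RCT-only limit.} Let the $\omega_\Delta$ prior precision be $\sigma_\Delta^{-2}K^{-1}$ with $\sigma_\Delta\to\infty$. I read the ``uniform flat-precision condition'' as saying this precision tends to $0$ uniformly on the span of the OS contrast directions $\{A_j^o\phi_\tau(Z_j^o)\}$, i.e.\ on the column space of the OS treatment-contrast design for $\tau$. The key step is to reparameterize the OS contrast as $\gamma=\tau+b_\Delta$, concretely $\omega_\Delta\mapsto$ a variable absorbing the OS contrast design. The OS likelihood then depends on $(\omega_\mu,\omega_0,\gamma)$ only, not separately on $\omega_\tau$. Under a flat prior on $\gamma$ given $\omega_\tau$, marginalizing $\gamma$ integrates the OS likelihood in the contrast directions to a constant in $\omega_\tau$. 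The only surviving OS terms are those for the shared baseline $(\omega_\mu,\omega_0)$ and the encoder parameters.

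Formally I would use the Schur-complement identity
\begin{equation*}
\Lambda_{\mathrm{OS}\to\tau}=C^\top\bigl(\sigma_o^{-2}I-\sigma_o^{-4}D(\sigma_o^{-2}D^\top D+\sigma_\Delta^{-2}K^{-1})^{-1}D^\top\bigr)C .
\end{equation*}
Here $C$ and $D$ are the OS contrast designs for $\tau$ and $b_\Delta$; the baseline blocks are handled analogously and isolated. The middle factor equals $(\sigma_o^2I+\sigma_\Delta^2DKD^\top)^{-1}$ by Woodbury, and it tends to the projector onto the orthogonal complement of $\mathrm{col}(D)$, scaled by $\sigma_o^{-2}$. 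Since $\mathrm{col}(C)\subseteq\mathrm{col}(D)$ by the flat-precision/span condition, the limit is $0$. The same Woodbury form gives the finite bound $\Lambda_{\mathrm{OS}\to\tau}\preceq\sigma_\Delta^{-2}(\cdot)$ that later results refine.

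\emph{Main obstacle.} The delicate step is the span condition together with the interaction with the $b_0,\mu$ blocks. I must make sure the joint Schur complement over all nuisance blocks still vanishes in the limit, not just the $b_\Delta$ piece. I would handle this by marginalizing $\omega_\Delta$ first, which is legitimate since Schur complements compose. After that, the remaining OS information is about $(\omega_\mu,\omega_0)$ only, and it enters $\tau$ solely through the RCT-shared $\mu$ correlations. This is exactly the ``shared baseline'' exception in the statement.
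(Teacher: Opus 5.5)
Your argument is correct, but it takes a different route from the paper. The paper's formal version (Proposition~\ref{prop:gp-flat-limit}) proceeds in four steps. It conditions on the baseline terms and integrates out a mesh discretization $b_m$ of $b_\Delta$, which gives an OS marginal likelihood with covariance $V_m=\Omega^o+\sigma_{\Delta,m}^2C_m$. It then uses $\|V_m^{-1}\|_{\mathrm{op}}\to0$ to get $L_m^o(\tau)/L_m^o(0)\to1$ pointwise, and upgrades this by Gaussian exponential-moment domination to $L^1$ convergence under the RCT-only posterior. That yields total-variation convergence on $C(\calZ_0)$, with $\mu,b_0$ handled by a Fubini remark.

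You instead work in a fixed finite dimension, using a basis or exact marginals at the observed latent points. The latter is legitimate because the likelihood sees $\tau$ only through those evaluations. In that setting everything is conjugate. The limit $(\sigma_o^2I+\sigma_\Delta^2DKD^\top)^{-1}\to\sigma_o^{-2}P_{\mathrm{col}(D)^\perp}$, together with $\mathrm{col}(C)\subseteq\mathrm{col}(D)$, annihilates every OS term involving $\omega_\tau$. This includes the cross terms $X_\mu^\top V^{-1}C$, where $X_\mu,X_0$ are the OS design blocks for $\omega_\mu,\omega_0$. The Schur complement passes to the limit because the proper priors keep the limiting precision positive definite.

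Your route buys three things:
\begin{itemize}
\item an elementary argument with no domination step;
\item a more targeted hypothesis than the paper's, which requires $C_m$ positive definite on the whole OS contrast space rather than only that $\mathrm{col}(D)$ contains $\mathrm{col}(C)$;
\item an explicit account of the ``shared baseline'' exception. The limiting OS precision lives only on $(\omega_\mu,\omega_0)$ and reaches $\tau$ through the RCT coupling of $\mu$ and $\tau$, which the paper treats only loosely.
\end{itemize}
The cost is generality. With fixed dimension and fixed $K\succ0$, uniform flatness is automatic, so your reading of the ``uniform flat-precision condition'' as a span condition is a reinterpretation. The paper's condition targets discretizations whose dimension grows with $m$. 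There, pointwise variance inflation can leave high-frequency directions with nonvanishing precision (Remark~\ref{rem:pointwise-flatness-counterexample}), a case your reduction neither covers nor needs.

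Two small repairs are needed.
\begin{itemize}
\item For convergence of the posterior, not just of the precision, note that the data term $[X_\mu,X_0,C]^\top V^{-1}(Y^o-D\bar\omega_\Delta)$ has the same limiting structure, so the mean converges too. Here $\bar\omega_\Delta$ is the prior mean of $\omega_\Delta$.
\item In the full-pooling part, the gain in marginal precision for $\omega_\tau$ is not the Schur complement of the OS Gram matrix alone, since the nuisance blocks also carry prior and RCT precision. Justify positivity by Loewner monotonicity of Schur complements: $\Lambda_1\preceq\Lambda_2$ implies the $\omega_\tau$ block of $\Lambda_2^{-1}$ is dominated by that of $\Lambda_1^{-1}$.
\end{itemize}
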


\paragraph{Effective sample size for an ATE.}
For intuition and comparison with classical borrowing, first collapse the estimand to a one-number treatment-effect summary, such as the trial-population ATE $\bar\tau=\E_{X^r}\{\tau^r(X^r)\}$. This is not the main B-CALM estimand; it is the finite-dimensional special case where effective-sample-size language is most transparent. The same calculation applies to any fixed finite-dimensional summary of the CATE surface, such as a prespecified subgroup treatment effect.

\begin{corollary}[ATE special case: bias-limited effective sample size]
\label{thm:scalar}
Take the one-dimensional special case $\widehat{\bar\tau}^r\mid\bar\tau\sim\Normal(\bar\tau,\sigma_r^2/n^r)$ and $\widehat{\bar\tau}^o\mid\bar\tau,b\sim\Normal(\bar\tau+b,\sigma_o^2/n^o)$, with priors $b\sim\Normal(0,\sigma_b^2)$ and $\bar\tau\sim\Normal(m_0,v_0)$, where $\bar\tau$ is a scalar, one-number treatment contrast and $m_0,v_0$ are its prior mean and prior variance. Then the posterior precision for $\bar\tau$ is $V_n^{-1}=v_0^{-1}+n^r/\sigma_r^2+(\sigma_o^2/n^o+\sigma_b^2)^{-1}$, the OS contributes the same precision as an unbiased RCT contrast with effective sample size $\ess^o=n^o/(1+n^o\sigma_b^2/\sigma_o^2)$, and $\lim_{n^o\to\infty}\ess^o=\sigma_o^2/\sigma_b^2$.
\end{corollary}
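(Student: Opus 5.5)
The plan is to integrate out the nuisance bias $b$ first and then use Gaussian conjugacy for $\bar\tau$.

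\textbf{Marginalizing the bias.} Conditional on $\bar\tau$, the OS summary is $\widehat{\bar\tau}^o=\bar\tau+b+\epsilon^o$, where $\epsilon^o\sim\Normal(0,\sigma_o^2/n^o)$ is independent of $b\sim\Normal(0,\sigma_b^2)$. A sum of independent Gaussians is Gaussian, so the marginal likelihood is
\[
\widehat{\bar\tau}^o\mid\bar\tau\sim\Normal\bigl(\bar\tau,\ \sigma_o^2/n^o+\sigma_b^2\bigr).
\]
Because $b$ enters only the OS likelihood and is a priori independent of $\bar\tau$, integrating it out leaves the RCT likelihood and the prior on $\bar\tau$ unchanged. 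The two summaries are also conditionally independent given $\bar\tau$.

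\textbf{Posterior precision.} The posterior for $\bar\tau$ is then the product of three Gaussian factors in $\bar\tau$. These are the prior with variance $v_0$, the RCT likelihood with variance $\sigma_r^2/n^r$, and the marginal OS likelihood with variance $\sigma_o^2/n^o+\sigma_b^2$. By standard conjugacy, precisions add, which gives
\[
V_n^{-1}=v_0^{-1}+n^r/\sigma_r^2+(\sigma_o^2/n^o+\sigma_b^2)^{-1}.
\]
The posterior mean is the corresponding precision-weighted average, although the statement does not require it.

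\textbf{Effective sample size.} I define $\ess^o$ as the number of unbiased observations, each with noise variance $\sigma_o^2$, that would contribute the same precision. So I solve
\[
\ess^o/\sigma_o^2=(\sigma_o^2/n^o+\sigma_b^2)^{-1}.
\]
Multiplying the denominator through by $n^o/\sigma_o^2$ gives
\[
\ess^o=\frac{n^o}{1+n^o\sigma_b^2/\sigma_o^2}.
\]
For the limit, divide numerator and denominator by $n^o$ to get $\ess^o=1/(1/n^o+\sigma_b^2/\sigma_o^2)$. This tends to $\sigma_o^2/\sigma_b^2$ as $n^o\to\infty$ when $\sigma_b^2>0$. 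If $\sigma_b^2=0$, then $\ess^o=n^o$ diverges, which is consistent with the full-pooling case of Proposition~\ref{prop:limits}.

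\textbf{Main subtlety.} The only real care point is the interpretation of ``unbiased RCT contrast.'' The statement measures the equivalence in units of OS noise variance $\sigma_o^2$. If one instead normalized by the RCT noise, the result would be $n^o$ rescaled by $\sigma_r^2/\sigma_o^2$. I would therefore state the per-observation variance convention explicitly before matching precisions. Otherwise every step is routine Gaussian algebra.
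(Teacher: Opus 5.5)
Your proposal is correct and follows the paper's proof essentially step for step: you marginalize $b$ to get $\widehat{\bar\tau}^o\mid\bar\tau\sim\Normal(\bar\tau,\sigma_o^2/n^o+\sigma_b^2)$, add the prior and likelihood precisions, and equate the OS precision with $\ess^o/\sigma_o^2$ to obtain the effective sample size and its limit. Your added remarks, that the limit requires $\sigma_b^2>0$ and that the equivalence is measured in units of the per-observation OS variance $\sigma_o^2$, make explicit conventions the paper uses without stating them.
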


Increasing $n^o$ can remove sampling noise in the OS contrast, but it cannot remove prior uncertainty about the OS-to-RCT contrast gap. The saturation level is determined by the ratio of observational noise to comparative-bias prior variance.

\paragraph{Function-valued bias-limited information.}
The one-number calculation becomes a B-CALM statement when applied to a latent treatment-effect surface. Here the encoder and function basis play different roles. The encoder parameters $\phi$ determine where a unit is placed in latent space; they answer the question, ``what is this unit's latent state $z$?'' The basis map $\psi$ is used after $z$ has been chosen; it answers the question, ``how do we represent functions such as $\tau(z)$ and $b_\Delta(z)$ on that latent space?'' The matrix $\Phi^o$ then stacks the OS basis rows $\psi(Z_j^o)^\top$ and is therefore a design matrix for the OS contrast regression, not an encoder.

\begin{theorem}[Function-valued bias-limited information]
\label{thm:functional}
Fix the encoders and hence the OS latent locations $Z_1^o,\ldots,Z_{n^o}^o$. Let $\psi(z)\in\R^d$ be a basis-function map, write $\tau(z)=\psi(z)^\top\beta_\tau$ and $b_\Delta(z)=\psi(z)^\top\beta_b$, and assume $\beta_b\sim\Normal(0,\Sigma_b)$ with $\Sigma_b$ positive definite. Stack the OS basis rows $\psi(Z_j^o)^\top$ into $\Phi^o\in\R^{n^o\times d}$. Suppose the OS contrast data are summarized by the Gaussian linear model $\widehat d^o=\Phi^o\beta_\tau+\Phi^o\beta_b+\varepsilon$, with $\varepsilon\sim\Normal(0,\sigma^2I_{n^o})$, where $\widehat d^o$ is a noisy vector of OS treatment-contrast measurements at the OS latent locations. Marginalizing over $\beta_b$ gives $\widehat d^o\mid\beta_\tau\sim\Normal\{\Phi^o\beta_\tau,\sigma^2I_{n^o}+\Phi^o\Sigma_b(\Phi^o)^\top\}$. If $\Phi^o$ has full column rank, the OS Fisher information for $\beta_\tau$ is
\begin{equation}
  \begin{aligned}
  I^o
  =
  (\Phi^o)^\top\!\left\{\sigma^2I+\Phi^o\Sigma_b(\Phi^o)^\top\right\}^{-1}\Phi^o
  =
  \left\{\Sigma_b+\sigma^2((\Phi^o)^\top\Phi^o)^{-1}\right\}^{-1}, \quad 
  I^o \preceq \Sigma_b^{-1} .
  \end{aligned}
\end{equation}
The positive-semidefinite statement means that the OS information is no larger in any finite-dimensional direction than the comparative-bias prior precision. As OS information grows in every basis direction, for example when the smallest eigenvalue of $(\Phi^o)^\top\Phi^o$ diverges, $I^o\to\Sigma_b^{-1}$.
\end{theorem}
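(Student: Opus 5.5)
The plan has four steps: marginalize out the bias coefficients, read off the Fisher information from the resulting Gaussian location model, convert it to the compact form by a matrix identity, and then obtain the ordering and the limit by monotonicity of matrix inversion.

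\textbf{Step 1 (marginalization).} Since $\beta_b\sim\Normal(0,\Sigma_b)$ is independent of $\varepsilon$, the vector $\Phi^o\beta_b+\varepsilon$ is Gaussian with mean zero and covariance $\Sigma:=\sigma^2I_{n^o}+\Phi^o\Sigma_b(\Phi^o)^\top$. Conditionally on $\beta_\tau$, this gives $\widehat d^o\mid\beta_\tau\sim\Normal(\Phi^o\beta_\tau,\Sigma)$.

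\textbf{Step 2 (Fisher information).} In this model the covariance $\Sigma$ does not depend on $\beta_\tau$. For a Gaussian location model $\Normal(M\beta,\Sigma)$ with known $\Sigma$, the Fisher information is $M^\top\Sigma^{-1}M$; it is the negative Hessian of the log-likelihood, and that Hessian is constant in $\beta$. Taking $M=\Phi^o$ yields the first expression for $I^o$.

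\textbf{Step 3 (compact form).} Write $G=(\Phi^o)^\top\Phi^o$, which is invertible because $\Phi^o$ has full column rank. Apply the Woodbury identity
\[
\Sigma^{-1}=\sigma^{-2}I-\sigma^{-4}\Phi^o\bigl(\Sigma_b^{-1}+\sigma^{-2}G\bigr)^{-1}(\Phi^o)^\top .
\]
Sandwiching this between $(\Phi^o)^\top$ and $\Phi^o$ gives
\[
I^o=\sigma^{-2}G-\sigma^{-4}G\bigl(\Sigma_b^{-1}+\sigma^{-2}G\bigr)^{-1}G .
\]
This is the Woodbury expansion of $\{\Sigma_b+\sigma^2G^{-1}\}^{-1}$, now read with $A=\Sigma_b$ and $C=\sigma^2G^{-1}$, which proves the stated identity. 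A cleaner route is to verify directly that $I^o\{\Sigma_b+\sigma^2G^{-1}\}=I$. Using $(\Phi^o)^\top\Sigma^{-1}\Phi^o\Sigma_b(\Phi^o)^\top=(\Phi^o)^\top\Sigma^{-1}(\Sigma-\sigma^2I)=(\Phi^o)^\top-\sigma^2(\Phi^o)^\top\Sigma^{-1}$, right-multiplying by $\Phi^o$ gives
\[
I^o\Sigma_b G=G-\sigma^2I^o .
\]
Multiplying on the right by $G^{-1}$ gives $I^o\Sigma_b+\sigma^2I^oG^{-1}=I$, as required. I would present this direct verification, since it avoids lengthy Woodbury bookkeeping.

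\textbf{Step 4 (ordering and limit).} The matrix $\sigma^2G^{-1}$ is positive definite, so $\Sigma_b+\sigma^2G^{-1}\succeq\Sigma_b\succ0$. Matrix inversion is operator-monotone decreasing on positive definite matrices, which gives $I^o\preceq\Sigma_b^{-1}$. For the limit, $\lambda_{\min}(G)\to\infty$ implies $\|\sigma^2G^{-1}\|=\sigma^2/\lambda_{\min}(G)\to0$. Continuity of inversion at the positive definite matrix $\Sigma_b$ then gives $I^o\to\Sigma_b^{-1}$.

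The main obstacle is the algebraic identity in Step 3. Everything else is standard once it holds, so I would rely on the direct verification rather than expanding Woodbury twice. The operator-monotonicity of inversion should also be cited, or justified in one line via $A\succeq B\succ0\iff B^{-1/2}AB^{-1/2}\succeq I$.
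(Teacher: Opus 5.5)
Your proposal is correct and follows the same route as the paper's proof. Both marginalize $\beta_b$, read off $I^o$ from the Gaussian location model with known marginal covariance, rewrite it as $\{\Sigma_b+\sigma^2((\Phi^o)^\top\Phi^o)^{-1}\}^{-1}$, and conclude by operator monotonicity and continuity of inversion as $((\Phi^o)^\top\Phi^o)^{-1}\to0$; the only difference is that the paper cites a ``GLS information identity'' for the middle equality, whereas your direct check that $I^o\{\Sigma_b+\sigma^2((\Phi^o)^\top\Phi^o)^{-1}\}=I$ actually supplies that algebra.
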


The algebra above is standard Gaussian marginalization. The B-CALM implication is the important part: once the OS contrast is modeled as $\tau(z)+b_\Delta(z)$, the prior precision of the unknown function $b_\Delta$ is the information ceiling for borrowing about the unknown function $\tau$.

\paragraph{Encoder uncertainty in the ESS calculation.}
The preceding results condition on a fixed representation. When the OS encoder is uncertain, posterior variation in the mapped OS contrast acts like an additional irreducible discrepancy term. This is not the usual bias-variance tradeoff; it is extra uncertainty about how the OS contrast maps onto the RCT latent coordinate system.

\begin{proposition}[Encoder-aware effective sample size]
\label{prop:encoder-aware-ess}
Consider the scalar model $\widehat{\bar\tau}^r\mid\bar\tau\sim\Normal(\bar\tau,\sigma_r^2/n^r)$ and $\widehat{\bar\tau}^o\mid\bar\tau,b,\phi^o\sim\Normal\{\bar\tau+b+\xi(\phi^o),\sigma_o^2/n^o\}$, where $b\sim\Normal(0,\sigma_b^2)$ and $\xi(\phi^o)\sim\Normal(0,\sigma_{\mathrm{enc}}^2)$ are independent. Then the OS effective sample size becomes
\begin{equation}
  {\ess}^{o}_{\mathrm{enc}}
  =
  \frac{n^o}{1+n^o(\sigma_b^2+\sigma_{\mathrm{enc}}^2)/\sigma_o^2},
  \qquad
  \lim_{n^o\to\infty}{\ess}^{o}_{\mathrm{enc}}
  =
  \frac{\sigma_o^2}{\sigma_b^2+\sigma_{\mathrm{enc}}^2}.
\end{equation}
Encoder uncertainty is therefore additive to comparative-bias uncertainty in the borrowing cap.
\end{proposition}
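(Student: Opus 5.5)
The plan is to reduce Proposition~\ref{prop:encoder-aware-ess} to Corollary~\ref{thm:scalar} by absorbing the encoder term into the comparative bias. Under the stated independence, $b+\xi(\phi^o)$ is Gaussian with mean zero and variance $\sigma_b^2+\sigma_{\mathrm{enc}}^2$, and it is independent of $\bar\tau$ and of the OS sampling noise. I will define the composite discrepancy $\tilde b=b+\xi(\phi^o)$. The OS model then reads $\widehat{\bar\tau}^o\mid\bar\tau,\tilde b\sim\Normal(\bar\tau+\tilde b,\sigma_o^2/n^o)$ with $\tilde b\sim\Normal(0,\tilde\sigma_b^2)$ and $\tilde\sigma_b^2=\sigma_b^2+\sigma_{\mathrm{enc}}^2$. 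This is exactly the setting of Corollary~\ref{thm:scalar} with $\sigma_b^2$ replaced by $\tilde\sigma_b^2$.

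Next I will verify the marginalization directly rather than only invoking the corollary. Integrating out $b$ and $\xi$ gives $\widehat{\bar\tau}^o\mid\bar\tau\sim\Normal(\bar\tau,\sigma_o^2/n^o+\sigma_b^2+\sigma_{\mathrm{enc}}^2)$, since a sum of independent Gaussians has additive variances. The OS precision contribution to the posterior for $\bar\tau$ is therefore $(\sigma_o^2/n^o+\sigma_b^2+\sigma_{\mathrm{enc}}^2)^{-1}$, and it adds to $v_0^{-1}+n^r/\sigma_r^2$ as in the corollary. Following the corollary's definition, the effective sample size is the $\ess^o_{\mathrm{enc}}$ satisfying $\ess^o_{\mathrm{enc}}/\sigma_o^2=(\sigma_o^2/n^o+\tilde\sigma_b^2)^{-1}$, that is, the number of unbiased observations at noise variance $\sigma_o^2$ giving the same precision. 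Multiplying numerator and denominator by $n^o/\sigma_o^2$ yields $n^o/\{1+n^o\tilde\sigma_b^2/\sigma_o^2\}$. Letting $n^o\to\infty$ gives the limit $\sigma_o^2/\tilde\sigma_b^2$, because the expression is monotone and the $1$ in the denominator becomes negligible.

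The only point needing care is interpretive: whether $\xi(\phi^o)$ should be treated as integrated over the posterior of $\phi^o$ or as a prior-type random effect. I will state that the proposition models it as an independent Gaussian discrepancy with variance $\sigma_{\mathrm{enc}}^2$ that is marginalized alongside $b$. Independence of $\xi$ from $b$ and from the sampling noise is what makes the variances add; without it, a covariance term $2\operatorname{Cov}(b,\xi)$ would appear. The final sentence of the proposition, that encoder uncertainty is additive to comparative-bias uncertainty in the cap, then follows from $\tilde\sigma_b^2=\sigma_b^2+\sigma_{\mathrm{enc}}^2$.
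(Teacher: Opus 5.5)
Your proposal is correct and follows the paper's proof: integrate out the independent Gaussians $b$ and $\xi(\phi^o)$ to get $\widehat{\bar\tau}^o\mid\bar\tau\sim\Normal(\bar\tau,\sigma_o^2/n^o+\sigma_b^2+\sigma_{\mathrm{enc}}^2)$, add precisions, and equate the OS precision with $\ess^o_{\mathrm{enc}}/\sigma_o^2$ to obtain the formula and its limit. Your framing as Corollary~\ref{thm:scalar} with $\sigma_b^2$ replaced by $\sigma_b^2+\sigma_{\mathrm{enc}}^2$ is the same argument.
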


\paragraph{Calibrated-alignment risk decomposition.}
The information bounds above explain the borrowing cap; they do not by themselves guarantee that a learned representation transfers trial risk well. The final result combines a PAC-Bayes bound for RCT empirical risk with a deterministic IPM-calibration decomposition for the debiased OS surface. Write $\rho$ for a posterior over full model draws $f=(\phi^r,\phi^o,\mu,\tau,b_0,b_\Delta)$ and $\rho_0$ for a data-independent prior over $f$. Let $\eta_f^r(a,z)=\mu_f(z)+a\tau_f(z)$ be the RCT surface under draw $f$, let $\eta_\star^r$ and $\eta_\star^o$ be the true RCT and OS conditional outcome surfaces on the latent space, and define the debiased OS mean under draw $f$ by $\eta_{\mathrm{deb},f}^o(a,z)=\eta_\star^o(a,z)-b_{0,f}(z)-a b_{\Delta,f}(z)$. Let $P_m$ denote the Gaussian outcome distribution indexed by mean $m$, define $\mathcal L_\ell(t,m)=\E_{Y\sim P_m}\ell(t,Y)$, and let $\pi^r(a\mid z)$ be the trial treatment distribution used to define the risk. For a fixed latent state $z$, set $r_f^r(z)=\sum_{a=0}^{1}\pi^r(a\mid z)\mathcal L_\ell\{\eta_f^r(a,z),\eta_\star^r(a,z)\}$ and $\widetilde r_f^o(z)=\sum_{a=0}^{1}\pi^r(a\mid z)\mathcal L_\ell\{\eta_f^r(a,z),\eta_{\mathrm{deb},f}^o(a,z)\}$. Also let $\widehat L^r(f)=(n^r)^{-1}\sum_{i=1}^{n^r}\ell\{\eta_f^r(A_i^r,Z_i^r),Y_i^r\}$. Then
\begin{equation}
  R^r(\rho)=\E_{f\sim\rho}\E_{Z\sim P_Z^r(f)}r_f^r(Z),
  \quad
  \widehat R^r(\rho)=\E_{f\sim\rho}\widehat L^r(f),
  \quad
  \widetilde R^o(\rho)=\E_{f\sim\rho}\E_{Z\sim P_Z^o(f)}\widetilde r_f^o(Z).
  \label{eq:main-risk-definitions}
\end{equation}
Here $R^r(\rho)$ is the trial-population risk, $\widehat R^r(\rho)$ is its empirical RCT analogue, and $\widetilde R^o(\rho)$ is the OS risk after subtracting the modeled baseline and comparative biases. To define the calibration residual, let $d_\ell(u,v)=\sup_t\left|\mathcal L_\ell(t,u)-\mathcal L_\ell(t,v)\right|$, and set
\begin{equation}
  \varepsilon_{\mathrm{cal}}(\rho)=
  \E_{f\sim \rho}\E_{Z\sim P_Z^r(f)}
  \sum_{a=0}^{1}\pi^r(a\mid Z)\,
  d_\ell\{\eta_\star^r(a,Z),\eta_{\mathrm{deb},f}^o(a,Z)\}
  \label{eq:main-calibration-terms}
\end{equation}
The residual $\varepsilon_{\mathrm{cal}}(\rho)$ vanishes when the debiased OS surface matches the RCT surface on the trial latent support. Because it depends on unknown source surfaces, the result below is best read as a target-risk decomposition and design guide rather than as a directly computable finite-sample guarantee.

\begin{theorem}[PAC-Bayes and IPM-calibration target-risk decomposition]
\label{thm:pacbayes}
Assume the loss is bounded in $[0,1]$ and that the IPM regularity condition stated in Appendix~\ref{app:thm-pacbayes} holds for $\rho$; informally, the debiased OS conditional risk must be controlled by the discriminator class $\calG$ up to a constant $L$. Define $\Delta_Z(\rho)=
  \E_{f\sim \rho}\ipm_{\calG}\{P_Z^r(f),P_Z^o(f)\}$. With the quantities in Eqs.~\eqref{eq:main-risk-definitions}--\eqref{eq:main-calibration-terms}, for any data-independent prior $\rho_0$ over $f$, with probability at least $1-\delta$ over the RCT sample, every such posterior $\rho$ satisfies
\begin{equation}
  R^r(\rho)
  \le
  \min\left[
  \widehat R^r(\rho)+
  \sqrt{\dfrac{\KL(\rho\|\rho_0)+\log(2\sqrt{n^r}/\delta)}{2n^r}},
  \,
  \widetilde R^o(\rho)+L\Delta_Z(\rho)+\varepsilon_{\mathrm{cal}}(\rho)
  \right].
  \label{eq:pac_bound_rct}
\end{equation}
\end{theorem}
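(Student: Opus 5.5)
Since the right-hand side of Eq.~\eqref{eq:pac_bound_rct} is a minimum of two quantities, I will prove two separate upper bounds on $R^r(\rho)$ and then take the smaller one.

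\textbf{Step 1: the PAC-Bayes bound.} I will apply the standard bounded-loss PAC-Bayes inequality in the Maurer/Seeger form with the $\log(2\sqrt{n^r}/\delta)$ constant. Two inputs are needed.
\begin{itemize}
\item The RCT pairs $(A_i^r,Z_i^r,Y_i^r)$ must be i.i.d. given a draw $f$. Since $Z_i^r\sim q_{\phi^r_f}(\cdot\mid X_i^r)$ and $X_i^r$ are i.i.d., this holds.
\item The population mean of $\ell\{\eta_f^r(A,Z),Y\}$ must equal $\E_{Z\sim P_Z^r(f)} r_f^r(Z)$. This follows by conditioning on $Z$ and using the definition of $\mathcal L_\ell$ with $Y\mid A,Z\sim P_{\eta_\star^r(A,Z)}$. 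Averaging over $A\sim\pi^r(\cdot\mid Z)$ then recovers $r_f^r$.
\end{itemize}
The Donsker--Varadhan change of measure, combined with Hoeffding's lemma or the bound $\E e^{2n\,\mathrm{kl}}\le 2\sqrt n$, together with Markov's inequality over the data-independent prior $\rho_0$ and Pinsker's inequality, gives the first branch uniformly over all $\rho$ with probability at least $1-\delta$.

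\textbf{Step 2: the alignment and calibration bound.} This part is deterministic. I will decompose
\[
R^r(\rho)=\E_{f\sim\rho}\big[\E_{P_Z^r(f)}(r_f^r-\widetilde r_f^o)+\{\E_{P_Z^r(f)}\widetilde r_f^o-\E_{P_Z^o(f)}\widetilde r_f^o\}+\E_{P_Z^o(f)}\widetilde r_f^o\big].
\]
The three pieces are handled as follows.
\begin{itemize}
\item \emph{Last term.} Its $\rho$-average is exactly $\widetilde R^o(\rho)$.
\item \emph{First term.} Pointwise in $z$ and $a$, I have $|\mathcal L_\ell(\eta_f^r,\eta_\star^r)-\mathcal L_\ell(\eta_f^r,\eta^o_{\mathrm{deb},f})|\le d_\ell\{\eta_\star^r,\eta^o_{\mathrm{deb},f}\}$, taking $t=\eta_f^r(a,z)$ inside the supremum. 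Weighting by $\pi^r$ and averaging over $P_Z^r(f)$ and $\rho$ gives exactly $\varepsilon_{\mathrm{cal}}(\rho)$.
\item \emph{Middle term.} I will invoke the appendix regularity condition, which I take to state that $\widetilde r_f^o/L\in\calG$ for $\rho$-almost every $f$, or that it is dominated by such a function. The difference is then at most $L\,\ipm_{\calG}\{P_Z^r(f),P_Z^o(f)\}$. Averaging over $\rho$ gives $L\Delta_Z(\rho)$.
\end{itemize}

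This branch holds surely, so intersecting it with the probability-$(1-\delta)$ event from Step 1 yields the minimum.

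\textbf{Main obstacle.} The delicate point is the middle term. It requires the regularity condition to hold uniformly over posterior draws, since the function $\widetilde r_f^o$ depends on $f$ through both $\eta_f^r$ and the bias functions. If $\calG$ is not closed under scaling or is not symmetric, I will instead use the absolute value in the IPM definition and require membership of $\widetilde r_f^o/L$ only up to sign.

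The remaining care is measurability, so that Fubini's theorem can be applied to exchange $\E_{f\sim\rho}$ with the latent expectations. This follows from boundedness of the loss in $[0,1]$.
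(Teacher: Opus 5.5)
Your proposal is correct and follows the paper's proof essentially step for step. It has the same two branches: the kl-form McAllester/Maurer bound relaxed via $\operatorname{kl}(u,v)\ge 2(u-v)^2$, and the deterministic Lemma~\ref{lem:calibration-ipm-step} argument (pointwise $d_\ell$ bound at $t=\eta_f^r(a,z)$, then the IPM step under $\widetilde r_f^o/L\in\calG$), combined by taking the minimum. Your check that $\E\,\widehat L^r(f)=\E_{Z\sim P_Z^r(f)}r_f^r(Z)$ also makes explicit an assumption the paper leaves implicit, namely that the trial's assignment law coincides with $\pi^r$.

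There are two minor slips. The moment bound should read $\E e^{n^r\operatorname{kl}}\le 2\sqrt{n^r}$, or equivalently $\E e^{2n^r(\widehat R^r-R^r)^2}\le 2\sqrt{n^r}$ after Pinsker. Also, your fallback that $\widetilde r_f^o/L$ is merely pointwise dominated by an element of $\calG$ would not give the IPM step, so rely on membership exactly as the appendix assumes.
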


The first term inside the minimum is the RCT-only PAC-Bayes certificate; the second is the OS IPM-calibration certificate. The bounded-loss assumption can be replaced by a uniform sub-Gaussian tail condition: for every draw $f$, the centered RCT loss $U_f=\ell\{\eta_f^r(A^r,Z^r),Y^r\}-\E\ell\{\eta_f^r(A^r,Z^r),Y^r\}$ satisfies $\E\{\exp(\lambda U_f)\}\le \exp(\lambda^2\sigma^2/2)$ for all real $\lambda$. This is a tail condition on the loss, not an assumption that the outcome itself is Gaussian; ordinary unbounded squared loss requires clipping, localization, or stronger moment conditions to satisfy it. Appendix~\ref{app:thm-pacbayes} gives the same $\mathcal{O}((n^r)^{-1/2})$ rate with a different constant under this condition.

This decomposition explains why alignment alone is insufficient. Small $\Delta_Z(\rho)$ only establishes that the two latent distributions are close; useful borrowing also requires the debiased OS surface to be calibrated to the trial surface on the shared support, and Theorem~\ref{thm:functional} limits how much treatment-contrast precision can be taken from the OS while maintaining that protection.

\section{Experiments}
\label{sec:experiments}

The experiments evaluate the four claims from Section~\ref{sec:intro}: (i) \emph{bias-limited borrowing} --- OS contrast information saturates at the ceiling predicted by Corollary~\ref{thm:scalar}; (ii) \emph{negative-transfer protection} --- borrowing rarely increases error relative to an RCT-only analysis; (iii) \emph{calibration} --- 90\% credible intervals stay close to nominal coverage; and (iv) \emph{competitive point accuracy}. Three studies address them: a synthetic factorial study in which the true CATE is known by construction (claims i--iv), a semi-synthetic benchmark modeled on the Tennessee STAR class-size trial (claims iii--iv), and a pediatric-obesity external-control application with real trial and EHR data (claims ii--iii). The main text gives each design and its main results; Appendix~\ref{app:exp-details} gives the full specifications: the data-generating equations and generation pseudocode for the synthetic study, the B-CALM estimator and selection procedure exactly as fit here, the procedure that constructs the semi-synthetic benchmark, the real-data cohort description, and the metric formulas.

\vspace{-0.4em}
\subsection{Synthetic factorial study}
\vspace{-0.4em}

\paragraph{Design.} A three-dimensional latent state generates trial covariates $X^r=(W,U)$ and observational covariates $X^o=(W,V)$ through source-specific noisy linear maps. This keeps the true CATE known even though the sources share only the block $W$. The factorial grid crosses three covariate-overlap regimes (\emph{shared-only}, \emph{balanced}, and \emph{mismatch-heavy}, with increasingly source-specific covariate loadings and increasingly nonlinear outcome surfaces), three OS bias regimes (none, baseline-only, comparative), and, on the balanced DGP, a sweep that scales the comparative-bias function by $\{0,0.2,0.5,1.0\}$, using $n^r\in\{250,500\}$, $n^o=5{,}000$, and $50$ replicates per cell. OS treatment assignment is confounded through the observational covariates in every cell. Appendix~\ref{app:dgp-synthetic} gives the generative equations, the assignment model, the full condition grid, and generation pseudocode (Algorithm~\ref{alg:dgp}).

\paragraph{B-CALM configuration.} All synthetic experiments use the closed-form Gaussian instance of the model in Section~\ref{sec:model}: encoders are fixed identity maps, so each source's covariate vector serves as its latent state; $\mu$, $\tau$, $b_0$, and $b_\Delta$ are represented on a prespecified polynomial--trigonometric basis; and the posterior over basis coefficients is Gaussian in closed form. We compute posterior CATE means, variances, and 90\% intervals analytically, without Monte Carlo sampling. No encoder, decoder, or alignment penalty is trained. The comparative-bias scale is selected within each replicate by empirical-Bayes model averaging over the fixed candidate set $\sigma_\Delta\in\{0.01,0.05,0.2,0.5,1.0,\infty\}$ with global fixed prior weights that are asymmetric by design (skeptical on the tight scales, a mild mode at moderate borrowing, and a heavy diffuse tail) and prior mean $m_\Delta=0$; Algorithm~\ref{alg:bcalm} in Appendix~\ref{app:bcalm-implementation} states the fitting and selection procedure. These experiments isolate bias-limited borrowing in a fixed common working representation; they do not evaluate learned latent alignment.

\paragraph{Baselines.} We compare with RCT-only and OS-only fits of the same Gaussian model, pooled RCT+OS (the same model with $b_0=b_\Delta\equiv0$), a causal forest (CF) fit to the pooled data \citep{wager2018estimation}, and the calibrated-borrowing family R-OSCAR, MR-OSCAR, and CALM \citep{asiaee2025roscar,pal2026mroscar,asiaee2026calm}. R-OSCAR calibrates OS outcome predictions built from the shared covariates only; MR-OSCAR first imputes the OS-only covariates for trial units and then calibrates; and CALM replaces explicit imputation with source-specific embeddings before calibration. In this Gaussian harness the shared trial-calibration stage dominates all three, and their results agree to within Monte Carlo noise, so we report them as a single \emph{OSCAR/CALM} entry. Bayesian causal forests (BCF) \citep{hahn2020bayesian} require a full MCMC run per fit, so we evaluate them in the real-data study, where the number of fits is small.

\paragraph{Metrics.} We score each replicate on $600$ fresh covariate draws from the trial distribution, where the true CATE is known: RMSE of the posterior-mean CATE, empirical coverage and mean width of the pointwise 90\% credible intervals, and negative transfer (NT), the fraction of replicates in which a method's RMSE exceeds the RCT-only RMSE on the same replicate. Formulas are in Appendix~\ref{app:metrics}.

\paragraph{Results.}
\begin{table}[t]
\centering
\small
\caption{Synthetic factorial study: metric averages over the full condition grid ($24$ cells $\times$ $50$ replicates; Appendix~\ref{app:dgp-synthetic}). Cov.\ and Width refer to pointwise 90\% credible intervals evaluated on trial-distribution covariates; NT is the negative-transfer rate relative to RCT-only. R-OSCAR, MR-OSCAR, and CALM coincide in this harness and are reported as one OSCAR/CALM row. Per-cell results: Appendix~\ref{app:main-table}, Table~\ref{tab:main}.}
\label{tab:main_summary}
\begin{tabular}{@{}lrrrr@{}}
\toprule
Method   & RMSE  & Cov. & Width & NT \\
\midrule
B-CALM            & 0.206 & 0.939 & 0.739 & 0.023 \\
OSCAR/CALM        & 0.228 & 0.890 & 0.742 & 0.012 \\
RCT-only          & 0.318 & 0.902 & 1.017 & 0.000 \\
pooled            & 0.305 & 0.581 & 0.299 & 0.292 \\
OS-only           & 0.193 & 0.677 & 0.345 & 0.192 \\
CF                & 0.308 & 0.811 & 0.764 & 0.299 \\
\bottomrule
\end{tabular}
\end{table}

Table~\ref{tab:main_summary} summarizes the four metrics over the factorial grid; per-cell results are in Appendix~\ref{app:main-table}. B-CALM is the only borrowing method whose average coverage reaches the nominal $0.90$ (0.939), while also improving accuracy: RMSE 0.206 against 0.318 for RCT-only, with near-zero negative transfer (0.023). The two point-accuracy leaders lose calibration: OS-only (RMSE 0.193) covers at 0.677, and pooled has the narrowest intervals (width 0.299) at 0.581 coverage and a 0.292 negative-transfer rate.

\begin{figure}[t]
\centering
  \includegraphics[width=\linewidth]{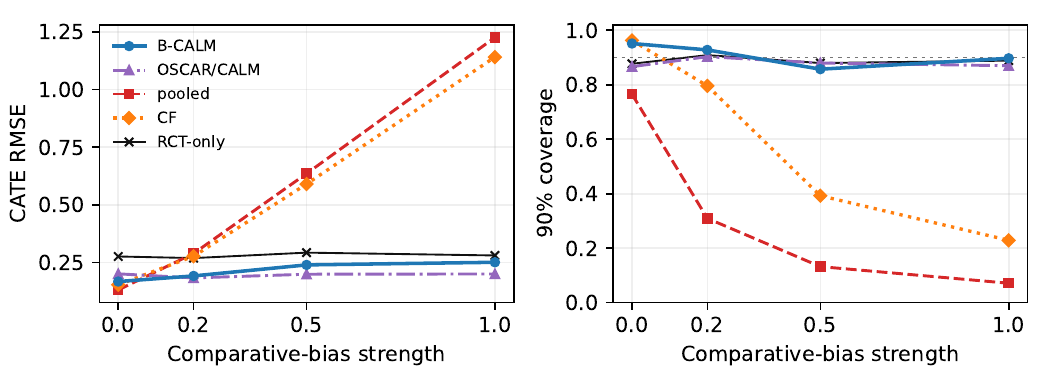}
  \caption{Comparative-bias sweep on the balanced DGP ($n^r=500$, $n^o=5{,}000$; each point averages 50 replicates). Left: CATE RMSE. Right: empirical coverage of pointwise 90\% credible intervals, with nominal coverage marked by the horizontal dashed line. R-OSCAR, MR-OSCAR, and CALM coincide in this harness and are plotted as the single OSCAR/CALM line; pooled and CF nearly overlap in the RMSE panel under strong bias.}
  \label{fig:bias_sweep}
\end{figure}

Figure~\ref{fig:bias_sweep} isolates the comparative-bias axis: the balanced DGP is fixed while the comparative-bias function is scaled up, holding baseline bias and OS confounding constant. The RMSE panel shows the accuracy tradeoff across methods. Pooled and CF are the most accurate when the OS contrast is unbiased, but their errors grow as the bias strengthens (pooled RMSE reaches 1.226 at the strongest setting). OSCAR/CALM is nearly flat across the sweep and beats B-CALM under strong bias (RMSE 0.199 versus 0.239 at strength 0.5): it declines to borrow contrast information, which protects it there but also means it gains nothing when the OS is clean (0.201 versus B-CALM's 0.167 at strength 0). The coverage panel separates B-CALM from the point-accuracy leaders: its coverage stays near nominal at every bias level while pooled falls to 0.07 and CF to 0.23 at the strongest setting. B-CALM's weakest cell is 0.857 coverage at strength 0.5, the transition region where the empirical-Bayes weights still place mass on borrowing scales (the exported posterior weight on the moderate scale $\sigma_\Delta=0.2$ averages 0.91 there, collapsing to 0.03 by strength 1.0); we return to this cost of adaptivity in the Discussion.

\begin{figure}[t]
\centering
  \includegraphics[width=\linewidth]{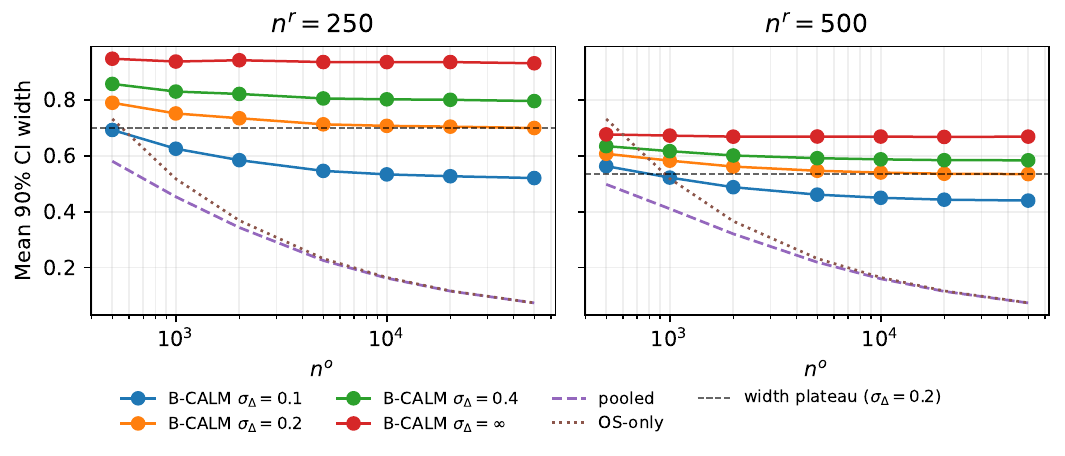}
  \caption{Interval-width saturation: mean 90\% credible-interval width as $n^o$ grows with $n^r$ fixed (log horizontal scale; 50 replicates per point). B-CALM interval widths flatten to a $\sigma_\Delta$-dependent floor (the horizontal dashed reference marks the $\sigma_\Delta=0.2$ plateau), while the pooled and OS-only curves, which nearly overlap, contract toward zero regardless of bias. Effective-sample-size values appear in Appendix~\ref{app:saturation-ess}, Table~\ref{tab:saturation}.}
  \label{fig:saturation}
\end{figure}

\paragraph{Observational sample-size domination.} This experiment fixes $n^r$ and grows $n^o$ from $500$ to $50{,}000$ under comparative bias, fitting B-CALM at fixed $\sigma_\Delta\in\{0.1,0.2,0.4,\infty\}$ with prior mean $m_\Delta=0.2$ (Appendix~\ref{app:reproducibility}). Pooled and OS-only intervals contract as $(n^o)^{-1/2}$ regardless of bias; B-CALM intervals contract until they reach the floor implied by $\sigma_\Delta^2$ and then saturate (Figure~\ref{fig:saturation}). At the largest $n^o$ the empirical effective sample size tracks the theoretical saturation value $\sigma_o^2/\sigma_\Delta^2$ (for example, 105.6 empirical versus 99.8 theoretical at $\sigma_\Delta=0.2$; Appendix~\ref{app:saturation-ess}, Table~\ref{tab:saturation}). The small nonzero empirical ESS at $\sigma_\Delta=\infty$ (about 4) comes from the finite numerical implementation of the diffuse prior and information flowing through the shared baseline surface, not comparative borrowing.

\paragraph{Ablation.} On the balanced DGP at an intermediate bias strength ($0.4$, chosen outside both the sweep grid and the $\sigma_\Delta$ candidate set), removing the comparative-bias function collapses coverage from $0.915$ to $0.324$ while RMSE moves far less (0.195 to 0.259): $b_\Delta$ is load-bearing for calibration, not point accuracy. Forcing $b_0\equiv0$ or replacing the posterior with covariance-scaled point-estimate surrogates also under-covers (Appendix~\ref{app:ablation-table}, Table~\ref{tab:ablation}).

Posterior calibration curves across nominal levels match the 90\% snapshot: B-CALM tracks the diagonal while pooled and CF under-cover (Appendix~\ref{app:calibration}, Figure~\ref{fig:calibration}).

\vspace{-0.4em}
\subsection{Semi-synthetic benchmark modeled on the STAR trial}
\vspace{-0.4em}

The Tennessee Student/Teacher Achievement Ratio (STAR) class-size trial is a standard covariate template for semi-synthetic causal benchmarks \citep{kallus2018removing}: realistic trial covariates and marginals are combined with known, imposed potential-outcome surfaces, so evaluation against the true CATE remains possible. Following the construction of \citet{asiaee2025roscar}, we generate a trial-like sample ($n^r=500$) and an observational sample ($n^o=5{,}000$) over seven STAR-type covariates (baseline score, sex, race, free-lunch status, small-class assignment, teacher experience, and school quality), with source-shifted marginals that mimic RCT-versus-EHR mismatch, confounded OS treatment assignment, and known baseline and comparative bias functions; the full routine is in Appendix~\ref{app:semi-construction}. Over $50$ replicates, B-CALM is statistically indistinguishable from the best RMSE (0.191 versus pooled's 0.194) while raising coverage from pooled's 0.425 to 0.936; MR-OSCAR, the family member run in this study, covers at 0.889 with RMSE 0.228 (Appendix~\ref{app:semi-synthetic}, Table~\ref{tab:semi}). Varying $\sigma_\Delta$ traces posterior CATEs at seven representative covariate profiles that interpolate between pooled ($\sigma_\Delta\to0$) and RCT-only ($\sigma_\Delta\to\infty$) inference (Appendix~\ref{app:semi-synthetic}, Figure~\ref{fig:semi-sensitivity}); this sensitivity curve is the report a B-CALM analysis would provide.

\subsection{Real-data application: external-control augmentation}
\label{sec:gps-real}
\vspace{-0.4em}

\paragraph{Data and identification.}
We evaluate B-CALM on a multi-site pediatric-obesity randomized trial ($n^r=385$) augmented with a cross-sectional EHR cohort ($n^o=15{,}150$) from the same health systems and treated as external controls. The outcome is 12-month weight-for-length z-score; the six shared covariates are site, sex, race/ethnicity, language, insurance, and baseline weight-for-length z-score, with five trial-only covariates retained for sensitivity analyses; Appendix~\ref{app:gps-details} describes the cohort, preprocessing, and model configuration. Because $A_j^o\equiv 0$ on every EHR row, the OS likelihood involves only $\mu(z)+b_0(z)$; the comparative-bias prior $\sigma_\Delta$ is unidentified and held diffuse, and the baseline-commensurability prior $\sigma_0$ on $b_0(z)$ controls borrowing strength.

\begin{table}[t]
\centering
\footnotesize
\setlength{\tabcolsep}{4pt}
\caption{Pediatric-obesity external-control augmentation. Width reduction is relative to RCT-only; Skep./w.\ is the bias-susceptibility ratio (bias risk implied by a skeptical $\sigma_0=0.6$ prior for each method's estimate, divided by that method's nominal width; values above one flag intervals narrower than the bias they may absorb). Full baselines, including BCF and OSCAR/CALM, are in Appendix~\ref{app:gps-real}, Table~\ref{tab:gps-main}.}
\label{tab:gps-main-compact}
\begin{tabular}{lccc}
\toprule
Method & ATE [90\% CI] & Width (Red.) & Skep./w. \\
\midrule
B-CALM   & $-0.184\,[-0.336,-0.031]$ & $0.305\,(9.4\%)$  & $0.34$ \\
RCT-only & $-0.174\,[-0.345,-0.008]$ & $0.337$           & $0.00$ \\
pooled   & $-0.191\,[-0.313,-0.074]$ & $0.239\,(28.9\%)$ & $1.05$ \\
CF       & $-0.162\,[-0.211,-0.112]$ & $0.099\,(70.6\%)$ & $2.91$ \\
\bottomrule
\end{tabular}
\end{table}

\vspace{-0.6em}
\paragraph{Main result.}
At the selected $\sigma_0=0.05$ (chosen from candidate baseline-bias prior scales by a prior-weighted score; Appendix~\ref{app:gps-details}), B-CALM estimates an ATE of $-0.184$ with 90\% credible interval $[-0.336,-0.031]$, a $9.4\%$ width reduction over RCT-only (Table~\ref{tab:gps-main-compact}). Naive borrowing baselines reach larger reductions by implicitly setting $b_0\equiv0$; the full table, including BCF \citep{hahn2020bayesian} and OSCAR-family baselines, is in Appendix~\ref{app:gps-real}, Table~\ref{tab:gps-main}. The $\sigma_0$ sweep in Appendix~\ref{app:gps-real}, Figure~\ref{fig:gps-real}, traces the borrowing pattern: $\sigma_0\to0$ approaches pooled, $\sigma_0\to\infty$ returns to RCT-only, and $\sigma_0=0.05$ sits at the bias-protected interior of the curve.

\vspace{-0.6em}
\paragraph{Bias-susceptibility ratio.}
The skeptical ratio in Table~\ref{tab:gps-main-compact} is the 90\% range of B-CALM-calibrated bias risk under a skeptical $\sigma_0=0.6$ prior (matching the default baseline-bias prior in the synthetic experiments and an order of magnitude above the selected $0.05$), divided by a method's nominal width; values above one mean the interval is narrower than the bias it would absorb. Pooled ($1.05$) and CF ($2.91$) achieve narrower intervals by accepting bias risk above their reported width, while B-CALM remains below one. Leave-one-covariate masking and $b_0$ recovery diagnostics are in Appendix~\ref{app:gps-real}.

\vspace{-0.6em}
\section{Discussion and conclusion}
\label{sec:discussion}
\vspace{-0.6em}
\paragraph{Discussion.}
The function-valued bias-limited information bound (Theorem~\ref{thm:functional}) and its scalar instance (Corollary~\ref{thm:scalar}) give a closed-form prediction that the experiments confirm: observational borrowing produces a $\sigma_\Delta$-dependent uncertainty floor instead of asymptotically zero credible-interval width. The same mechanism explains the one weak cell in the bias sweep: at intermediate comparative bias the empirical-Bayes weights retreat from borrowing gradually, and coverage dips to $0.857$ before recovering; fixing $\sigma_\Delta$ at a conservative value removes this dip at the price of wider intervals, as the sample-size domination experiment shows. In practice, this lets an analyst report a sensitivity analysis over prespecified values of $\sigma_\Delta$, rather than committing to one borrowing weight, and compute the $\ess^o$ implied by each value. The pediatric-obesity application illustrates the same pattern with one-arm OS data: B-CALM's modest $9.4\%$ width reduction at $\sigma_0=0.05$ is what is left after the model explicitly accounts for plausible EHR-vs-trial control bias, while the larger reductions reported by pooled, CF, and BCF rest on the implicit and unverifiable assumption that $b_0\equiv 0$.

\vspace{-0.6em}
\paragraph{Practical interpretation.}
The practical output of B-CALM is not a single borrowed estimate, but a calibrated borrowing analysis: a posterior for the RCT-defined estimand, a sensitivity parameter for OS contrast bias, and an ESS diagnostic describing how much information the OS is allowed to contribute. The reported estimand changes accordingly. Rather than asking whether the OS is ``similar enough'' to pool, the analyst reports how conclusions move as $\sigma_\Delta$ or $\sigma_0$ varies, and whether the resulting interval contraction is commensurate with the bias the model is willing to absorb. B-CALM is most useful when external data are potentially informative but not trustworthy enough to be treated as randomized evidence.

\vspace{-0.6em}
\paragraph{Limitations.}
(i) $b_\Delta$ is not identifiable without prior restrictions; reporting a sensitivity curve over prespecified $\sigma_\Delta$ values makes this visible rather than committing to one value. (ii) Latent alignment is only as good as the shared covariates: with little overlap, B-CALM reverts toward RCT-only — the safe failure mode, but observational data of poor covariate match adds no precision. (iii) Posterior computation is expensive at EHR scale; sparse GPs, SVI, and amortized encoders are practical remedies. Future directions are in Appendix~\ref{app:future-directions}.

{\small
\bibliographystyle{plainnat}
\bibliography{references}

@article{asiaee2025roscar,
  title = {Improving Precision of {RCT}-Based {CATE} Estimation using Data Borrowing with Double Calibration},
  author = {Asiaee, Amir and Di Gravio, Chiara and Beck, Cole and Mei, Yuting and Pal, Samhita and Huling, Jared D.},
  journal = {arXiv preprint arXiv:2306.17478},
  year = {2025},
  eprint = {2306.17478},
  archivePrefix = {arXiv},
  primaryClass = {stat.ME}
}

@article{pal2026mroscar,
  title = {Improving {RCT}-Based {CATE} Estimation Under Covariate Mismatch via Double Calibration},
  author = {Pal, Samhita and Huling, Jared D. and Asiaee, Amir},
  journal = {arXiv preprint arXiv:2603.17066},
  year = {2026},
  eprint = {2603.17066},
  archivePrefix = {arXiv},
  primaryClass = {stat.ME}
}

@article{asiaee2026calm,
  title = {Improving {RCT}-Based Treatment Effect Estimation Under Covariate Mismatch via Calibrated Alignment},
  author = {Asiaee, Amir and Pal, Samhita},
  journal = {arXiv preprint arXiv:2603.19186},
  year = {2026},
  eprint = {2603.19186},
  archivePrefix = {arXiv},
  primaryClass = {cs.LG}
}

@article{asiaee2026sharp,
  title = {Sharp Bounds for Treatment Effect Generalization under Outcome Distribution Shift},
  author = {Asiaee, Amir and Pal, Samhita and Beck, Cole and Huling, Jared D.},
  journal = {arXiv preprint arXiv:2602.09595},
  year = {2026},
  note = {To appear, 5th Conference on Causal Learning and Reasoning (CLeaR)},
  eprint = {2602.09595},
  archivePrefix = {arXiv}
}

@article{asiaee2026ovb,
  title = {Omitted-Variable Sensitivity Analysis for Generalizing Randomized Trials},
  author = {Asiaee, Amir and Pal, Samhita and Huling, Jared D.},
  journal = {arXiv preprint arXiv:2603.27788},
  year = {2026},
  eprint = {2603.27788},
  archivePrefix = {arXiv},
  primaryClass = {stat.ME}
}

@inproceedings{bica2022heterogeneous,
  title = {Transfer Learning on Heterogeneous Feature Spaces for Treatment Effects Estimation},
  author = {Bica, Ioana and van der Schaar, Mihaela},
  booktitle = {Advances in Neural Information Processing Systems},
  volume = {35},
  pages = {37184--37198},
  year = {2022}
}

@article{dimitriou2024causalicm,
  title = {{Causal-ICM}: A Data Fusion Framework for Heterogeneous Treatment Effect Estimation with Multi-Task Gaussian Processes},
  author = {Dimitriou, Evangelos and Fong, Edwin and Magelund Tarp, Jens and Diaz-Ordaz, Karla and Lehmann, Brieuc},
  journal = {arXiv preprint arXiv:2405.20957},
  year = {2024},
  eprint = {2405.20957},
  archivePrefix = {arXiv},
  primaryClass = {stat.ME}
}

@inproceedings{alaa2017bayesian,
  title = {Bayesian Inference of Individualized Treatment Effects using Multi-task Gaussian Processes},
  author = {Alaa, Ahmed M. and van der Schaar, Mihaela},
  booktitle = {Advances in Neural Information Processing Systems},
  volume = {30},
  pages = {3424--3432},
  year = {2017}
}

@article{hahn2020bayesian,
  title = {Bayesian Regression Tree Models for Causal Inference: Regularization, Confounding, and Heterogeneous Effects (with Discussion)},
  author = {Hahn, P. Richard and Murray, Jared S. and Carvalho, Carlos M.},
  journal = {Bayesian Analysis},
  volume = {15},
  number = {3},
  pages = {965--1056},
  year = {2020},
  doi = {10.1214/19-BA1195}
}

@article{ibrahim2000power,
  title = {Power Prior Distributions for Regression Models},
  author = {Chen, Ming-Hui and Ibrahim, Joseph G.},
  journal = {Statistical Science},
  volume = {15},
  number = {1},
  pages = {46--60},
  year = {2000},
  doi = {10.1214/ss/1009212673}
}

@article{hobbs2011hierarchical,
  title = {Hierarchical Commensurate and Power Prior Models for Adaptive Incorporation of Historical Information in Clinical Trials},
  author = {Hobbs, Brian P. and Carlin, Bradley P. and Mandrekar, Sumithra J. and Sargent, Daniel J.},
  journal = {Biometrics},
  volume = {67},
  number = {3},
  pages = {1047--1056},
  year = {2011},
  doi = {10.1111/j.1541-0420.2011.01564.x}
}

@article{hobbs2012commensurate,
  title = {Commensurate Priors for Incorporating Historical Information in Clinical Trials Using General and Generalized Linear Models},
  author = {Hobbs, Brian P. and Sargent, Daniel J. and Carlin, Bradley P.},
  journal = {Bayesian Analysis},
  volume = {7},
  number = {3},
  pages = {639--674},
  year = {2012},
  doi = {10.1214/12-BA722}
}

@article{schmidli2014robust,
  title = {Robust Meta-Analytic-Predictive Priors in Clinical Trials with Historical Control Information},
  author = {Schmidli, Heinz and Gsteiger, Sandro and Roychoudhury, Satrajit and O'Hagan, Anthony and Spiegelhalter, David and Neuenschwander, Beat},
  journal = {Biometrics},
  volume = {70},
  number = {4},
  pages = {1023--1032},
  year = {2014},
  doi = {10.1111/biom.12242}
}

@misc{harrell2023hxcontrol,
  title = {Incorporating Historical Control Data Into an {RCT}},
  author = {Harrell, Frank E.},
  year = {2023},
  howpublished = {Statistical Thinking blog},
  note = {Accessed 2026-04-25}
}

@article{degtiar2023review,
  title = {A Review of Generalizability and Transportability},
  author = {Degtiar, Irina and Rose, Sherri},
  journal = {Annual Review of Statistics and Its Application},
  volume = {10},
  number = {1},
  pages = {501--524},
  year = {2023},
  publisher = {Annual Reviews}
}

@article{colnet2024causal,
  title = {Causal Inference Methods for Combining Randomized Trials and Observational Studies: A Review},
  author = {Colnet, B{\'e}n{\'e}dicte and Mayer, Imke and Chen, Guanhua and Dieng, Awa and Li, Ruohong and Varoquaux, Ga{\"e}l and Vert, Jean-Philippe and Josse, Julie and Yang, Shu},
  journal = {Statistical Science},
  volume = {39},
  number = {1},
  pages = {165--191},
  year = {2024},
  doi = {10.1214/23-STS889}
}

@article{gretton2012kernel,
  title = {A Kernel Two-Sample Test},
  author = {Gretton, Arthur and Borgwardt, Karsten M. and Rasch, Malte J. and Sch\"olkopf, Bernhard and Smola, Alexander},
  journal = {Journal of Machine Learning Research},
  volume = {13},
  pages = {723--773},
  year = {2012}
}

@article{wager2018estimation,
  title = {Estimation and Inference of Heterogeneous Treatment Effects using Random Forests},
  author = {Wager, Stefan and Athey, Susan},
  journal = {Journal of the American Statistical Association},
  volume = {113},
  number = {523},
  pages = {1228--1242},
  year = {2018},
  doi = {10.1080/01621459.2017.1319839}
}

@inproceedings{shalit2017estimating,
  title = {Estimating Individual Treatment Effect: Generalization Bounds and Algorithms},
  author = {Shalit, Uri and Johansson, Fredrik D. and Sontag, David},
  booktitle = {Proceedings of the 34th International Conference on Machine Learning},
  series = {Proceedings of Machine Learning Research},
  volume = {70},
  pages = {3076--3085},
  year = {2017}
}

@inproceedings{johansson2016learning,
  title = {Learning Representations for Counterfactual Inference},
  author = {Johansson, Fredrik D. and Shalit, Uri and Sontag, David},
  booktitle = {Proceedings of the 33rd International Conference on Machine Learning},
  series = {Proceedings of Machine Learning Research},
  volume = {48},
  pages = {3020--3029},
  year = {2016}
}

@inproceedings{kallus2018removing,
  title = {Removing Hidden Confounding by Experimental Grounding},
  author = {Kallus, Nathan and Puli, Aahlad Manas and Shalit, Uri},
  booktitle = {Advances in Neural Information Processing Systems},
  volume = {31},
  year = {2018}
}

@inproceedings{hatt2022combining,
  title = {Combining Observational and Randomized Data for Estimating Heterogeneous Treatment Effects},
  author = {Hatt, Tobias and Berrevoets, Jeroen and Curth, Alicia and Feuerriegel, Stefan and van der Schaar, Mihaela},
  booktitle = {ICML 2022 Workshop on Spurious Correlations, Invariance and Stability},
  year = {2022},
  eprint = {2202.12891},
  archivePrefix = {arXiv},
  primaryClass = {stat.ML}
}

@inproceedings{ilse2023combining,
  title = {Combining Interventional and Observational Data Using Causal Reductions},
  author = {Ilse, Maximilian and Forr{\'e}, Patrick and Welling, Max and Mooij, Joris M.},
  booktitle = {Transactions on Machine Learning Research},
  year = {2023},
  eprint = {2103.04786},
  archivePrefix = {arXiv},
  primaryClass = {stat.ML}
}

@article{kaizer2018bayesian,
  title = {Bayesian Hierarchical Modeling Based on Multisource Exchangeability},
  author = {Kaizer, Alexander M. and Koopmeiners, Joseph S. and Hobbs, Brian P.},
  journal = {Biostatistics},
  volume = {19},
  number = {2},
  pages = {169--184},
  year = {2018},
  doi = {10.1093/biostatistics/kxx031}
}

@article{neuenschwander2010summarizing,
  title = {Summarizing Historical Information on Controls in Clinical Trials},
  author = {Neuenschwander, Beat and Capkun-Niggli, Gorana and Branson, Michael and Spiegelhalter, David J.},
  journal = {Clinical Trials},
  volume = {7},
  number = {1},
  pages = {5--18},
  year = {2010},
  doi = {10.1177/1740774509356002}
}

@article{vanrosmalen2018including,
  title = {Including Historical Data in the Analysis of Clinical Trials: Is It Worth the Effort?},
  author = {van Rosmalen, Joost and Dejardin, David and van Norden, Yvette and L{\"o}wenberg, Bob and Lesaffre, Emmanuel},
  journal = {Statistical Methods in Medical Research},
  volume = {27},
  number = {10},
  pages = {3167--3182},
  year = {2018},
  doi = {10.1177/0962280217694506}
}

@article{kotalik2021framework,
  title = {Dynamic Borrowing in the Presence of Treatment Effect Heterogeneity},
  author = {Kotalik, Ales and Vock, David M. and Donny, Eric C. and Hatsukami, Dorothy K. and Koopmeiners, Joseph S.},
  journal = {Biostatistics},
  volume = {22},
  number = {4},
  pages = {789--804},
  year = {2021},
  doi = {10.1093/biostatistics/kxz066}
}

@article{lewis2019borrowing,
  title = {Borrowing from Historical Control Data in Cancer Drug Development: A Cautionary Tale and Practical Guidelines},
  author = {Lewis, Connor Jo and Sarkar, Somnath and Zhu, Jiawen and Carlin, Bradley P.},
  journal = {Statistics in Biopharmaceutical Research},
  volume = {11},
  number = {1},
  pages = {67--78},
  year = {2019},
  doi = {10.1080/19466315.2018.1497533}
}
}

\appendix

\section{Additional derivations}
\label{app:functional-pac}

This appendix separates the algebraic borrowing bounds from the representation-risk decomposition. Appendix~\ref{app:scalar-finite-basis-proofs} proves the scalar and finite-basis information results used in the main text. Appendices~\ref{app:prop-limits-gp} and~\ref{app:encoder-aware-ess} give GP-limit and encoder-uncertainty extensions. Appendix~\ref{app:thm-pacbayes} contains the PAC-Bayes and IPM-calibration details for Theorem~\ref{thm:pacbayes}.

\subsection{Scalar and finite-basis information proofs}
\label{app:scalar-finite-basis-proofs}

\subsubsection{Proof of Corollary~\ref{thm:scalar} and posterior mean}
\label{app:thm-scalar-proof}

Integrating over $b$ gives $\widehat{\bar\tau}^o\mid\bar\tau\sim\Normal(\bar\tau,\sigma_o^2/n^o+\sigma_b^2)$. The posterior precision for a normal mean with independent normal observations and prior is the sum of prior and likelihood precisions, hence
\begin{equation}
  V_n^{-1}=v_0^{-1}+\frac{n^r}{\sigma_r^2}+\frac{1}{\sigma_o^2/n^o+\sigma_b^2}.
\end{equation}
Equating the OS precision $(\sigma_o^2/n^o+\sigma_b^2)^{-1}$ with $\ess^o/\sigma_o^2$ gives $\ess^o=\sigma_o^2/(\sigma_o^2/n^o+\sigma_b^2)=n^o/(1+n^o\sigma_b^2/\sigma_o^2)$, and $\lim_{n^o\to\infty}\ess^o=\sigma_o^2/\sigma_b^2$.

The corresponding posterior mean is
\begin{equation}
  m_n=V_n\left\{\frac{m_0}{v_0}+\frac{n^r\widehat{\bar\tau}^r}{\sigma_r^2}+\frac{\widehat{\bar\tau}^o}{\sigma_o^2/n^o+\sigma_b^2}\right\},
\end{equation}
which makes the borrowing weight explicit: the observational point estimate can be extremely precise for $\bar\tau+b$, but its weight for $\bar\tau$ is capped by $1/\sigma_b^2$.

\subsubsection{Proof of Theorem~\ref{thm:functional}}

The marginal covariance of the OS contrast vector is $\Omega=\sigma^2I+\Phi^o\Sigma_b(\Phi^o)^\top$. The Fisher information for $\beta_\tau$ is $I^o=(\Phi^o)^\top\Omega^{-1}\Phi^o$. If $\Phi^o$ has full column rank, the GLS information identity yields
\begin{equation}
  (\Phi^o)^\top(\sigma^2I+\Phi^o\Sigma_b(\Phi^o)^\top)^{-1}\Phi^o
  =\left\{\Sigma_b+\sigma^2((\Phi^o)^\top\Phi^o)^{-1}\right\}^{-1}.
\end{equation}
Since $\sigma^2((\Phi^o)^\top\Phi^o)^{-1}$ is positive semidefinite, $\Sigma_b^{-1}-I^o$ is also positive semidefinite. If the empirical feature covariance grows so that $((\Phi^o)^\top\Phi^o)^{-1}\to0$, then the information converges to $\Sigma_b^{-1}$.

\subsection{Gaussian-process limit in Proposition~\ref{prop:limits}}
\label{app:prop-limits-gp}

We state the Gaussian-process version in a form that separates two limits. The first is the flat comparative-bias limit, in which the OS contrast is absorbed by an increasingly diffuse prior on $b_\Delta$. The second is the proper GP discretization limit, in which a finite mesh approximation converges to the joint GP model with fixed $\sigma_\Delta^2<\infty$.

\paragraph{Proof sketch.}
For fixed $z$, the OS treated-control contrast is a noisy observation of $\tau(z)+b_\Delta(z)$. When $b_\Delta(z)\equiv 0$, this is a noisy observation of $\tau(z)$. When $b_\Delta(z)$ has an improper flat prior independent of $\tau(z)$, the transformation $(\tau,b_\Delta)\mapsto(\tau,c=\tau+b_\Delta)$ yields an OS likelihood depending on $c$ but not on $\tau$, so integrating over $c$ leaves the marginal posterior for $\tau$ proportional to the RCT likelihood times the prior, up to terms involving shared baseline parameters. The GP version below requires a uniform flat-precision condition on the OS contrast directions: if $C_m$ is the covariance induced by the mesh approximation to $b_\Delta$, then $\|(\sigma_{\Delta,m}^2 C_m)^{-1}\|_{\mathrm{op}}\to 0$ yields $L^1$ convergence of the OS marginal-likelihood ratio to one under the RCT-only posterior, hence total-variation convergence. Pointwise variance inflation is not sufficient when the mesh dimension grows, because high-frequency directions can retain nonvanishing precision (Remark~\ref{rem:pointwise-flatness-counterexample}).

\begin{proposition}[Flat GP comparative-bias limit]
\label{prop:gp-flat-limit}
Let $\calZ_0\subset\R^d$ be compact. Conditional on fixed encoders and on baseline terms, suppose the Gaussian contrast likelihoods can be written as
\begin{align}
  Y^r &= H^r\tau+\varepsilon^r, &
  \varepsilon^r&\sim\Normal(0,\Omega^r),\\
  Y^o &= H^o\tau+B_m b_m+\varepsilon^o, &
  \varepsilon^o&\sim\Normal(0,\Omega^o),
\end{align}
where $\Omega^r$ and $\Omega^o$ are positive definite, $H^r$ and $H^o$ are finite collections of bounded linear evaluation or quadrature functionals on $C(\calZ_0)$, and $b_m=(b_\Delta(z_1),\ldots,b_\Delta(z_m))^\top$ is evaluated on a mesh $Z_m=\{z_1,\ldots,z_m\}\subset\calZ_0$. Assume $b_m\sim\Normal(m_m,\sigma_{\Delta,m}^2K_b^{(m)})$, where $K_b^{(m)}=k_b(Z_m,Z_m)$ and $k_b$ is continuous, bounded, and strictly positive definite on $\calZ_0$. Let
\begin{equation}
  C_m=B_mK_b^{(m)}B_m^\top .
\end{equation}
Assume that $C_m$ is positive definite on the OS contrast subspace and that the flat-precision condition
\begin{equation}
  \left\|(\sigma_{\Delta,m}^2C_m)^{-1}\right\|_{\mathrm{op}}\longrightarrow0
  \label{eq:flat-precision-condition}
\end{equation}
holds. If the prior $\Pi_\tau$ for $\tau$ is a proper Gaussian measure on $C(\calZ_0)$, independent of $b_m$, then the marginal posterior $\Pi_m(\tau\mid Y^r,Y^o)$ converges in total variation to the RCT-only posterior
\begin{equation}
  \Pi^r(d\tau\mid Y^r)
  \propto
  \varphi_{\Omega^r}\{Y^r-H^r\tau\}\,\Pi_\tau(d\tau).
\end{equation}
\end{proposition}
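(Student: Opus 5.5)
The approach is to marginalize $b_m$ analytically, write the OS likelihood as a Gaussian in $Y^o - H^o\tau$, and show that its ratio to a $\tau$-free normalizer tends to one in $L^1(\Pi^r)$. Scheffé-type arguments then give total-variation convergence of the normalized posteriors.

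\textbf{Marginalization.} Conditional on $\tau$, integrating $b_m\sim\Normal(m_m,\sigma_{\Delta,m}^2K_b^{(m)})$ gives
\[
Y^o\mid\tau\sim\Normal\bigl(H^o\tau+B_m m_m,\;\Omega^o+S_m\bigr),
\qquad S_m=\sigma_{\Delta,m}^2C_m .
\]
Write $r(\tau)=Y^o-B_mm_m-H^o\tau$ and $\Sigma_m=\Omega^o+S_m$. The OS likelihood is then
\[
\ell_m(\tau)=(2\pi)^{-n/2}\det(\Sigma_m)^{-1/2}\exp\{-\tfrac12 r(\tau)^\top\Sigma_m^{-1}r(\tau)\}.
\]
The determinant factor does not depend on $\tau$ and cancels on normalization. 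So I set $g_m(\tau)=\exp\{-\tfrac12 r(\tau)^\top\Sigma_m^{-1}r(\tau)\}\in(0,1]$. The marginal posterior for $\tau$ is
\[
\Pi_m(d\tau)=\frac{g_m(\tau)}{\int g_m\,d\Pi^r}\,\Pi^r(d\tau\mid Y^r),
\]
which uses the independence of $b_m$ and $\Pi_\tau$. The data $Y^o$ and the mean $m_m$ are fixed, or at least $B_mm_m$ is assumed bounded.

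\textbf{Pointwise convergence.} Since $\Omega^o\succ0$, we have $\Sigma_m\succeq S_m$, hence $\Sigma_m^{-1}\preceq S_m^{-1}$. This is where positive definiteness of $C_m$ on the OS contrast subspace is used; if $C_m$ is only positive definite on a subspace, I would restrict to it and treat the orthogonal complement as carrying no $b$ contribution. The flat-precision condition \eqref{eq:flat-precision-condition} then gives
\[
\|\Sigma_m^{-1}\|_{\mathrm{op}}\le\|S_m^{-1}\|_{\mathrm{op}}\to0 .
\]
Consequently $0\le 1-g_m(\tau)\le\tfrac12\|S_m^{-1}\|_{\mathrm{op}}\|r(\tau)\|^2$. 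This step is the heart of the argument, and it is exactly why pointwise variance inflation would not suffice: a uniform operator-norm bound is needed to control all directions at once.

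\textbf{Integrability and normalization.} Because $H^o$ is a finite collection of bounded linear functionals on $C(\calZ_0)$ and $\Pi^r$ is a Gaussian measure (a proper Gaussian prior times a Gaussian likelihood), Fernique's theorem gives $\int\|r(\tau)\|^2\,d\Pi^r<\infty$. Hence $\int|1-g_m|\,d\Pi^r\to0$, and in particular $Z_m:=\int g_m\,d\Pi^r\to1$ (and $Z_m>0$). The total-variation distance is then
\[
\|\Pi_m-\Pi^r\|_{\mathrm{TV}}=\tfrac12\int\Bigl|\frac{g_m}{Z_m}-1\Bigr|\,d\Pi^r\le\tfrac12\Bigl(\frac{\int|g_m-1|\,d\Pi^r}{Z_m}+\frac{|1-Z_m|}{Z_m}\Bigr)\to0 .
\]

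The main obstacle I expect is bookkeeping rather than analysis. One point is making the posterior well defined on $C(\calZ_0)$, which I will do through the density $\Pi^r\ll\Pi_\tau$ with respect to the Gaussian measure. The other is handling the case where $C_m$ is degenerate off the OS contrast subspace. Dominated convergence with the dominating bound $g_m\le1$ could replace Fernique's theorem if only $L^1$ convergence of $g_m$ is needed, since pointwise convergence $g_m\to1$ already follows from the display above.
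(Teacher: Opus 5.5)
Your proof is correct and is essentially the paper's: marginalize $b_m$, deduce $\|(\Omega^o+\sigma_{\Delta,m}^2C_m)^{-1}\|_{\mathrm{op}}\to0$ from $\Omega^o\succ0$ and the flat-precision condition, show the $\tau$-dependent OS likelihood factor tends to one in $L^1\{\Pi^r(\cdot\mid Y^r)\}$, and normalize to get total-variation convergence. The one technical difference is your normalizer: dividing by the $\tau$-free determinant factor makes $g_m\le1$, so domination is free, whereas the paper divides by $L_m^o(0)$ and uses Gaussian exponential moments, which in exchange tolerates a growing $B_mm_m$ as long as $\|\Sigma_m^{-1}\|_{\mathrm{op}}\|Y^o-B_mm_m\|\to0$ (you instead assume $B_mm_m$ bounded). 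One caution: your fallback for a degenerate $C_m$ (restricting to the subspace where it is positive definite and treating the complement as bias-free) would not give the stated limit, because OS directions without bias inflation inform $H^o\tau$ directly with fixed noise; like the paper, you really need $C_m$ invertible on the whole OS observation space.
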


\begin{proof}
After integrating $b_m$, the OS marginal likelihood for $\tau$ is
\begin{equation}
  L_m^o(\tau)=
  \varphi_{\Omega^o+\sigma_{\Delta,m}^2C_m}
  \{Y^o-H^o\tau-B_mm_m\}.
\end{equation}
Let $V_m=\Omega^o+\sigma_{\Delta,m}^2C_m$ and define the likelihood ratio
\begin{equation}
  r_m(\tau)=\frac{L_m^o(\tau)}{L_m^o(0)} .
\end{equation}
Because $\Omega^o$ is positive definite and Eq.~\eqref{eq:flat-precision-condition} holds, $\|V_m^{-1}\|_{\mathrm{op}}\to0$. For every fixed $\tau$ for which $H^o\tau$ is finite,
\begin{equation}
  \log r_m(\tau)
  =
  (H^o\tau)^\top V_m^{-1}(Y^o-B_mm_m)
  -\frac{1}{2}(H^o\tau)^\top V_m^{-1}(H^o\tau)
  \longrightarrow0 .
\end{equation}
The vector $H^o\tau$ is Gaussian under the proper GP prior $\Pi_\tau$, since $H^o$ contains finitely many bounded linear functionals. Gaussian exponential moments imply that, for all large $m$, $r_m(\tau)$ is dominated by an integrable function under the RCT-only posterior $\Pi^r(\cdot\mid Y^r)$. Hence $r_m\to1$ in $L^1\{\Pi^r(\cdot\mid Y^r)\}$ by dominated convergence.

The marginal posterior with the OS likelihood included has density
\begin{equation}
  \frac{d\Pi_m}{d\Pi^r}(\tau)
  =
  \frac{r_m(\tau)}
       {\int r_m(u)\,\Pi^r(du\mid Y^r)} .
\end{equation}
Since $r_m\to1$ in $L^1(\Pi^r)$, the denominator converges to one and
\begin{equation}
  \left\|\Pi_m-\Pi^r\right\|_{\mathrm{TV}}
  =
  \frac{1}{2}
  \int
  \left|
    \frac{r_m(\tau)}
       {\int r_m(u)\,\Pi^r(du\mid Y^r)}
    -1
  \right|
  \Pi^r(d\tau\mid Y^r)
  \longrightarrow0 .
\end{equation}
The same argument applies after integrating proper Gaussian nuisance parameters, such as $\mu$ and $b_0$, because Fubini's theorem applies to the joint Gaussian prior and the Gaussian likelihood is nonnegative and finite.
\end{proof}

\begin{remark}[Kernel and encoder conditions]
\label{rem:kernel-encoder-conditions}
The preceding proof uses only finite-dimensional covariance matrices, but the following standard conditions are sufficient for the mesh construction to be a discretization of a GP on the latent support. The support $\calZ_0$ is compact; the posterior encoder distributions for $Z^r$ and $Z^o$ are supported on $\calZ_0$ and have densities bounded above with respect to Lebesgue measure; the mesh fill distance tends to zero; and $k_b$ is continuous, bounded, and admits a Mercer expansion
\begin{equation}
  k_b(z,z')=\sum_{j=1}^{\infty}\lambda_je_j(z)e_j(z'),
\end{equation}
with $\lambda_j>0$, $\sum_j\lambda_j<\infty$, and uniformly bounded continuous eigenfunctions. These assumptions make the GP a tight Borel probability measure on $C(\calZ_0)$ when the sample paths are continuous, and they give convergence of finite collections of evaluation or stable quadrature functionals. The density bound on the encoder distributions is convenient for replacing empirical mesh sums by integrals over $P_Z^r$ and $P_Z^o$; it can be weakened to weak convergence of those quadrature rules against bounded continuous functions.
\end{remark}

\begin{remark}[Why pointwise flatness is not enough]
\label{rem:pointwise-flatness-counterexample}
Condition~\eqref{eq:flat-precision-condition} is a uniform flat-precision condition on the OS contrast directions. Pointwise variance inflation does not suffice when the mesh dimension grows. To see this, take a diagonal Mercer representation with eigenvalues $\lambda_j=j^{-2}$ and let the $m$th discretization retain the first $m$ basis coefficients. Set $\sigma_{\Delta,m}^2=m$. For each fixed coordinate $j$, $\sigma_{\Delta,m}^2\lambda_j\to\infty$, so the prior is pointwise diffuse on every fixed finite coordinate. Along the new coordinate $j=m$, however, the variance is $\sigma_{\Delta,m}^2\lambda_m=1/m\to0$. An OS contrast observation in that coordinate has marginal variance $\sigma_o^2+1/m$, not an exploding variance, and therefore supplies nonvanishing information about the corresponding coefficient of $\tau$. Thus the posterior cannot converge uniformly to the RCT-only posterior over the growing cylinders. Requiring $\sigma_{\Delta,m}^2\lambda_{\min}(C_m)\to\infty$, or equivalently Eq.~\eqref{eq:flat-precision-condition}, rules out this high-frequency escape.
\end{remark}

\begin{proposition}[Proper GP mesh limit]
\label{prop:gp-proper-limit}
Keep the setup of Proposition~\ref{prop:gp-flat-limit}, but fix $0<\sigma_\Delta^2<\infty$. Let $b_\Delta\sim\mathcal{GP}(m_\Delta,\sigma_\Delta^2k_b)$ on $C(\calZ_0)$, and suppose the mesh approximation satisfies
\begin{equation}
  B_mm_m\to H_bm_\Delta,
  \qquad
  B_mK_b^{(m)}B_m^\top\to H_bK_bH_b^\top
  \label{eq:mesh-covariance-convergence}
\end{equation}
in Euclidean norm for the finite OS functionals $H_b$. Then the marginal posterior for $\tau$ under the mesh model converges in total variation to the posterior under the joint GP model in which
\begin{equation}
  Y^o\mid\tau
  \sim
  \Normal\left\{
    H^o\tau+H_bm_\Delta,\,
    \Omega^o+\sigma_\Delta^2H_bK_bH_b^\top
  \right\}.
\end{equation}
Consequently, with a proper comparative-bias GP prior, the mesh limit is not the RCT-only posterior. It is the bias-limited posterior that treats the OS contrast as information about $\tau+b_\Delta$ and caps its information about $\tau$ through the covariance $\sigma_\Delta^2k_b$.
\end{proposition}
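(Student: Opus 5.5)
I will mirror the proof of Proposition~\ref{prop:gp-flat-limit}, replacing the target ``likelihood ratio tends to one'' with ``likelihood ratio converges to the limiting GP likelihood ratio.''

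First, integrate $b_m$ out of the mesh model to get the OS marginal likelihood
\[
L_m^o(\tau)=\varphi_{V_m}\{Y^o-H^o\tau-B_mm_m\},
\qquad
V_m=\Omega^o+\sigma_\Delta^2B_mK_b^{(m)}B_m^\top .
\]
The limiting model has
\[
L^o(\tau)=\varphi_{V}\{Y^o-H^o\tau-H_bm_\Delta\},
\qquad
V=\Omega^o+\sigma_\Delta^2H_bK_bH_b^\top .
\]
By Eq.~\eqref{eq:mesh-covariance-convergence}, $V_m\to V$ and $B_mm_m\to H_bm_\Delta$ in Euclidean norm. Since $\Omega^o$ is positive definite, $\lambda_{\min}(V_m)\ge\lambda_{\min}(\Omega^o)>0$ uniformly in $m$. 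Hence $V_m^{-1}\to V^{-1}$ and $\det V_m\to\det V$, with $\|V_m^{-1}\|_{\mathrm{op}}$ bounded uniformly by $\lambda_{\min}(\Omega^o)^{-1}$.

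Second, the pointwise limit. For every $\tau$ with $H^o\tau$ finite, $L_m^o(\tau)\to L^o(\tau)$, because the Gaussian density is jointly continuous in its mean and in a positive-definite covariance. I will work with the posteriors relative to the RCT-only posterior $\Pi^r(\cdot\mid Y^r)$, which is common to both models. Their densities are
\[
\frac{d\Pi_m}{d\Pi^r}=\frac{L_m^o}{\int L_m^o\,d\Pi^r},
\qquad
\frac{d\Pi_\infty}{d\Pi^r}=\frac{L^o}{\int L^o\,d\Pi^r}.
\]

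Third, domination, which is simpler than in the flat case. Each $L_m^o(\tau)$ is bounded above by $(2\pi)^{-k/2}(\det V_m)^{-1/2}\le(2\pi)^{-k/2}\det(\Omega^o)^{-1/2}$, a constant independent of $m$ and $\tau$, where $k$ is the dimension of $Y^o$. Since $\Pi^r$ is a probability measure, dominated convergence gives $L_m^o\to L^o$ in $L^1(\Pi^r)$. The normalizers then converge, $\int L_m^o\,d\Pi^r\to\int L^o\,d\Pi^r$.

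The limiting normalizer must be strictly positive, and this is the one point needing care. It holds because $L^o>0$ everywhere, so the integral of a strictly positive function against a probability measure is positive. Scheffé's lemma (or the same $L^1$ argument as in Proposition~\ref{prop:gp-flat-limit}) then yields $\|\Pi_m-\Pi_\infty\|_{\mathrm{TV}}\to0$.

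Fourth, nuisance parameters. Proper Gaussian nuisances $\mu$ and $b_0$ are handled as before: integrating them against their joint Gaussian prior via Fubini keeps the likelihood Gaussian with fixed positive-definite additive covariance, so the same bounds apply.

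The remaining thing to justify is that the limit is the stated joint GP model. Under $b_\Delta\sim\mathcal{GP}(m_\Delta,\sigma_\Delta^2k_b)$, the vector $H_bb_\Delta$ is Gaussian with mean $H_bm_\Delta$ and covariance $\sigma_\Delta^2H_bK_bH_b^\top$, because $H_b$ consists of finitely many bounded linear functionals. This identifies $L^o$ as the GP marginal likelihood.

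Finally, the ``consequently'' clause follows from the structure of the limit. $V-\Omega^o$ is fixed and finite, so $L^o$ genuinely depends on $\tau$ and the limit is not the RCT-only posterior. The information $(H^o)^\top V^{-1}H^o$ is then capped as in Theorem~\ref{thm:functional}.
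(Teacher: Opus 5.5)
Your proposal is correct and follows the paper's proof essentially step for step. You integrate out $b_m$, get pointwise convergence of the Gaussian marginal likelihood from Eq.~\eqref{eq:mesh-covariance-convergence} and continuity on positive-definite covariances, upgrade this to $L^1\{\Pi^r(\cdot\mid Y^r)\}$ convergence, and normalize against the RCT-only posterior to get total variation. The only difference is the domination step: the paper reuses the Gaussian exponential-moment domination from the flat case, whereas your uniform bound $L_m^o\le(2\pi)^{-k/2}\det(\Omega^o)^{-1/2}$ from $V_m\succeq\Omega^o$ is simpler and suffices, since here you never need the unbounded ratio $L_m^o(\tau)/L_m^o(0)$.
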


\begin{proof}
Let $L_m^{o,\mathrm{pr}}(\tau)$ be the proper-prior OS marginal likelihood obtained by integrating $b_m$. By Eq.~\eqref{eq:mesh-covariance-convergence}, the mean vector and covariance matrix in $L_m^{o,\mathrm{pr}}(\tau)$ converge to the corresponding mean vector and covariance matrix of the exact GP marginal likelihood $(L^o)^{\mathrm{GP}}(\tau)$ for every fixed $\tau$. The limiting covariance is positive definite because $\Omega^o$ is positive definite. The Gaussian density is continuous in its mean and covariance on positive-definite covariance matrices, and the same Gaussian exponential-moment domination used in Proposition~\ref{prop:gp-flat-limit} yields
\begin{equation}
  L_m^{o,\mathrm{pr}}(\tau)\to (L^o)^{\mathrm{GP}}(\tau)
  \quad\text{in }L^1\{\Pi^r(\cdot\mid Y^r)\}.
\end{equation}
Writing each posterior density with respect to $\Pi^r(\cdot\mid Y^r)$ and normalizing gives total-variation convergence. The displayed exact-GP likelihood is precisely the likelihood obtained by integrating the proper GP $b_\Delta$, so all OS contrast information enters through the inflated covariance $\Omega^o+\sigma_\Delta^2H_bK_bH_b^\top$.
\end{proof}

\begin{remark}[Lifting over encoder uncertainty]
\label{rem:lifting-over-encoder-uncertainty}
The proposition statements bound the conditional posterior given a fixed encoder draw $\phi=(\phi^r,\phi^o)$. Let $\Pi_m^\phi$ and $\Pi^{\phi,r}$ denote the corresponding conditional posteriors for $\tau$, and let $q(d\phi)$ be the encoder posterior. If $q(d\phi)$ is independent of the comparative-bias prior structure used in the propositions, and $\|\Pi_m^\phi-\Pi^{\phi,r}\|_{\mathrm{TV}}\to0$ holds uniformly in $\phi$ or with a $q$-integrable dominating function, then Fubini-Tonelli gives convergence of the marginal posterior over $(\tau,\phi)$ to the RCT-only marginal posterior over $(\tau,\phi)$:
\begin{equation}
  \left\|
    \Pi_m^\phi(d\tau)q(d\phi)
    -
    \Pi^{\phi,r}(d\tau)q(d\phi)
  \right\|_{\mathrm{TV}}
  \le
  \int
  \|\Pi_m^\phi-\Pi^{\phi,r}\|_{\mathrm{TV}}
  q(d\phi)
  \longrightarrow0 .
\end{equation}
If uniformity fails but pointwise convergence and a $q$-integrable bound on $\|\Pi_m^\phi-\Pi^{\phi,r}\|_{\mathrm{TV}}$ hold, dominated convergence gives the same total-variation statement.

One concrete sufficient condition for the dominating bound used in the conditional Gaussian likelihood-ratio proof is that $q(d\phi)$ places mass on encoders with uniformly bounded operator norms $\|H^{\phi,r}\|$ and $\|H^{\phi,o}\|$, uniformly bounded image $\|H^{\phi,o}\tau\|$ over the support of $\Pi_\tau$, and covariance eigenvalue bounds in the flat or proper mesh argument that are uniform in $\phi$.
\end{remark}

\subsection{Proof of Proposition~\ref{prop:encoder-aware-ess}}
\label{app:encoder-aware-ess}

\begin{proof}
Integrating over the independent Gaussian variables $b$ and $\xi(\phi^o)$ gives
\begin{equation}
  \widehat{\bar\tau}^o\mid\bar\tau
  \sim
  \Normal\left(
    \bar\tau,\,
    \frac{\sigma_o^2}{n^o}+\sigma_b^2+\sigma_{\mathrm{enc}}^2
  \right).
\end{equation}
The RCT contrast, the OS contrast, and the normal prior for $\bar\tau$ are conditionally independent normal contributions to the same mean parameter, so posterior precision is prior precision plus likelihood precisions:
\begin{equation}
  V_n^{-1}
  =
  v_0^{-1}
  +\frac{n^r}{\sigma_r^2}
  +\frac{1}{\sigma_o^2/n^o+\sigma_b^2+\sigma_{\mathrm{enc}}^2}.
\end{equation}
Equating the OS precision with ${\ess}^{o}_{\mathrm{enc}}/\sigma_o^2$ gives the displayed effective sample size, and taking $n^o\to\infty$ gives the limit.
\end{proof}

\begin{remark}[Estimating the encoder-aware ESS]
\label{rem:estimating-encoder-aware-ess}
In a B-CALM fit, $\widehat\sigma_{\mathrm{enc}}^2$ can be estimated as the variance of $\widehat{\bar\tau}^o$ contributed by posterior draws of $\phi^o$, with the observed data, bias draw, and outcome-surface draws held fixed. Plugging this estimate, together with estimates of $\sigma_o^2$ and $\sigma_b^2$, into Proposition~\ref{prop:encoder-aware-ess} gives a single empirical diagnostic $\widehat{\ess}^{o}_{\mathrm{enc}}$ for borrowing health in the scalar linear-Gaussian setting.
\end{remark}

\subsection{PAC-Bayes and IPM-calibration details}
\label{app:thm-pacbayes}

\subsubsection{Definitions and calibration residual}

We first make the calibration residual explicit, using the notation of Eqs.~\eqref{eq:main-risk-definitions}--\eqref{eq:main-calibration-terms}. For a full B-CALM draw $f=(\phi^r,\phi^o,\mu,\tau,b_0,b_\Delta)$, let $P_Z^r(f)$ and $P_Z^o(f)$ be the distributions of the latent variables induced by the trial and observational covariate distributions through $q_{\phi^r}$ and $q_{\phi^o}$. Write $\eta_f^r(a,z)=\mu_f(z)+a\tau_f(z)$ and $\eta_{\mathrm{deb},f}^o(a,z)=\eta_\star^o(a,z)-b_{0,f}(z)-a b_{\Delta,f}(z)$. Let $\pi^r(a\mid z)$ be the target trial treatment distribution used to define the risk, for instance the randomized trial distribution or the uniform interventional distribution over $a\in\{0,1\}$.

Let $\eta_\star^r(a,z)$ and $\eta_\star^o(a,z)$ denote the true source-specific conditional outcome-mean surfaces. Write $\mathcal L_\ell(t,m)=\E_{Y\sim P_m}\ell(t,Y)$. For a loss $\ell(t,y)$ and two outcome-mean parameters $u$ and $v$, define the loss discrepancy
\begin{equation}
  d_\ell(u,v)
  =
  \sup_{t}
  \left|
    \mathcal L_\ell(t,u)
    -
    \mathcal L_\ell(t,v)
  \right|,
  \label{eq:loss-discrepancy}
\end{equation}
where $P_m$ is the conditional outcome distribution indexed by $m$. In Gaussian squared-error models with clipped predictions or in Bernoulli log-loss models with probabilities bounded away from zero and one, $d_\ell(u,v)\le C_\ell |u-v|$ for a finite constant $C_\ell$.

The calibration residual is
\begin{equation}
  \varepsilon_{\mathrm{cal}}(\rho)
  =
  \E_{f\sim \rho}
  \E_{Z\sim P_Z^r(f)}
  \sum_{a=0}^{1}
  \pi^r(a\mid Z)\,
  d_\ell\!\left\{
    \eta_\star^r(a,Z),
    \eta_{\mathrm{deb},f}^o(a,Z)
  \right\}.
  \label{eq:epsilon-cal-definition}
\end{equation}
It is nonnegative by construction. It is zero under the noiseless-calibration ideal
\begin{equation}
  \eta_\star^r(a,z)
  =
  \eta_{\mathrm{deb},f}^o(a,z)
  \quad
  \text{for }a\in\{0,1\}
\end{equation}
for $\rho$-almost every $f$ and $P_Z^r(f)$-almost every $z$.

\subsubsection{Deterministic calibration and alignment step}

\begin{lemma}[Calibration and IPM step]
\label{lem:calibration-ipm-step}
For each $f$, define the debiased OS conditional risk
\begin{equation}
  \widetilde r_f^o(z)
  =
  \sum_{a=0}^{1}
  \pi^r(a\mid z)\,
  \mathcal L_\ell\{\eta_f^r(a,z),\eta_{\mathrm{deb},f}^o(a,z)\}.
\end{equation}
Assume that $z\mapsto \widetilde r_f^o(z)/L$ belongs to the discriminator class $\calG$ for $\rho$-almost every $f$. Then
\begin{equation}
  R^r(\rho)
  \le
  \widetilde R^o(\rho)
  +L\Delta_Z(\rho)
  +\varepsilon_{\mathrm{cal}}(\rho),
  \label{eq:ipm-calibration-bound}
\end{equation}
where
\begin{align}
  \widetilde R^o(\rho)
  &=
  \E_{f\sim \rho}
  \E_{Z\sim P_Z^o(f)}
  \widetilde r_f^o(Z),\\
  \Delta_Z(\rho)
  &=
  \E_{f\sim \rho}
  \ipm_{\calG}\{P_Z^r(f),P_Z^o(f)\}.
\end{align}
\end{lemma}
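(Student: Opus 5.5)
The plan is to prove Eq.~\eqref{eq:ipm-calibration-bound} pointwise in $f$ and then average over $f\sim\rho$, using linearity of expectation. Fix a draw $f$. I will interpolate between the trial risk $\E_{Z\sim P_Z^r(f)}r_f^r(Z)$ and the debiased OS risk $\E_{Z\sim P_Z^o(f)}\widetilde r_f^o(Z)$ through the intermediate quantity $\E_{Z\sim P_Z^r(f)}\widetilde r_f^o(Z)$. This is the debiased OS conditional risk integrated against the \emph{trial} latent distribution. The telescoping identity
\begin{equation*}
  \E_{P_Z^r(f)} r_f^r
  =
  \E_{P_Z^o(f)}\widetilde r_f^o
  +\Bigl\{\E_{P_Z^r(f)}\widetilde r_f^o-\E_{P_Z^o(f)}\widetilde r_f^o\Bigr\}
  +\E_{P_Z^r(f)}\bigl(r_f^r-\widetilde r_f^o\bigr)
\end{equation*}
splits the gap into an alignment term, which changes the distribution with the function fixed, and a calibration term, which changes the function with the distribution fixed.

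For the alignment term, I will use the regularity assumption that $g_f:=\widetilde r_f^o/L\in\calG$ for $\rho$-a.e.\ $f$. By the definition of the IPM in Eq.~\eqref{eq:ipm-definition},
\begin{equation*}
  \E_{P_Z^r(f)}\widetilde r_f^o-\E_{P_Z^o(f)}\widetilde r_f^o
  =L\bigl\{\E_{P_Z^r(f)}g_f-\E_{P_Z^o(f)}g_f\bigr\}
  \le L\,\ipm_{\calG}\{P_Z^r(f),P_Z^o(f)\}.
\end{equation*}
The supremum of absolute values dominates the signed difference, so no symmetry of $\calG$ is needed.

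For the calibration term, I fix $z$ and $a$ and put $t=\eta_f^r(a,z)$. By the definition of $d_\ell$ in Eq.~\eqref{eq:loss-discrepancy},
\begin{equation*}
  \mathcal L_\ell\{t,\eta_\star^r(a,z)\}-\mathcal L_\ell\{t,\eta_{\mathrm{deb},f}^o(a,z)\}
  \le d_\ell\{\eta_\star^r(a,z),\eta_{\mathrm{deb},f}^o(a,z)\},
\end{equation*}
since $d_\ell$ takes a supremum over all $t$, and in particular over this one. Weighting by $\pi^r(a\mid z)\ge0$, summing over $a$, and integrating over $P_Z^r(f)$ bounds the calibration term by the $f$-slice of $\varepsilon_{\mathrm{cal}}(\rho)$ in Eq.~\eqref{eq:epsilon-cal-definition}.

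Combining the three pieces gives the per-$f$ inequality. Integrating over $f\sim\rho$ then yields $R^r(\rho)\le\widetilde R^o(\rho)+L\Delta_Z(\rho)+\varepsilon_{\mathrm{cal}}(\rho)$. The only genuine obstacle is measure-theoretic bookkeeping. The argument needs the integrands to be measurable in $(f,z)$, and the expectations to be finite so that splitting them is legitimate. The bounded loss in $[0,1]$ makes $\mathcal L_\ell$ bounded, and hence $r_f^r$ and $\widetilde r_f^o$ are bounded as well. The $d_\ell$ term is nonnegative, so its integral is well defined, possibly as $+\infty$, in which case the bound is trivial. I would state these facts briefly and apply Fubini--Tonelli. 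Beyond that, the main subtlety is simply choosing the intermediate quantity $\E_{P_Z^r(f)}\widetilde r_f^o$ as the pivot.
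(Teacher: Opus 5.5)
Your proof is correct and follows essentially the same route as the paper: a pointwise $d_\ell$ bound $r_f^r(z)\le\widetilde r_f^o(z)+\sum_a\pi^r(a\mid z)\,d_\ell\{\eta_\star^r(a,z),\eta_{\mathrm{deb},f}^o(a,z)\}$ integrated over $P_Z^r(f)$, followed by the IPM step using $\widetilde r_f^o/L\in\calG$, then integration over $f\sim\rho$, with the same boundedness-based Fubini justification. Your explicit telescoping through the pivot $\E_{P_Z^r(f)}\widetilde r_f^o$ just makes the same two inequalities more transparent.
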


\begin{proof}
For fixed $f$ and $z$, the definition of $d_\ell$ gives
\begin{align}
  &\sum_{a=0}^{1}\pi^r(a\mid z)
  \mathcal L_\ell\{\eta_f^r(a,z),\eta_\star^r(a,z)\}\\
  &\quad\le
  \widetilde r_f^o(z)
  +
  \sum_{a=0}^{1}
  \pi^r(a\mid z)
  d_\ell\!\left\{
    \eta_\star^r(a,z),
    \eta_{\mathrm{deb},f}^o(a,z)
  \right\}.
\end{align}
Integrating this inequality over $Z\sim P_Z^r(f)$ and then over $f\sim \rho$ gives the calibration part. The IPM definition and the assumption $\widetilde r_f^o/L\in\calG$ give
\begin{equation}
  \E_{P_Z^r(f)}\widetilde r_f^o(Z)
  \le
  \E_{P_Z^o(f)}\widetilde r_f^o(Z)
  +
  L\,\ipm_{\calG}\{P_Z^r(f),P_Z^o(f)\}.
\end{equation}
Another integration over $f\sim \rho$ proves Eq.~\eqref{eq:ipm-calibration-bound}. Fubini's theorem applies because the loss is bounded in Theorem~\ref{thm:pacbayes}; in the sub-Gaussian variant below it applies after truncation and then monotone convergence.
\end{proof}

\begin{remark}[Sufficient conditions for $\widetilde r_f^o/L\in\calG$]
\label{rem:sufficient-conditions-rdb}
Let $v_{a,f}(z)=\eta_{\mathrm{deb},f}^o(a,z)$, and suppose $\pi^r(a\mid z)$ is constant in $z$, as in a randomized or uniform target distribution. For squared loss with conditional variance not depending on $z$, suppose $\|\eta_f^r(a,\cdot)\|_\infty$, $\|\eta_\star^o(a,\cdot)\|_\infty$, $\|b_{0,f}\|_\infty$, and $\|b_{\Delta,f}\|_\infty$ are at most $M$, and the same four maps are $L_\eta$-Lipschitz in $z$ for $\rho$-almost every $f$. Then, for $a\in\{0,1\}$,
\begin{equation}
  \left|
    \{\eta_f^r(a,z)-v_{a,f}(z)\}^2
    -
    \{\eta_f^r(a,z')-v_{a,f}(z')\}^2
  \right|
  \le
  2(4M)(4L_\eta)\|z-z'\|
  =
  32ML_\eta\|z-z'\|.
\end{equation}
Thus $\widetilde r_f^o$ is $32ML_\eta$-Lipschitz, since $\sum_a\pi^r(a)=1$. If $\pi^r(a\mid z)$ varies with $z$, the product-rule terms involving the Lipschitz constants of $\pi^r(a\mid z)$ and the suprema of the arm-specific risks must also be included.

For Bernoulli log-loss, suppose the predicted probabilities $p_{a,f}(z)=\eta_f^r(a,z)$ and the debiased probabilities $v_{a,f}(z)$ lie in $[\epsilon,1-\epsilon]$, where $0<\epsilon<1/2$, and the component maps defining $p_{a,f}$ and $v_{a,f}$ are $L_\eta$-Lipschitz. For $h(p,v)=-v\log p-(1-v)\log(1-p)$,
\begin{equation}
  \left|\frac{\partial h}{\partial p}\right|
  =
  \left|\frac{p-v}{p(1-p)}\right|
  \le
  \frac{1}{\epsilon(1-\epsilon)},
  \qquad
  \left|\frac{\partial h}{\partial v}\right|
  =
  \left|\log\frac{1-p}{p}\right|
  \le
  \log\frac{1-\epsilon}{\epsilon}.
\end{equation}
Also $\operatorname{Lip}(p_{a,f})\le L_\eta$ and $\operatorname{Lip}(v_{a,f})\le(2+a)L_\eta\le3L_\eta$, so each arm risk is Lipschitz with constant
\begin{equation}
  \frac{L_\eta}{\epsilon(1-\epsilon)}
  +3L_\eta\log\frac{1-\epsilon}{\epsilon}.
\end{equation}
The same constant applies to $\widetilde r_f^o$ under a constant target distribution. For a bounded-Lipschitz discriminator unit ball, choose $L$ at least the relevant Lipschitz constant, and at least the sup-norm bound if the class also enforces $\|g\|_\infty\le1$.
\end{remark}

\begin{theorem}[IPM-calibration target-risk decomposition]
\label{thm:pacbayes-os}
Under the assumptions of Lemma~\ref{lem:calibration-ipm-step}, every posterior $\rho$ satisfies
\begin{equation}
  R^r(\rho)
  \le
  \widetilde R^o(\rho)
  +L\Delta_Z(\rho)
  +\varepsilon_{\mathrm{cal}}(\rho).
\end{equation}
\end{theorem}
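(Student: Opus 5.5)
Theorem~\ref{thm:pacbayes-os} is the deterministic half of Theorem~\ref{thm:pacbayes}, and its hypotheses are exactly those of Lemma~\ref{lem:calibration-ipm-step}. The plan is therefore to invoke that lemma directly and to check that each of its ingredients is well defined for an arbitrary posterior $\rho$. No probability over the RCT sample enters: the inequality compares population risks under the fixed surfaces $\eta_\star^r,\eta_\star^o$, so it holds for every $\rho$ without a union bound or a change-of-measure argument.

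The argument would be carried out in three steps, at fixed draw $f$ and then averaged over $\rho$.

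\textbf{Step 1 (pointwise calibration).} For each $(f,z,a)$, compare $\mathcal L_\ell\{\eta_f^r(a,z),\eta_\star^r(a,z)\}$ with $\mathcal L_\ell\{\eta_f^r(a,z),\eta_{\mathrm{deb},f}^o(a,z)\}$. Their difference is at most $d_\ell\{\eta_\star^r(a,z),\eta_{\mathrm{deb},f}^o(a,z)\}$, by the supremum over $t$ in Eq.~\eqref{eq:loss-discrepancy}, taken at $t=\eta_f^r(a,z)$. Weighting by $\pi^r(a\mid z)$ and summing over $a$ gives
\[
r_f^r(z)\le \widetilde r_f^o(z)+\sum_a\pi^r(a\mid z)\,d_\ell\{\eta_\star^r(a,z),\eta_{\mathrm{deb},f}^o(a,z)\}.
\]

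\textbf{Step 2 (change of latent measure).} Integrate this inequality against $P_Z^r(f)$. The term $\E_{P_Z^r(f)}\widetilde r_f^o$ is then moved to $P_Z^o(f)$ using $\widetilde r_f^o/L\in\calG$. By the definition of the IPM in Eq.~\eqref{eq:ipm-definition}, this move costs at most $L\,\ipm_{\calG}\{P_Z^r(f),P_Z^o(f)\}$.

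\textbf{Step 3 (averaging over $\rho$).} Integrate over $f\sim\rho$ and identify the three terms with $\widetilde R^o(\rho)$, $L\Delta_Z(\rho)$ and $\varepsilon_{\mathrm{cal}}(\rho)$ as defined in Eq.~\eqref{eq:epsilon-cal-definition}.

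The main obstacle is not the inequalities but the measurability and integrability needed to justify exchanging the $f$- and $Z$-integrals by Fubini. Under the bounded-loss hypothesis every integrand lies in $[0,1]$ (and $d_\ell\le1$), so Tonelli applies directly. In the unbounded or sub-Gaussian variant I would first truncate the loss at level $K$, prove the bound for the truncated loss, and then let $K\to\infty$ by monotone convergence. This step needs care with the IPM term, since truncating the loss could take $\widetilde r_f^o/L$ out of $\calG$. I would handle it by assuming $\calG$ is closed under the truncation map $g\mapsto\min(g,K/L)$, which holds for bounded-Lipschitz balls as in Remark~\ref{rem:sufficient-conditions-rdb}. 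A second, minor point is that Step 2 needs $z\mapsto\widetilde r_f^o(z)$ to be measurable; this is guaranteed once $\calG$ consists of measurable functions.

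Finally, the regularity condition is used only $\rho$-almost surely. Null sets of $f$ do not affect the expectation, so the bound holds uniformly over all posteriors $\rho$ satisfying the lemma's assumption.
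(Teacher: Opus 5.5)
Your proposal is correct and follows the paper's proof: the paper simply cites the calibration--IPM lemma, whose proof is exactly your Steps 1--3. Those steps are the $d_\ell$ bound evaluated at $t=\eta_f^r(a,z)$, the change of latent measure via $\widetilde r_f^o/L\in\calG$, and integration over $f\sim\rho$, with Fubini justified by boundedness of the loss. One small caveat on your unbounded-loss aside: truncating the loss $\ell$ does not yield $\min(\widetilde r_f^o,K)$, so closure of $\calG$ under $g\mapsto\min(g,K/L)$ does not apply as stated. Instead, truncate the conditional risks, using $\min(r_f^r,K)\le\min(\widetilde r_f^o,K)+c$ whenever $r_f^r\le\widetilde r_f^o+c$ with $c\ge0$. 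Or skip truncation altogether: for a nonnegative loss, Tonelli handles the interchanges and each inequality holds in $[0,\infty]$.
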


\begin{proof}
This is Eq.~\eqref{eq:ipm-calibration-bound} from Lemma~\ref{lem:calibration-ipm-step}.
\end{proof}

\subsubsection{Proof of Theorem~\ref{thm:pacbayes}}

\begin{proof}[Proof of Theorem~\ref{thm:pacbayes}]
If $\rho$ is not absolutely continuous with respect to $\rho_0$, then $\KL(\rho\|\rho_0)=\infty$ and the claim is trivial. For absolutely continuous $\rho$, the McAllester PAC-Bayes inequality for losses in $[0,1]$ states that, with probability at least $1-\delta$ over the RCT sample, all posteriors $\rho$ satisfy
\begin{equation}
  \operatorname{kl}\{\widehat R^r(\rho),R^r(\rho)\}
  \le
  \frac{\KL(\rho\|\rho_0)+\log(2\sqrt{n^r}/\delta)}{n^r},
\end{equation}
where $\operatorname{kl}(u,v)$ is the binary relative entropy. The elementary inequality $\operatorname{kl}(u,v)\ge2(u-v)^2$ for $v\ge u$ gives
\begin{equation}
  R^r(\rho)
  \le
  \widehat R^r(\rho)
  +
  \sqrt{
    \frac{\KL(\rho\|\rho_0)+\log(2\sqrt{n^r}/\delta)}
         {2n^r}
  }.
  \label{eq:mcallester-step}
\end{equation}
This proves the first term inside the minimum in Eq.~\eqref{eq:pac_bound_rct}. The IPM-calibration decomposition is Theorem~\ref{thm:pacbayes-os}. Since both inequalities hold for the same target risk, the target risk is bounded by the better of the two displays.
\end{proof}

\begin{remark}[When the calibration residual vanishes with alignment]
\label{rem:epsilon-cal-vanishes}
Suppose $d_\ell(u,v)\le C_\ell|u-v|$, $\calG$ contains the unit ball of bounded Lipschitz functions on $\calZ$, and the function
\begin{equation}
  g_f(z)
  =
  \sum_{a=0}^{1}
  \pi^r(a\mid z)
  d_\ell\!\left\{
    \eta_\star^r(a,z),
    \eta_{\mathrm{deb},f}^o(a,z)
  \right\}
\end{equation}
satisfies $g_f/L_{\mathrm{cal}}\in\calG$ for $\rho$-almost every $f$. If the ELBO penalty $\widehat{\mathcal A}$ is a consistent estimator of $\Delta_Z(\rho)$ and the OS calibration fit gives
\begin{equation}
  \E_{f\sim \rho}\E_{Z\sim P_Z^o(f)}g_f(Z)\to0,
\end{equation}
then
\begin{equation}
  \varepsilon_{\mathrm{cal}}(\rho)
  \le
  \E_{f\sim \rho}\E_{Z\sim P_Z^o(f)}g_f(Z)
  +L_{\mathrm{cal}}\Delta_Z(\rho)
  \to0
\end{equation}
whenever $\Delta_Z(\rho)\to0$. This is the precise sense in which alignment alone is not enough: the debiased OS surface must also be calibrated on the shared support.
\end{remark}

\subsubsection{Sub-Gaussian variant}

\begin{proposition}[Sub-Gaussian PAC-Bayes variant]
\label{prop:subgaussian-pacbayes}
Replace the bounded-loss assumption in Theorem~\ref{thm:pacbayes} by the following condition: for every $f$ in the support of the data-independent prior $\rho_0$, the random variable $\ell^r(f;\xi)-R^r(f)$ is $\sigma^2$-sub-Gaussian under the RCT distribution, where $\sigma^2$ is the tail proxy. Also assume the deterministic IPM-calibration conditions of Lemma~\ref{lem:calibration-ipm-step}, with $L$, $\Delta_Z(\rho)$, $\widetilde R^o(\rho)$, and $\varepsilon_{\mathrm{cal}}(\rho)$ defined there. Then, with probability at least $1-\delta$ over the RCT sample, every posterior $\rho$ simultaneously satisfies the sub-Gaussian RCT-only PAC-Bayes bound
\begin{equation}
  R^r(\rho)\le
  \widehat R^r(\rho)
  +\sigma
  \sqrt{
    \frac{2\{\KL(\rho\|\rho_0)+\log(2\sqrt{n^r}/\delta)\}}
         {n^r}
  },
  \label{eq:subgaussian-pac-bound-rct}
\end{equation}
where $\KL(\rho\|\rho_0)$ is the Kullback-Leibler divergence from $\rho$ to $\rho_0$, and the deterministic IPM-calibration decomposition
\begin{equation}
  R^r(\rho)\le
  \widetilde R^o(\rho)
  +L\Delta_Z(\rho)
  +\varepsilon_{\mathrm{cal}}(\rho).
  \label{eq:subgaussian-pac-bound-os}
\end{equation}
Consequently the target risk is bounded by
\begin{equation}
  R^r(\rho)
  \le
  \min\left\{
    \widehat R^r(\rho)
    +\sigma
    \sqrt{
      \frac{2\{\KL(\rho\|\rho_0)+\log(2\sqrt{n^r}/\delta)\}}
           {n^r}
    },
    \widetilde R^o(\rho)+L\Delta_Z(\rho)+\varepsilon_{\mathrm{cal}}(\rho)
  \right\}.
  \label{eq:subgaussian-pac-bound}
\end{equation}
The rate remains $\mathcal{O}((n^r)^{-1/2})$; the cost is the tail constant $\sigma$ and any constants needed to verify sub-Gaussianity for the chosen loss.
\end{proposition}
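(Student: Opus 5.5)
The argument splits into two independent inequalities, each holding on the same event, and then takes their minimum. The second inequality, Eq.~\eqref{eq:subgaussian-pac-bound-os}, is deterministic. It is exactly Lemma~\ref{lem:calibration-ipm-step}, equivalently Theorem~\ref{thm:pacbayes-os}, under the assumed IPM-calibration conditions. The only change is that the Fubini step must now be justified without boundedness. We apply the lemma's bound to the truncated loss $\ell\wedge K$ and let $K\uparrow\infty$ by monotone convergence, as indicated in the lemma's proof. All the work is therefore in the RCT-only bound, Eq.~\eqref{eq:subgaussian-pac-bound-rct}.

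For that bound we use the standard Donsker--Varadhan route. Fix $\lambda>0$. By the change-of-measure inequality, for every $\rho\ll\rho_0$,
\[
\lambda\{R^r(\rho)-\widehat R^r(\rho)\}\le \KL(\rho\|\rho_0)+\log \E_{f\sim\rho_0}\exp\bigl[\lambda\{R^r(f)-\widehat L^r(f)\}\bigr].
\]
For $\rho\not\ll\rho_0$ the bound is trivial. The exponential moment is then controlled in three steps.
\begin{itemize}
\item Since $\rho_0$ is data-independent, Fubini lets us swap $\E_{\mathcal D^r}$ and $\E_{\rho_0}$.
\item For each fixed $f$, the $n^r$ i.i.d.\ centered terms are $\sigma^2$-sub-Gaussian, so their average is $\sigma^2/n^r$-sub-Gaussian. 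This gives $\E\exp[\lambda\{R^r(f)-\widehat L^r(f)\}]\le\exp\{\lambda^2\sigma^2/(2n^r)\}$.
\item Markov's inequality then says that, with probability at least $1-\delta'$, the random $\rho_0$-average is at most $\exp\{\lambda^2\sigma^2/(2n^r)\}/\delta'$.
\end{itemize}
This yields
\[
R^r(\rho)\le \widehat R^r(\rho)+\frac{\KL(\rho\|\rho_0)+\log(1/\delta')}{\lambda}+\frac{\lambda\sigma^2}{2n^r},
\]
simultaneously for all $\rho$ at fixed $\lambda$.

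The main obstacle is the choice of $\lambda$. The optimizer $\lambda^\star=\sqrt{2n^r(\KL+\log(1/\delta'))}/\sigma$ depends on $\rho$ through $\KL(\rho\|\rho_0)$, but $\lambda$ must be fixed before the high-probability event is formed. I would handle this with a union bound over a geometric grid $\lambda_k=2^k\lambda_0$. The grid covers the relevant range of $\KL$ values, roughly $\KL\le n^r$; outside that range the bound exceeds any useful value, or one truncates. Use $\delta_k\propto\delta/(k+1)^2$ on the grid, or a grid of size $O(\sqrt{n^r})$ with uniform $\delta/\sqrt{n^r}$. For each $\rho$, choose the grid point within a factor $2$ of $\lambda^\star$. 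The losses from rounding and from the union bound should be absorbed into the constant $2$ under the square root and the $\log(2\sqrt{n^r}/\delta)$ term, matching Eq.~\eqref{eq:subgaussian-pac-bound-rct}. If the constants do not close exactly, I would fall back on the Catoni-style averaging over $\lambda$ used in McAllester's $2\sqrt{n}$ derivation, or accept a slightly larger absolute constant; the $\mathcal O((n^r)^{-1/2})$ rate is unaffected either way.

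Finally, the event from the PAC-Bayes step has probability at least $1-\delta$, and the deterministic bound holds everywhere. Both therefore hold on that event for every $\rho$, and taking the smaller right-hand side gives Eq.~\eqref{eq:subgaussian-pac-bound}.
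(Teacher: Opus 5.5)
Your skeleton matches the paper's: the per-$f$ sub-Gaussian moment bound, Tonelli over the data-independent $\rho_0$, Markov, Donsker--Varadhan at fixed $\lambda$, the deterministic half from Lemma~\ref{lem:calibration-ipm-step}, and the final minimum. All of that is fine.

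The gap is the step you flag yourself, the choice of $\lambda$. The paper's proof just ``optimizes over $\lambda>0$'' after the fixed-$\lambda$ event is formed, then swaps $\log(1/\delta)$ for $\log(2\sqrt{n^r}/\delta)$. That loosens the bound but does not license a $\rho$-dependent $\lambda$, so you have found a real weak point. Your repair, however, does not reach the displayed constant.

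\begin{itemize}
\item If the grid point used is $\lambda=c\lambda^\star$, the fixed-$\lambda$ bound is $\tfrac12(c+c^{-1})\,\sigma\sqrt{2\{\KL(\rho\|\rho_0)+\log(1/\delta_k)\}/n^r}$. The quantity under the root is multiplied by $\tfrac14(c+c^{-1})^2$: $9/8$ for a dyadic grid, $25/16$ for your factor $2$.
\item This inflation is multiplicative in $\KL(\rho\|\rho_0)$, so it cannot be absorbed into the additive slack $\log(2\sqrt{n^r})$. With $c=\sqrt2$, your bound already exceeds the stated one once $\KL(\rho\|\rho_0)>8\log(2\sqrt{n^r})$ and $\lambda^\star$ falls midway between grid points.
\item Large $\KL$ cannot be discarded as vacuous. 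That works for $[0,1]$ losses, but under the sub-Gaussian hypothesis there is no a priori cap on $R^r(\rho)-\widehat R^r(\rho)$, and truncating the loss changes the statement.
\end{itemize}

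As written, your argument (like any rigorous reading of the paper's) gives the bound only up to an absolute constant. Also, McAllester's $2\sqrt n$ does not come from averaging over $\lambda$. It is Maurer's bound on $\E e^{n\operatorname{kl}(\widehat p,p)}$, an exponential moment of the Legendre transform itself.

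The sub-Gaussian analogue of that moment gives a fix with no $\lambda$ at all. Let $X_f=R^r(f)-\widehat L^r(f)$, which is $\sigma^2/n^r$-sub-Gaussian.

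\begin{itemize}
\item Write $e^{ax^2/(2v)}=\E_{g\sim\Normal(0,1)}e^{\sqrt{a/v}\,gx}$. Tonelli and the moment condition give $\E_{\DR}\exp\{(1-\epsilon)n^rX_f^2/(2\sigma^2)\}\le\E_g e^{(1-\epsilon)g^2/2}=\epsilon^{-1/2}$, which equals $2\sqrt{n^r}$ for $\epsilon=1/(4n^r)$.
\item Apply Tonelli over $\rho_0$, Markov, Donsker--Varadhan with $h_f=(1-\epsilon)n^rX_f^2/(2\sigma^2)$, and Jensen, $(\E_\rho X_f)^2\le\E_\rho X_f^2$.
\item On one event, simultaneously for all $\rho$, this yields $|R^r(\rho)-\widehat R^r(\rho)|\le\sigma\sqrt{2\{\KL(\rho\|\rho_0)+\log(2\sqrt{n^r}/\delta)\}/(n^r-1/4)}$.
\item Taking $\epsilon=e/(4n^r)$ instead, with $e=\exp(1)$, yields exactly the displayed bound for every $\rho$ with $\KL(\rho\|\rho_0)+\log(1/\delta)\le 2n^r/e-\log(2\sqrt{n^r})$.
\end{itemize}

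Either version keeps the $\mathcal O((n^r)^{-1/2})$ rate and slots into the same minimum with the IPM-calibration bound.
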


\begin{proof}
For fixed $f$, independence of the RCT sample and the sub-Gaussian assumption imply
\begin{equation}
  \E_{\DR}
  \exp\left[
    \lambda\{R^r(f)-\widehat R^r(f)\}
  \right]
  \le
  \exp\left(\frac{\lambda^2\sigma^2}{2n^r}\right).
\end{equation}
Integrating with respect to the prior $\rho_0$ and applying Markov's inequality gives, with probability at least $1-\delta$,
\begin{equation}
  \E_{f\sim \rho_0}
  \exp\left[
    \lambda\{R^r(f)-\widehat R^r(f)\}
    -\frac{\lambda^2\sigma^2}{2n^r}
  \right]
  \le
  \delta^{-1}.
\end{equation}
The Donsker-Varadhan variational inequality then gives, simultaneously for all $\rho$,
\begin{equation}
  R^r(\rho)-\widehat R^r(\rho)
  \le
  \frac{\KL(\rho\|\rho_0)+\log(1/\delta)}{\lambda}
  +\frac{\lambda\sigma^2}{2n^r}.
\end{equation}
Optimizing over $\lambda>0$ yields the displayed square-root term with $\log(1/\delta)$. Replacing this logarithm by the slightly larger McAllester-style $\log(2\sqrt{n^r}/\delta)$ gives Eq.~\eqref{eq:subgaussian-pac-bound-rct}. The deterministic IPM-calibration decomposition in Eq.~\eqref{eq:subgaussian-pac-bound-os} is Theorem~\ref{thm:pacbayes-os}. Its proof uses Lemma~\ref{lem:calibration-ipm-step}, whose argument depends on the loss discrepancy $d_\ell$ and the condition $\widetilde r_f^o/L\in\calG$, not on bounded loss; for sub-Gaussian losses the required integrations follow by the truncation and monotone-convergence argument described in the lemma proof. Since both inequalities hold for the same target risk $R^r(\rho)$, the combined bound is Eq.~\eqref{eq:subgaussian-pac-bound}.
\end{proof}

\subsubsection{Connection to bias-limited information}

\begin{remark}[Connection to Theorem~\ref{thm:functional}]
\label{rem:pacbayes-functional-connection}
Consider the local Gaussian linearization in Theorem~\ref{thm:functional}. Let $\beta_\tau$ have prior covariance $\Sigma_\tau$, let the RCT contribution to the local posterior precision be $I^r$, and let the OS contribution be $I^o$. Theorem~\ref{thm:functional} gives that both $I^o$ and $\Sigma_b^{-1}-I^o$ are positive semidefinite. If $\Lambda^r=\Sigma_\tau^{-1}+I^r$ and $\Lambda^{r,o}=\Lambda^r+I^o$, then the covariance part of the Gaussian complexity satisfies
\begin{align}
  &\frac{1}{2}
  \left[
    \log\det(\Lambda^{r,o})-\log\det(\Lambda^r)
  \right]\\
  &\quad=
  \frac{1}{2}\log\det(I+(\Lambda^r)^{-1/2}I^o(\Lambda^r)^{-1/2})\\
  &\quad\le
  \frac{1}{2}\log\det(I+(\Lambda^r)^{-1/2}\Sigma_b^{-1}(\Lambda^r)^{-1/2}).
\end{align}
Thus, apart from the usual posterior-mean term, which is controlled by the same prior-norm clipping or localization assumptions used in PAC-Bayes with unbounded parameter spaces, the OS can increase the PAC-Bayes complexity by at most the displayed constant. If $\Sigma_b=\sigma_\Delta^2 I_d$, this constant is bounded by
\begin{equation}
  \frac{d}{2}\log\left\{1+\frac{\lambda_{\max}((\Lambda^r)^{-1})}{\sigma_\Delta^2}\right\}.
\end{equation}
The additional contribution to the first square-root term in Eq.~\eqref{eq:pac_bound_rct} is therefore at most this constant divided by $2n^r$ inside the square root, up to the stated mean-localization term. This is the PAC-Bayes counterpart of the bias-limited information bound: adding the OS likelihood cannot create more treatment-effect precision than the comparative-bias prior permits.
\end{remark}

\subsection{Binary-outcome notation}
\label{app:binary-outcomes}

For binary outcomes, let $g$ be a generalized-linear-model link function and let $h=g^{-1}$. The same bias decomposition can be placed on the linear-predictor scale
\begin{align}
  \eta^r(a,z) &= \mu(z)+a\tau(z),\\
  \eta^o(a,z) &= \mu(z)+a\tau(z)+b_0(z)+a b_\Delta(z),
\end{align}
with
\begin{equation}
  \Prob(Y^s=1\mid A=a,Z=z)=h\{\eta^s(a,z)\}.
\end{equation}
For a logit link, $h=\expit$. In that case $\tau(z)$ is a conditional log-odds contrast, not a risk-difference CATE. Posterior risks for trial units are
\begin{equation}
  \rho_a(x^r)=
  \E\left[h\{\mu(Z)+a\tau(Z)\}\mid Z\sim q_{\phi^r}(\cdot\mid x^r),\D\right],
\end{equation}
and the reported risk-difference CATE is
\begin{equation}
  \tau_{RD}(x^r)=\rho_1(x^r)-\rho_0(x^r).
  \label{eq:posterior_rd}
\end{equation}
Prior calibration for $b_\Delta$ should be interpreted in the units induced by the chosen link function; for example, under a logit link it is a prior on the OS-versus-RCT log-odds-ratio bias. We do not evaluate binary outcomes in this paper.

\subsection{Prior calibration for comparative bias}

For the continuous-outcome setting emphasized in the main text, a pointwise normal prior statement such as
\begin{equation}
  \Prob\{|b_\Delta(z)|<\delta\}=0.95
\end{equation}
corresponds to $\sigma_\Delta\approx\delta/1.96$. If the analyst believes the OS treatment contrast may be biased by as much as $0.4$ outcome units in either direction, then $\sigma_\Delta\approx0.20$. The analysis should report a prespecified sensitivity set, for example $\delta\in\{0,0.2,0.4,0.8,\infty\}$, and a robust mixture prior that allows posterior retreat from the optimistic component.

\subsection{Data governance and broader impacts}
\label{app:governance-impacts}

The real-data analysis uses previously collected pediatric-obesity trial and EHR control data under the source studies' data-use and institutional-review constraints; no new participants were recruited and no crowdworkers were used. Any participant-risk, consent, and disclosure procedures were handled under those source-study reviews. Patient-level data are not released. The intended benefit is more cautious external-control augmentation for small trials, especially when observational cohorts are large but imperfectly comparable. The main risk is overinterpretation: calibrated borrowing can quantify sensitivity to bias priors, but it does not remove the need for an explicitly trial-defined estimand, domain review, and study-specific validation before clinical or policy use.

\subsection{Future directions}
\label{app:future-directions}

Proposition~\ref{prop:encoder-aware-ess} gives an encoder-aware extension of Corollary~\ref{thm:scalar} in the linear-Gaussian case, adding the encoder-offset variance to the comparative-bias variance in the OS effective sample size. Several extensions follow directly. (i) \emph{Time-varying treatments}: extending the bias decomposition to longitudinal settings where the comparative bias may differ across treatment regimes raises identifiability questions tied to the structural-nested model literature. (ii) \emph{Decision-aware borrowing}: replacing the symmetric prior on $b_\Delta$ with a decision-theoretic prior that penalizes errors in the direction that would change a treatment recommendation. (iii) \emph{Multi-source fusion}: more than one observational study with study-specific bias functions and a hierarchical prior across sources.

\subsection{Reproducibility}
\label{app:reproducibility}

\paragraph{Design grids and seeds.} The factorial main simulation uses trial sizes $n^r\in\{250,500\}$, observational size $n^o=5{,}000$, and $50$ seeds per cell over the $24$-cell condition grid of Appendix~\ref{app:dgp-synthetic}. The saturation experiment uses $n^r\in\{250,500\}$, $n^o\in\{500,1{,}000,2{,}000,5{,}000,10{,}000,20{,}000,50{,}000\}$, fixed $\sigma_\Delta\in\{0.1,0.2,0.4,\infty\}$, and $50$ seeds per cell. The semi-synthetic benchmark uses $n^r=500$, $n^o=5{,}000$, and $50$ seeds.

\paragraph{Prior-scale settings across studies.} The main factorial grid selects $\sigma_\Delta$ per replicate by empirical-Bayes model averaging (Algorithm~\ref{alg:bcalm}; prior mean $m_\Delta=0$). The semi-synthetic and calibration studies use the same empirical-Bayes averaging. The saturation experiment fixes $\sigma_\Delta$ per curve and sets the comparative-bias prior mean to $m_\Delta=0.2$, which matches the location of the true bias template; its coverage values are conservative and should not be read as evidence about the adaptive procedure. The real-data application holds $\sigma_\Delta$ diffuse and selects $\sigma_0$ as described in Appendix~\ref{app:gps-details}.

\paragraph{Baseline implementations.} CF (the Wager causal forest) uses double machine learning with random-forest nuisance models. The Hahn Bayesian causal forest (BCF) is run only in the real-data analysis; in our implementation it uses two BART components (prognostic $f$ and moderating $\tau$) with explicit PGBART step methods, $50$ trees per component, and $500$ posterior draws after $250$ tuning steps. The R-OSCAR, MR-OSCAR, and CALM baselines reuse the parent calibrated-borrowing implementations with a 200-replicate trial bootstrap for intervals; in the Gaussian finite-basis comparison they coincide to within Monte Carlo noise and are reported as a single OSCAR/CALM entry.

\paragraph{Compute and code.} All results were produced on a single workstation. Code for the synthetic and semi-synthetic experiments---including simulation seeds, hyperparameter configurations, an ordered reproduction notebook, and figure-generation scripts---is available at \url{https://github.com/AsiaeeLab/b-calm}; per-source IRB and data-use constraints prevent release of patient-level data.

\section{Experimental details}
\label{app:exp-details}

This appendix gives the details needed to reproduce the experiments in Section~\ref{sec:experiments}: metric definitions (Appendix~\ref{app:metrics}), the synthetic data-generating process and condition grid (Appendix~\ref{app:dgp-synthetic}), the B-CALM estimator exactly as fit in all experiments (Appendix~\ref{app:bcalm-implementation}), the semi-synthetic construction routine (Appendix~\ref{app:semi-construction}), and the real-data cohort and configuration (Appendix~\ref{app:gps-details}). Software, seeds, and compute are collected in Appendix~\ref{app:reproducibility}.

\subsection{Metrics}
\label{app:metrics}

Each simulated replicate uses an evaluation sample $x_1,\ldots,x_m$ with $m=600$ fresh draws from the trial covariate distribution and true CATEs $\tau(x_j)$ known by construction. For a method reporting posterior-mean CATE $\widehat\tau(\cdot)$ and posterior standard deviation $\widehat s(\cdot)$, the pointwise 90\% interval is $\widehat\tau(x_j)\pm z\,\widehat s(x_j)$ with $z=z_{0.95}\approx1.645$, and we record
\begin{equation}
\mathrm{RMSE}=\Big\{\tfrac{1}{m}\textstyle\sum_{j}\{\widehat\tau(x_j)-\tau(x_j)\}^2\Big\}^{1/2},
\qquad
\mathrm{Cov}=\tfrac{1}{m}\textstyle\sum_{j}\one\{|\widehat\tau(x_j)-\tau(x_j)|\le z\,\widehat s(x_j)\},
\end{equation}
and $\mathrm{Width}=\tfrac{1}{m}\sum_j 2z\,\widehat s(x_j)$. Negative transfer in a replicate is the indicator $\one\{\mathrm{RMSE}_{\mathrm{method}}>\mathrm{RMSE}_{\text{RCT-only}}\}$ computed against the RCT-only fit of the same replicate; the reported NT is its average over replicates. For B-CALM the posterior of $\tau(x)$ is Gaussian with closed-form moments (Appendix~\ref{app:bcalm-implementation}), so $\widehat\tau$ and $\widehat s$ are analytic; for the empirical-Bayes mixture, we use the exact mixture mean and variance. CF reports asymptotic influence-function standard errors; the OSCAR/CALM family reports intervals from a 200-replicate trial bootstrap.

\subsection{Synthetic data-generating process}
\label{app:dgp-synthetic}

\paragraph{Latent state and covariates.}
For each unit, draw a latent state $z=(z_1,z_2,z_3)^\top\sim\Normal(0,I_3)$. Observational units receive a regime-specific mean shift $z\mapsto z+\delta$ with
$\delta=(0.05,-0.02,0.00)$ (shared-only), $\delta=(0.25,-0.15,0.10)$ (balanced), or $\delta=(0.60,-0.40,0.25)$ (mismatch-heavy).
Five observed covariates are generated as noisy linear maps of $z$ with independent noise $\epsilon_k\sim\Normal(0,s^2)$, where $s=0.10$, $0.25$, $0.45$ by regime. The first two covariates are common to both sources and form the shared block $W$:
\begin{equation}
x_1=z_1+\epsilon_1,\qquad x_2=z_2+\epsilon_2 .
\end{equation}
The remaining three covariates have source-specific loadings $\kappa^s$:
\begin{equation}
x_3=\kappa_3^s\,z_3+\epsilon_3,\qquad
x_4=\kappa_4^s\,z_1+\epsilon_4,\qquad
x_5=\kappa_5^s\,z_2+\epsilon_5 ,
\end{equation}
with, per regime,
\begin{center}
\small
\begin{tabular}{lcccccc}
\toprule
 & \multicolumn{2}{c}{$\kappa_3$} & \multicolumn{2}{c}{$\kappa_4$} & \multicolumn{2}{c}{$\kappa_5$}\\
Regime & RCT & OS & RCT & OS & RCT & OS\\
\midrule
shared-only\textsuperscript{*} & 1.0 & 1.0 & \multicolumn{2}{c}{$0.6z_1-0.2z_2$} & 0.4 & 0.4\\
balanced & 1.0 & 1.0 & 0.70 & 0.45 & 0.25 & 0.40\\
mismatch-heavy & 0.90 & 0.35 & 0.90 & 0.15 & 0.10 & 0.80\\
\bottomrule
\end{tabular}
\end{center}
\textsuperscript{*}In the shared-only regime both sources use the same maps ($x_4=0.6z_1-0.2z_2+\epsilon_4$), so all five covariates are effectively shared; the balanced and mismatch-heavy regimes make $x_3,x_4,x_5$ progressively more source-specific, which creates the trial-only block $U$ and OS-only block $V$.

\paragraph{Outcome surfaces.}
Outcome complexity is tied to the regime: shared-only uses linear surfaces, balanced adds mild nonlinearity, and mismatch-heavy adds further nonlinearity. With covariates $x=(x_1,\ldots,x_5)$,
\begin{align}
\mu(x)&=0.25+0.45x_1-0.25x_2+0.18x_3+0.08x_4
\;{+}\;\underbrace{0.12\sin x_1-0.08x_2x_3}_{\text{balanced, mismatch-heavy}}
\;{+}\;\underbrace{0.10\cos x_4+0.08\tanh(x_1x_5)}_{\text{mismatch-heavy}},\\
\tau(x)&=0.35+0.16x_1-0.10x_3
\;{+}\;\underbrace{0.08\sin x_2+0.05x_1x_4}_{\text{balanced, mismatch-heavy}}
\;{+}\;\underbrace{0.10\tanh x_5-0.06\cos(x_1+x_3)}_{\text{mismatch-heavy}}.
\end{align}

\paragraph{Bias functions and treatment assignment.}
Under the no-bias regime, $b_0=b_\Delta\equiv0$. Otherwise the OS has baseline bias
$b_0(x)=0.20+0.15x_1-0.10x_3$,
and under the comparative regime additionally
\begin{equation}
b_\Delta(x)=\frac{s_\Delta}{0.2}\left\{0.20+0.12x_1-0.08x_2+0.04\tanh x_4\right\},
\label{eq:dgp-bdelta}
\end{equation}
where $s_\Delta$ is the comparative-bias strength ($s_\Delta=0.2$ except in the sweep, which uses $s_\Delta\in\{0,0.2,0.5,1.0\}$); the baseline-only regime sets $b_\Delta\equiv0$. Trial treatment is assigned by balanced randomization ($\lfloor n^r/2\rfloor$ treated, randomly permuted). OS treatment is confounded:
\begin{equation}
A^o\sim\Bern\left[\expit\{-0.05+0.85x_1-0.55x_2+0.35x_4+0.35x_3\,\one(\text{comparative})\}\right].
\label{eq:dgp-assignment}
\end{equation}
Outcomes are $Y=\mu(X)+A\,\tau(X)+\one(\text{OS})\{b_0(X)+A\,b_\Delta(X)\}+\varepsilon$ with $\varepsilon\sim\Normal(0,1)$.

\paragraph{Condition grid and evaluation.}
The factorial grid crosses the three regimes with the three bias settings at $s_\Delta=0.2$, plus the balanced-regime comparative sweep $s_\Delta\in\{0,0.5,1.0\}$, each at $n^r\in\{250,500\}$ with $n^o=5{,}000$: $24$ cells, $50$ seeds per cell. Every replicate is evaluated on $m=600$ fresh trial-distribution covariate draws with known $\tau(\cdot)$. OS-only is omitted from the three sweep-extension cells.

\begin{algorithm}[t]
\caption{Synthetic data generation (one replicate)}
\label{alg:dgp}
\begin{algorithmic}[1]
\Require regime $r\in\{$shared-only, balanced, mismatch-heavy$\}$, bias setting, strength $s_\Delta$, sizes $n^r,n^o$, seed
\State draw latent states $z\sim\Normal(0,I_3)$ for $n^r$ trial and $n^o$ OS units; shift OS states by $\delta_r$
\State map to covariates $x_1,\ldots,x_5$ with regime- and source-specific loadings and noise scale $s_r$
\State assign trial treatment by balanced randomization; draw OS treatment from Eq.~\eqref{eq:dgp-assignment}
\State compute $\mu(x),\tau(x)$ at regime complexity; compute $b_0(x),b_\Delta(x)$ per bias setting and Eq.~\eqref{eq:dgp-bdelta}
\State draw $Y=\mu+A\tau+\one(\text{OS})(b_0+Ab_\Delta)+\Normal(0,1)$
\State draw $m=600$ evaluation covariates from the trial law; store true $\tau(\cdot)$ there
\end{algorithmic}
\end{algorithm}

\subsection{B-CALM implementation and pseudocode}
\label{app:bcalm-implementation}

\paragraph{Model.}
All experiments use the closed-form Gaussian instance of Section~\ref{sec:model} with identity encoders: each source's covariate vector is its latent state, and $b_0$ and $b_\Delta$ absorb the remaining source discrepancy. Fix a basis map $\phi(\cdot)\in\R^{k}$ and write $\mu(x)=\phi(x)^\top\omega_\mu$, $\tau(x)=\phi(x)^\top\omega_\tau$, $b_0(x)=\phi(x)^\top\omega_0$, $b_\Delta(x)=\phi(x)^\top\omega_\Delta$. Stacking $\theta=(\omega_\mu,\omega_\tau,\omega_0,\omega_\Delta)\in\R^{4k}$, each trial row contributes the design vector $(\phi^\top, a\phi^\top, 0, 0)$ and each OS row $(\phi^\top, a\phi^\top, \phi^\top, a\phi^\top)$, with Gaussian likelihood $Y\mid\cdot\sim\Normal(\text{design}\cdot\theta,\sigma_y^2)$, $\sigma_y=1$. The basis matches the regime: linear $\phi(x)=(1,x)$ for shared-only; adding $\sin x_1$, $x_2^2$, $x_1x_3$ for balanced; adding $\cos x_4$, $\tanh(x_1x_5)$ for mismatch-heavy. The basis overlaps but does not span the true surfaces (for example, the balanced-regime truth contains $x_2x_3$, $\sin x_2$, and $x_1x_4$, none of which is in the basis), so the outcome model is mildly misspecified by design.

\paragraph{Priors and posterior.}
Coefficients have independent Gaussian priors: $\omega_\mu,\omega_\tau\sim\Normal(0,3^2I_k)$, $\omega_0\sim\Normal(0,\sigma_0^2I_k)$ with $\sigma_0=0.6$, and $\omega_\Delta\sim\Normal(m_\Delta e_1,\sigma_\Delta^2 I_k)$, where $e_1$ selects the intercept coefficient; $m_\Delta=0$ in the factorial grid and semi-synthetic study and $m_\Delta=0.2$ in the saturation experiment. Diffuse and degenerate scales are implemented numerically (prior variance $10^{10}$ for $\infty$, $10^{-10}$ for $0$). With design matrix $\Phi$ and prior precision $P$, the posterior is Gaussian with precision $\Lambda=P+\Phi^\top\Phi/\sigma_y^2$ and mean $\Lambda^{-1}(Pm_{\mathrm{prior}}+\Phi^\top Y/\sigma_y^2)$, computed by Cholesky factorization. The posterior CATE at $x$ is Gaussian with mean $\phi(x)^\top\widehat\omega_\tau$ and variance $\phi(x)^\top\mathrm{Cov}(\omega_\tau)\phi(x)$; no sampling is required.

\paragraph{Empirical-Bayes selection of $\sigma_\Delta$.}
In the factorial grid, $\sigma_\Delta$ is selected per replicate by model averaging over the fixed candidate set $S=\{0.01,0.05,0.2,0.5,1.0,\infty\}$ with fixed prior weights $\pi=(0.01,0.01,0.42,0.21,0.13,0.22)$. The weights are asymmetric by design: an overly tight $\sigma_\Delta$ over-borrows a biased contrast and threatens validity, while an overly diffuse $\sigma_\Delta$ only costs efficiency, so the prior stays skeptical on the tight scales, keeps a mild mode at moderate borrowing, and carries a heavy diffuse tail for retreat --- the robust-mixture structure recommended in the prior-calibration discussion of Appendix~\ref{app:functional-pac}. The weights are one global choice, used unchanged in every cell and study; per-replicate posterior weights are exported with the results for inspection. The evidence for each candidate is the exact Gaussian marginal likelihood of the contrast discrepancy: fit the RCT-only model (diffuse bias priors) and an OS-only contrast model, extract the two posterior contrast-coefficient distributions $(\widehat\tau^r,\Sigma^r)$ and $(\widehat\tau^o,\Sigma^o)$, and score
$\log p_\sigma = \log\Normal\{\widehat\tau^o-\widehat\tau^r;\,0,\;\Sigma^r+\Sigma^o+\sigma^2I\}$.
Posterior model weights are $w_\sigma\propto\pi_\sigma\exp(\log p_\sigma)$, and the reported posterior is the corresponding mixture of Gaussian fits with exact mixture moments. The semi-synthetic and calibration studies use the same averaging; the saturation experiment instead fixes $\sigma_\Delta\in\{0.1,0.2,0.4,\infty\}$ per curve, and the semi-synthetic sensitivity figure varies fixed $\sigma_\Delta$ over $\{0,0.1,0.2,0.4,\infty\}$.

\begin{algorithm}[t]
\caption{B-CALM, Gaussian finite-basis instance with empirical-Bayes averaging}
\label{alg:bcalm}
\begin{algorithmic}[1]
\Require trial data $\DR$, OS data $\DO$, basis $\phi$, candidate set $S$, prior weights $\pi$, $\sigma_0=0.6$, $\sigma_y=1$
\State build design rows $(\phi,a\phi,0,0)$ for trial units and $(\phi,a\phi,\phi,a\phi)$ for OS units
\State fit RCT-only ($\sigma_\Delta=\sigma_0=\infty$, trial rows only) and OS-only contrast ($\sigma_\Delta=0$, OS rows only); extract contrast posteriors $(\widehat\tau^r,\Sigma^r)$, $(\widehat\tau^o,\Sigma^o)$
\For{$\sigma\in S$}
  \State fit the joint Gaussian posterior with comparative-bias scale $\sigma$ (closed form, Cholesky)
  \State compute $\log p_\sigma=\log\Normal\{\widehat\tau^o-\widehat\tau^r;0,\Sigma^r+\Sigma^o+\sigma^2I\}$
\EndFor
\State set weights $w_\sigma\propto\pi_\sigma\exp(\log p_\sigma)$; return the weighted mixture of fits
\State report posterior CATE mean/variance at any $x$ from exact mixture moments; intervals are mean $\pm\,1.645\,$sd
\end{algorithmic}
\end{algorithm}

\paragraph{Baseline configurations of the same model.}
RCT-only fits the trial rows alone with diffuse bias priors; OS-only fits the OS rows alone with $\sigma_0=\sigma_\Delta=0$ (its contrast is used directly); pooled fits both sources with $\sigma_0=\sigma_\Delta=0$, forcing $b_0=b_\Delta\equiv0$.

\subsection{Semi-synthetic construction modeled on STAR}
\label{app:semi-construction}

The benchmark keeps STAR-type covariates and marginals but imposes known outcome surfaces, following \citet{asiaee2025roscar}. For source $s\in\{r,o\}$, covariates are drawn as
sex $\sim\Bern(0.49)$; race (Black) $\sim\Bern(0.32)$ for the trial and $\Bern(0.38)$ for the OS; free lunch $\sim\Bern(0.42)$ vs.\ $\Bern(0.50)$; small class $\sim\Bern(0.33)$ vs.\ $\Bern(0.24)$; teacher experience $\sim\Normal(8,3^2)$ vs.\ $\Normal(7,3^2)$; baseline score $=\Normal(0,1)-0.35\,\text{freelunch}+0.018\,\text{teacherexp}$; and school quality $=\Normal(0,0.7^2)+0.2(1-\text{freelunch})$. The seven covariates enter centered and scaled as
$x=(\text{baseline score},\ \text{sex}-0.5,\ \text{race}-0.32,\ \text{freelunch}-0.45,\ \text{smallclass}-0.30,\ (\text{teacherexp}-8)/3,\ \text{school quality})$.
The imposed surfaces are
\begin{align}
\mu(x)&=0.15+0.60x_1+0.12x_2-0.22x_4+0.16x_6+0.20x_7+0.08\sin x_1,\\
\tau(x)&=0.22+0.18(1-x_4)+0.12x_5+0.08\tanh x_1,
\end{align}
with OS bias functions $b_0(x)=0.22+0.18x_4-0.12x_7$ and $b_\Delta(x)=0.18+0.10x_4-0.08x_5+0.05\tanh x_1$ (comparative strength $0.2$). Trial treatment is balanced-randomized; OS treatment is confounded via $A^o\sim\Bern[\expit\{-0.15+0.75x_1-0.55x_4+0.30x_7\}]$; outcomes add $\Normal(0,1)$ noise. Sizes are $n^r=500$, $n^o=5{,}000$, with $50$ seeds and $600$ trial-law evaluation draws. B-CALM uses the star basis ($\phi=(1,x,\sin x_1,x_2^2,x_1x_3,\cos x_4,\tanh(x_1x_5))$) with the same empirical-Bayes averaging over $\sigma_\Delta$ as the main grid; the sensitivity analysis in Figure~\ref{fig:semi-sensitivity}, which varies fixed $\sigma_\Delta$ from $0$ to $\infty$, shows how conclusions move across the whole borrowing range. Among the calibrated-borrowing family, MR-OSCAR is run in this study and represents the family. The sensitivity figure fixes one seed and traces posterior CATEs at seven representative covariate profiles (the first seven evaluation units).

\subsection{Real-data cohort and configuration}
\label{app:gps-details}

The trial is a multi-site pediatric-obesity prevention RCT with $n^r=385$ children and randomized treatment; the external cohort is a cross-sectional EHR sample of $n^o=15{,}150$ children drawn from the same health systems, none of whom received the trial intervention, so every EHR row enters with $A=0$ (an external-control convention). The outcome is the 12-month weight-for-length z-score. Six shared covariates are available in both sources: site, sex, race/ethnicity, language, insurance, and baseline weight-for-length z-score; five trial-only covariates (household food security, parent education, income, health literacy, and birth weight) are retained for sensitivity analyses only. B-CALM is fit on the six shared covariates with a linear basis. With one-arm external controls the comparative-bias scale $\sigma_\Delta$ is unidentified and held diffuse; borrowing is governed by the baseline-commensurability scale $\sigma_0$, selected over the candidate set $\{0,0.025,0.05,0.1,0.2,0.5,\infty\}$ with fixed prior weights $(0.05,0.20,0.30,0.20,0.10,0.05,0.10)$ by the prior-weighted score
$\log\pi(\sigma_0)-\log\{\text{interval width}(\sigma_0)\}-0.02\,|\widehat{\ATE}(\sigma_0)|$,
which is maximized at $\sigma_0=0.05$. The skeptical scale $\sigma_0=0.6$ used in the bias-susceptibility diagnostic is the default baseline-bias prior of the synthetic experiments, an order of magnitude above the selected value. BCF is excluded from the bias-susceptibility diagnostic because that diagnostic requires a grid of refits (roughly $2(K{+}1)$ for $K$ strata), which is infeasible at BCF's MCMC cost; it appears in the main real-data comparison, Table~\ref{tab:gps-main}. Patient-level data cannot be released (Appendix~\ref{app:governance-impacts}).

\section{Full main simulation results table}
\label{app:main-table}

The 138-row main-results table summarized by Table~\ref{tab:main_summary} in the body is reproduced in full below. Each row reports the average across $50$ seeds for one method in one combination of (DGP, bias regime, comparative-bias strength, $n^r$), using the condition grid of Appendix~\ref{app:dgp-synthetic}; RMSE is reported as mean $\pm$ across-seed standard deviation. Bold and underline preserve the generated per-cell highlight markers for quick scanning.

{\tiny
\setlength\tabcolsep{3pt}
\begin{longtable}{lllrlrrrr}
\caption{Full main simulation results over 50 seeds per cell. Within each cell we bold the method with the lowest negative-transfer rate and underline the method with coverage closest to the 90\% target from below. Pooled and CF (the Wager causal forest) achieve lower RMSE under no bias because pooling is variance-optimal there, but their coverage collapses under comparative bias. R-OSCAR, MR-OSCAR, and CALM coincide to within Monte Carlo noise in this harness and are merged into a single OSCAR/CALM row; the family sits just below nominal coverage on average while declining to borrow under bias. The Hahn Bayesian causal forest (BCF) is run only in the real-data analysis. Bias-limited width saturation is guaranteed by Corollary~\ref{thm:scalar}.}
\label{tab:main} \\
\toprule
DGP & Bias & $b_\Delta$ & $n^r$ & Method & Coverage & Neg. transfer & RMSE & Width \\
\midrule
\endfirsthead
\multicolumn{9}{l}{\emph{(continued)}} \\
\toprule
DGP & Bias & $b_\Delta$ & $n^r$ & Method & Coverage & Neg. transfer & RMSE & Width \\
\midrule
\endhead
\midrule
\multicolumn{9}{r}{\emph{continued on next page}} \\
\endfoot
\endlastfoot
balanced & baseline-only &  & 250 & pooled & 0.794 & 0.000 & 0.135 $\pm$ 0.031 & 0.285 \\
balanced & baseline-only &  & 250 & \underline{OS-only} & \underline{0.794} & 0.020 & 0.137 $\pm$ 0.034 & 0.297 \\
balanced & baseline-only &  & 250 & CF & 0.967 & 0.000 & 0.152 $\pm$ 0.016 & 0.760 \\
balanced & baseline-only &  & 250 & \textbf{B-CALM} & \textbf{0.978} & \textbf{0.000} & \textbf{0.201 $\pm$ 0.076} & \textbf{0.795} \\
balanced & baseline-only &  & 250 & OSCAR/CALM & 0.918 & 0.020 & 0.258 $\pm$ 0.075 & 0.895 \\
balanced & baseline-only &  & 250 & RCT-only & 0.926 & 0.000 & 0.370 $\pm$ 0.101 & 1.225 \\
balanced & baseline-only &  & 500 & pooled & 0.802 & 0.020 & 0.124 $\pm$ 0.024 & 0.274 \\
balanced & baseline-only &  & 500 & OS-only & 0.807 & 0.020 & 0.129 $\pm$ 0.030 & 0.297 \\
balanced & baseline-only &  & 500 & B-CALM & 0.954 & 0.000 & 0.164 $\pm$ 0.051 & 0.591 \\
balanced & baseline-only &  & 500 & OSCAR/CALM & 0.867 & 0.000 & 0.201 $\pm$ 0.056 & 0.608 \\
balanced & baseline-only &  & 500 & \textbf{CF} & \textbf{0.962} & \textbf{0.000} & \textbf{0.153 $\pm$ 0.016} & \textbf{0.741} \\
balanced & baseline-only &  & 500 & \underline{RCT-only} & \underline{0.877} & 0.000 & 0.276 $\pm$ 0.056 & 0.841 \\
balanced & comparative & 0.0 & 250 & \underline{pooled} & \underline{0.765} & 0.000 & 0.138 $\pm$ 0.031 & 0.286 \\
balanced & comparative & 0.0 & 250 & CF & 0.965 & 0.000 & 0.155 $\pm$ 0.016 & 0.764 \\
balanced & comparative & 0.0 & 250 & \textbf{B-CALM} & \textbf{0.977} & \textbf{0.000} & \textbf{0.203 $\pm$ 0.076} & \textbf{0.795} \\
balanced & comparative & 0.0 & 250 & OSCAR/CALM & 0.918 & 0.020 & 0.258 $\pm$ 0.075 & 0.895 \\
balanced & comparative & 0.0 & 250 & RCT-only & 0.926 & 0.000 & 0.370 $\pm$ 0.101 & 1.225 \\
balanced & comparative & 0.0 & 500 & pooled & 0.765 & 0.020 & 0.131 $\pm$ 0.030 & 0.275 \\
balanced & comparative & 0.0 & 500 & B-CALM & 0.952 & 0.020 & 0.167 $\pm$ 0.049 & 0.595 \\
balanced & comparative & 0.0 & 500 & OSCAR/CALM & 0.867 & 0.000 & 0.201 $\pm$ 0.056 & 0.608 \\
balanced & comparative & 0.0 & 500 & CF & 0.963 & 0.040 & 0.153 $\pm$ 0.018 & 0.748 \\
balanced & comparative & 0.0 & 500 & \textbf{\underline{RCT-only}} & \textbf{\underline{0.877}} & \textbf{0.000} & \textbf{0.276 $\pm$ 0.056} & \textbf{0.841} \\
balanced & comparative & 0.2 & 250 & B-CALM & 0.924 & 0.020 & 0.241 $\pm$ 0.060 & 0.857 \\
balanced & comparative & 0.2 & 250 & pooled & 0.308 & 0.180 & 0.298 $\pm$ 0.032 & 0.287 \\
balanced & comparative & 0.2 & 250 & OS-only & 0.320 & 0.220 & 0.295 $\pm$ 0.032 & 0.299 \\
balanced & comparative & 0.2 & 250 & \underline{CF} & \underline{0.791} & 0.200 & 0.284 $\pm$ 0.033 & 0.771 \\
balanced & comparative & 0.2 & 250 & OSCAR/CALM & 0.903 & 0.000 & 0.268 $\pm$ 0.092 & 0.903 \\
balanced & comparative & 0.2 & 250 & \textbf{RCT-only} & \textbf{0.911} & \textbf{0.000} & \textbf{0.382 $\pm$ 0.083} & \textbf{1.233} \\
balanced & comparative & 0.2 & 500 & OSCAR/CALM & 0.903 & 0.020 & 0.183 $\pm$ 0.047 & 0.610 \\
balanced & comparative & 0.2 & 500 & B-CALM & 0.928 & 0.040 & 0.192 $\pm$ 0.043 & 0.664 \\
balanced & comparative & 0.2 & 500 & pooled & 0.309 & 0.600 & 0.289 $\pm$ 0.032 & 0.276 \\
balanced & comparative & 0.2 & 500 & OS-only & 0.330 & 0.600 & 0.291 $\pm$ 0.036 & 0.300 \\
balanced & comparative & 0.2 & 500 & \textbf{RCT-only} & \textbf{0.909} & \textbf{0.000} & \textbf{0.269 $\pm$ 0.061} & \textbf{0.847} \\
balanced & comparative & 0.2 & 500 & \underline{CF} & \underline{0.796} & 0.580 & 0.277 $\pm$ 0.031 & 0.760 \\
balanced & comparative & 0.5 & 250 & OSCAR/CALM & 0.904 & 0.000 & 0.270 $\pm$ 0.068 & 0.910 \\
balanced & comparative & 0.5 & 250 & \underline{B-CALM} & \underline{0.859} & 0.140 & 0.295 $\pm$ 0.083 & 0.933 \\
balanced & comparative & 0.5 & 250 & \textbf{RCT-only} & \textbf{0.912} & \textbf{0.000} & \textbf{0.386 $\pm$ 0.111} & \textbf{1.245} \\
balanced & comparative & 0.5 & 250 & pooled & 0.136 & 0.960 & 0.663 $\pm$ 0.033 & 0.288 \\
balanced & comparative & 0.5 & 250 & CF & 0.370 & 0.960 & 0.619 $\pm$ 0.036 & 0.798 \\
balanced & comparative & 0.5 & 500 & \textbf{\underline{OSCAR/CALM}} & \textbf{\underline{0.880}} & \textbf{0.000} & \textbf{0.199 $\pm$ 0.056} & \textbf{0.609} \\
balanced & comparative & 0.5 & 500 & B-CALM & 0.857 & 0.060 & 0.239 $\pm$ 0.055 & 0.697 \\
balanced & comparative & 0.5 & 500 & RCT-only & 0.880 & 0.000 & 0.293 $\pm$ 0.065 & 0.845 \\
balanced & comparative & 0.5 & 500 & CF & 0.393 & 1.000 & 0.590 $\pm$ 0.040 & 0.786 \\
balanced & comparative & 0.5 & 500 & pooled & 0.132 & 1.000 & 0.636 $\pm$ 0.035 & 0.276 \\
balanced & comparative & 1.0 & 250 & OSCAR/CALM & 0.870 & 0.000 & 0.302 $\pm$ 0.087 & 0.917 \\
balanced & comparative & 1.0 & 250 & \underline{B-CALM} & \underline{0.893} & 0.060 & 0.346 $\pm$ 0.089 & 1.094 \\
balanced & comparative & 1.0 & 250 & \textbf{RCT-only} & \textbf{0.875} & \textbf{0.000} & \textbf{0.418 $\pm$ 0.095} & \textbf{1.245} \\
balanced & comparative & 1.0 & 250 & pooled & 0.074 & 1.000 & 1.283 $\pm$ 0.056 & 0.287 \\
balanced & comparative & 1.0 & 250 & CF & 0.224 & 1.000 & 1.185 $\pm$ 0.057 & 0.874 \\
balanced & comparative & 1.0 & 500 & OSCAR/CALM & 0.870 & 0.020 & 0.200 $\pm$ 0.059 & 0.611 \\
balanced & comparative & 1.0 & 500 & \underline{B-CALM} & \underline{0.897} & 0.080 & 0.251 $\pm$ 0.072 & 0.782 \\
balanced & comparative & 1.0 & 500 & \textbf{RCT-only} & \textbf{0.889} & \textbf{0.000} & \textbf{0.280 $\pm$ 0.075} & \textbf{0.845} \\
balanced & comparative & 1.0 & 500 & CF & 0.229 & 1.000 & 1.142 $\pm$ 0.044 & 0.853 \\
balanced & comparative & 1.0 & 500 & pooled & 0.071 & 1.000 & 1.226 $\pm$ 0.044 & 0.275 \\
balanced & none &  & 250 & \underline{pooled} & \underline{0.799} & 0.000 & 0.131 $\pm$ 0.029 & 0.285 \\
balanced & none &  & 250 & OS-only & 0.794 & 0.020 & 0.137 $\pm$ 0.034 & 0.297 \\
balanced & none &  & 250 & CF & 0.964 & 0.000 & 0.154 $\pm$ 0.017 & 0.750 \\
balanced & none &  & 250 & \textbf{B-CALM} & \textbf{0.978} & \textbf{0.000} & \textbf{0.201 $\pm$ 0.076} & \textbf{0.795} \\
balanced & none &  & 250 & OSCAR/CALM & 0.918 & 0.020 & 0.258 $\pm$ 0.075 & 0.895 \\
balanced & none &  & 250 & RCT-only & 0.926 & 0.000 & 0.370 $\pm$ 0.101 & 1.225 \\
balanced & none &  & 500 & pooled & 0.815 & 0.000 & 0.118 $\pm$ 0.022 & 0.274 \\
balanced & none &  & 500 & OS-only & 0.807 & 0.020 & 0.129 $\pm$ 0.030 & 0.297 \\
balanced & none &  & 500 & B-CALM & 0.954 & 0.000 & 0.164 $\pm$ 0.051 & 0.591 \\
balanced & none &  & 500 & OSCAR/CALM & 0.867 & 0.000 & 0.201 $\pm$ 0.056 & 0.608 \\
balanced & none &  & 500 & \textbf{CF} & \textbf{0.965} & \textbf{0.000} & \textbf{0.151 $\pm$ 0.014} & \textbf{0.738} \\
balanced & none &  & 500 & \underline{RCT-only} & \underline{0.877} & 0.000 & 0.276 $\pm$ 0.056 & 0.841 \\
\addlinespace
\multicolumn{9}{l}{Avg coverage: B-CALM 0.929; RCT-only 0.899; OSCAR/CALM 0.890; pooled 0.481; OS-only 0.642; CF 0.716} \\
\multicolumn{9}{l}{Avg neg-transfer: B-CALM 0.035; RCT-only 0.000; OSCAR/CALM 0.008; pooled 0.398; OS-only 0.150; CF 0.398} \\
\addlinespace
mismatch-heavy & baseline-only &  & 250 & pooled & 0.787 & 0.000 & 0.196 $\pm$ 0.052 & 0.434 \\
mismatch-heavy & baseline-only &  & 250 & \underline{OS-only} & \underline{0.805} & 0.000 & 0.214 $\pm$ 0.059 & 0.507 \\
mismatch-heavy & baseline-only &  & 250 & CF & 0.952 & 0.000 & 0.165 $\pm$ 0.020 & 0.747 \\
mismatch-heavy & baseline-only &  & 250 & \textbf{B-CALM} & \textbf{0.966} & \textbf{0.000} & \textbf{0.244 $\pm$ 0.081} & \textbf{0.915} \\
mismatch-heavy & baseline-only &  & 250 & OSCAR/CALM & 0.905 & 0.000 & 0.280 $\pm$ 0.074 & 0.922 \\
mismatch-heavy & baseline-only &  & 250 & RCT-only & 0.907 & 0.000 & 0.440 $\pm$ 0.110 & 1.374 \\
mismatch-heavy & baseline-only &  & 500 & pooled & 0.810 & 0.020 & 0.165 $\pm$ 0.035 & 0.392 \\
mismatch-heavy & baseline-only &  & 500 & OS-only & 0.816 & 0.160 & 0.217 $\pm$ 0.075 & 0.505 \\
mismatch-heavy & baseline-only &  & 500 & B-CALM & 0.942 & 0.000 & 0.192 $\pm$ 0.053 & 0.667 \\
mismatch-heavy & baseline-only &  & 500 & OSCAR/CALM & 0.846 & 0.000 & 0.221 $\pm$ 0.051 & 0.628 \\
mismatch-heavy & baseline-only &  & 500 & \textbf{CF} & \textbf{0.951} & \textbf{0.000} & \textbf{0.164 $\pm$ 0.018} & \textbf{0.740} \\
mismatch-heavy & baseline-only &  & 500 & \underline{RCT-only} & \underline{0.883} & 0.000 & 0.307 $\pm$ 0.065 & 0.935 \\
mismatch-heavy & comparative & 0.2 & 250 & CF & 0.818 & 0.060 & 0.262 $\pm$ 0.033 & 0.760 \\
mismatch-heavy & comparative & 0.2 & 250 & OSCAR/CALM & 0.890 & 0.040 & 0.288 $\pm$ 0.094 & 0.928 \\
mismatch-heavy & comparative & 0.2 & 250 & B-CALM & 0.926 & 0.020 & 0.275 $\pm$ 0.069 & 0.976 \\
mismatch-heavy & comparative & 0.2 & 250 & pooled & 0.404 & 0.180 & 0.338 $\pm$ 0.045 & 0.435 \\
mismatch-heavy & comparative & 0.2 & 250 & OS-only & 0.464 & 0.220 & 0.346 $\pm$ 0.055 & 0.507 \\
mismatch-heavy & comparative & 0.2 & 250 & \textbf{\underline{RCT-only}} & \textbf{\underline{0.896}} & \textbf{0.000} & \textbf{0.443 $\pm$ 0.106} & \textbf{1.385} \\
mismatch-heavy & comparative & 0.2 & 500 & OSCAR/CALM & 0.880 & 0.000 & 0.205 $\pm$ 0.042 & 0.631 \\
mismatch-heavy & comparative & 0.2 & 500 & B-CALM & 0.932 & 0.020 & 0.216 $\pm$ 0.040 & 0.759 \\
mismatch-heavy & comparative & 0.2 & 500 & CF & 0.818 & 0.300 & 0.258 $\pm$ 0.026 & 0.745 \\
mismatch-heavy & comparative & 0.2 & 500 & pooled & 0.383 & 0.600 & 0.318 $\pm$ 0.039 & 0.395 \\
mismatch-heavy & comparative & 0.2 & 500 & \textbf{\underline{RCT-only}} & \textbf{\underline{0.900}} & \textbf{0.000} & \textbf{0.305 $\pm$ 0.063} & \textbf{0.940} \\
mismatch-heavy & comparative & 0.2 & 500 & OS-only & 0.466 & 0.660 & 0.342 $\pm$ 0.053 & 0.510 \\
mismatch-heavy & none &  & 250 & pooled & 0.797 & 0.000 & 0.191 $\pm$ 0.057 & 0.434 \\
mismatch-heavy & none &  & 250 & \underline{OS-only} & \underline{0.805} & 0.000 & 0.214 $\pm$ 0.059 & 0.507 \\
mismatch-heavy & none &  & 250 & CF & 0.943 & 0.000 & 0.172 $\pm$ 0.021 & 0.739 \\
mismatch-heavy & none &  & 250 & \textbf{B-CALM} & \textbf{0.966} & \textbf{0.000} & \textbf{0.244 $\pm$ 0.081} & \textbf{0.915} \\
mismatch-heavy & none &  & 250 & OSCAR/CALM & 0.906 & 0.000 & 0.280 $\pm$ 0.074 & 0.922 \\
mismatch-heavy & none &  & 250 & RCT-only & 0.907 & 0.000 & 0.440 $\pm$ 0.110 & 1.374 \\
mismatch-heavy & none &  & 500 & pooled & 0.832 & 0.020 & 0.156 $\pm$ 0.031 & 0.392 \\
mismatch-heavy & none &  & 500 & OS-only & 0.816 & 0.160 & 0.217 $\pm$ 0.076 & 0.505 \\
mismatch-heavy & none &  & 500 & B-CALM & 0.942 & 0.000 & 0.192 $\pm$ 0.053 & 0.667 \\
mismatch-heavy & none &  & 500 & OSCAR/CALM & 0.846 & 0.000 & 0.221 $\pm$ 0.051 & 0.628 \\
mismatch-heavy & none &  & 500 & \textbf{CF} & \textbf{0.944} & \textbf{0.000} & \textbf{0.169 $\pm$ 0.022} & \textbf{0.731} \\
mismatch-heavy & none &  & 500 & \underline{RCT-only} & \underline{0.883} & 0.000 & 0.307 $\pm$ 0.065 & 0.935 \\
\addlinespace
\multicolumn{9}{l}{Avg coverage: B-CALM 0.946; RCT-only 0.896; OSCAR/CALM 0.879; pooled 0.669; OS-only 0.695; CF 0.904} \\
\multicolumn{9}{l}{Avg neg-transfer: B-CALM 0.007; RCT-only 0.000; OSCAR/CALM 0.007; pooled 0.137; OS-only 0.200; CF 0.060} \\
\addlinespace
shared-only & baseline-only &  & 250 & pooled & 0.883 & 0.000 & 0.069 $\pm$ 0.022 & 0.224 \\
shared-only & baseline-only &  & 250 & \underline{OS-only} & \underline{0.892} & 0.000 & 0.071 $\pm$ 0.021 & 0.231 \\
shared-only & baseline-only &  & 250 & \textbf{B-CALM} & \textbf{0.980} & \textbf{0.000} & \textbf{0.156 $\pm$ 0.056} & \textbf{0.683} \\
shared-only & baseline-only &  & 250 & CF & 0.966 & 0.080 & 0.156 $\pm$ 0.016 & 0.759 \\
shared-only & baseline-only &  & 250 & OSCAR/CALM & 0.918 & 0.040 & 0.226 $\pm$ 0.076 & 0.804 \\
shared-only & baseline-only &  & 250 & RCT-only & 0.936 & 0.000 & 0.265 $\pm$ 0.078 & 0.977 \\
shared-only & baseline-only &  & 500 & pooled & 0.904 & 0.000 & 0.064 $\pm$ 0.023 & 0.218 \\
shared-only & baseline-only &  & 500 & OS-only & 0.915 & 0.000 & 0.067 $\pm$ 0.021 & 0.231 \\
shared-only & baseline-only &  & 500 & \textbf{B-CALM} & \textbf{0.944} & \textbf{0.000} & \textbf{0.137 $\pm$ 0.055} & \textbf{0.516} \\
shared-only & baseline-only &  & 500 & OSCAR/CALM & 0.869 & 0.000 & 0.177 $\pm$ 0.059 & 0.548 \\
shared-only & baseline-only &  & 500 & CF & 0.966 & 0.240 & 0.152 $\pm$ 0.017 & 0.748 \\
shared-only & baseline-only &  & 500 & \underline{RCT-only} & \underline{0.887} & 0.000 & 0.212 $\pm$ 0.062 & 0.692 \\
shared-only & comparative & 0.2 & 250 & B-CALM & 0.926 & 0.040 & 0.187 $\pm$ 0.070 & 0.714 \\
shared-only & comparative & 0.2 & 250 & pooled & 0.273 & 0.500 & 0.266 $\pm$ 0.033 & 0.225 \\
shared-only & comparative & 0.2 & 250 & OS-only & 0.270 & 0.500 & 0.265 $\pm$ 0.033 & 0.232 \\
shared-only & comparative & 0.2 & 250 & OSCAR/CALM & 0.910 & 0.060 & 0.234 $\pm$ 0.085 & 0.813 \\
shared-only & comparative & 0.2 & 250 & \underline{CF} & \underline{0.794} & 0.560 & 0.280 $\pm$ 0.034 & 0.774 \\
shared-only & comparative & 0.2 & 250 & \textbf{RCT-only} & \textbf{0.916} & \textbf{0.000} & \textbf{0.276 $\pm$ 0.089} & \textbf{0.980} \\
shared-only & comparative & 0.2 & 500 & B-CALM & 0.930 & 0.060 & 0.143 $\pm$ 0.045 & 0.546 \\
shared-only & comparative & 0.2 & 500 & OSCAR/CALM & 0.921 & 0.020 & 0.150 $\pm$ 0.051 & 0.550 \\
shared-only & comparative & 0.2 & 500 & \textbf{RCT-only} & \textbf{0.921} & \textbf{0.000} & \textbf{0.196 $\pm$ 0.050} & \textbf{0.694} \\
shared-only & comparative & 0.2 & 500 & pooled & 0.277 & 0.920 & 0.262 $\pm$ 0.028 & 0.219 \\
shared-only & comparative & 0.2 & 500 & OS-only & 0.283 & 0.860 & 0.260 $\pm$ 0.028 & 0.232 \\
shared-only & comparative & 0.2 & 500 & \underline{CF} & \underline{0.791} & 0.900 & 0.281 $\pm$ 0.034 & 0.769 \\
shared-only & none &  & 250 & \underline{pooled} & \underline{0.895} & 0.000 & 0.068 $\pm$ 0.020 & 0.224 \\
shared-only & none &  & 250 & OS-only & 0.892 & 0.000 & 0.071 $\pm$ 0.021 & 0.231 \\
shared-only & none &  & 250 & \textbf{B-CALM} & \textbf{0.980} & \textbf{0.000} & \textbf{0.156 $\pm$ 0.056} & \textbf{0.683} \\
shared-only & none &  & 250 & CF & 0.962 & 0.080 & 0.157 $\pm$ 0.017 & 0.754 \\
shared-only & none &  & 250 & OSCAR/CALM & 0.918 & 0.040 & 0.226 $\pm$ 0.076 & 0.804 \\
shared-only & none &  & 250 & RCT-only & 0.936 & 0.000 & 0.265 $\pm$ 0.078 & 0.977 \\
shared-only & none &  & 500 & pooled & 0.927 & 0.000 & 0.061 $\pm$ 0.019 & 0.218 \\
shared-only & none &  & 500 & OS-only & 0.915 & 0.000 & 0.067 $\pm$ 0.021 & 0.231 \\
shared-only & none &  & 500 & \textbf{B-CALM} & \textbf{0.945} & \textbf{0.000} & \textbf{0.137 $\pm$ 0.054} & \textbf{0.516} \\
shared-only & none &  & 500 & OSCAR/CALM & 0.869 & 0.000 & 0.177 $\pm$ 0.059 & 0.548 \\
shared-only & none &  & 500 & CF & 0.965 & 0.180 & 0.151 $\pm$ 0.015 & 0.738 \\
shared-only & none &  & 500 & \underline{RCT-only} & \underline{0.887} & 0.000 & 0.212 $\pm$ 0.062 & 0.692 \\
\addlinespace
\multicolumn{9}{l}{Avg coverage: B-CALM 0.951; RCT-only 0.914; OSCAR/CALM 0.901; pooled 0.693; OS-only 0.694; CF 0.907} \\
\multicolumn{9}{l}{Avg neg-transfer: B-CALM 0.017; RCT-only 0.000; OSCAR/CALM 0.027; pooled 0.237; OS-only 0.227; CF 0.340} \\
\addlinespace
\bottomrule
\end{longtable}
}

\section{Additional experimental results}
\label{app:extra-experiments}

\subsection{Saturation effective sample size}
\label{app:saturation-ess}

\begin{table}[htbp]
\centering
\footnotesize
\setlength{\tabcolsep}{4pt}
\caption{Saturation diagnostic at the largest $n^o$. Empirical $\ess^o$ is computed from posterior precision after subtracting the RCT contribution.}
\label{tab:saturation}
\begin{tabular}{ccrrrr}
\toprule
$n^r$ & $\sigma_\Delta$ & Emp.\ $\ess^o$ & Th.\ $\ess^o$ & Width & Cov. \\
\midrule
250 & 0.1 & 402.7 & 396.8 & 0.521 & 0.989 \\
250 & 0.2 & 105.6 & 99.8  & 0.700 & 0.953 \\
250 & 0.4 & 30.1  & 25.0  & 0.796 & 0.929 \\
250 & inf & 4.0   & 0.0   & 0.932 & 0.923 \\
500 & 0.1 & 402.0 & 396.8 & 0.440 & 0.975 \\
500 & 0.2 & 105.1 & 99.8  & 0.535 & 0.941 \\
500 & 0.4 & 29.8  & 25.0  & 0.584 & 0.930 \\
500 & inf & 3.5   & 0.0   & 0.669 & 0.921 \\
\bottomrule
\end{tabular}
\end{table}

\subsection{Semi-synthetic benchmark}
\label{app:semi-synthetic}

Table~\ref{tab:semi} reports RMSE and 90\% coverage over $50$ replicates of the STAR-like benchmark; Appendix~\ref{app:semi-construction} specifies the construction routine, covariate marginals, imposed outcome surfaces, and bias functions. Figure~\ref{fig:semi-sensitivity} gives the accompanying sensitivity analysis: posterior CATEs at seven representative covariate profiles as the fixed comparative-bias scale $\sigma_\Delta$ varies from $0$ (pooled borrowing) to $\infty$ (RCT-only inference).

\begin{table}[t]
\centering
\small
\caption{STAR-like semi-synthetic benchmark with known potential outcomes (50 replicates; construction in Appendix~\ref{app:semi-construction}). MR-OSCAR represents the OSCAR/CALM family, whose members coincide in this harness; it is the family member run in this study.}
\label{tab:semi}
\begin{tabular}{lrr}
\toprule
Method & CATE RMSE & 90\% coverage \\
\midrule
B-CALM & 0.191 $\pm$ 0.057 & 0.936 \\
pooled & 0.194 $\pm$ 0.025 & 0.425 \\
OS-only & 0.216 $\pm$ 0.028 & 0.377 \\
CF & 0.218 $\pm$ 0.024 & 0.856 \\
MR-OSCAR & 0.228 $\pm$ 0.067 & 0.889 \\
RCT-only & 0.313 $\pm$ 0.078 & 0.888 \\
\bottomrule
\end{tabular}
\end{table}

\begin{figure}[H]
  \centering
  \includegraphics[width=0.55\linewidth]{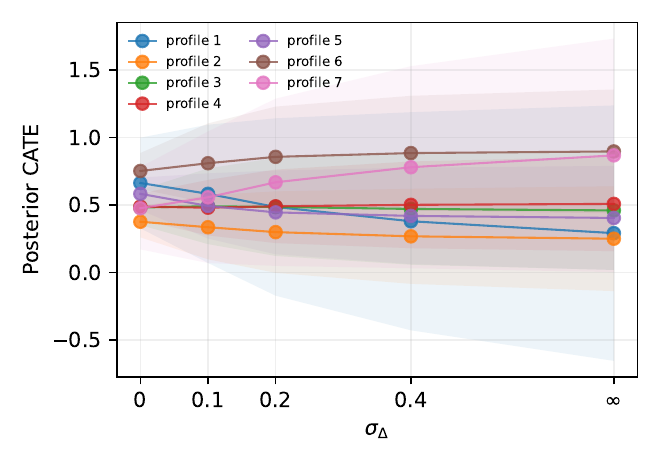}
  \caption{Sensitivity analysis on the STAR-like semi-synthetic benchmark (one replicate). Each line is the posterior CATE at one of seven representative covariate profiles (the first seven evaluation units), with shaded 90\% credible bands, as the fixed $\sigma_\Delta$ varies; estimates move between pooled borrowing ($\sigma_\Delta\to0$) and RCT-only inference ($\sigma_\Delta\to\infty$).}
  \label{fig:semi-sensitivity}
\end{figure}

\subsection{Ablations}
\label{app:ablation-table}

\begin{table}[t]
\centering
\small
\caption{B-CALM ablations on the balanced comparative-bias setting. Coverage and negative-transfer metrics are shown before RMSE to emphasize calibrated borrowing.}
\label{tab:ablation}
\begin{tabular}{lrrrr}
\toprule
Variant & Coverage & Neg. transfer & RMSE & Width \\
\midrule
Full B-CALM & 0.915 & 0.020 & 0.195 $\pm$ 0.047 & 0.670 \\
Deterministic encoders & 0.818 & 0.000 & 0.196 $\pm$ 0.047 & 0.506 \\
No b0 & 0.692 & 0.460 & 0.266 $\pm$ 0.057 & 0.551 \\
No b\_delta & 0.324 & 0.500 & 0.259 $\pm$ 0.026 & 0.279 \\
Point plus bootstrap & 0.259 & 0.660 & 0.292 $\pm$ 0.027 & 0.247 \\
\bottomrule
\end{tabular}
\end{table}

The ablation uses the balanced DGP at comparative-bias strength $0.4$ (an intermediate level outside both the sweep grid and the $\sigma_\Delta$ candidate set), $n^r=500$, $n^o=5{,}000$, and $50$ seeds per variant. \emph{Full B-CALM} is the empirical-Bayes--averaged fit of Algorithm~\ref{alg:bcalm}. \emph{No $b_0$} forces the baseline shift to zero ($\sigma_0=0$); \emph{No $b_\Delta$} pools the treatment contrast directly ($\sigma_\Delta=0$). The remaining two variants are covariance-scaled surrogates for reduced uncertainty propagation: \emph{Det.\ encoders} fits at $\sigma_\Delta=0.2$ and shrinks the posterior covariance by the factor $0.55$, representing the spread lost when representation uncertainty is ignored, and \emph{Point + bootstrap} fits the fully pooled model and shrinks its covariance by $0.80$, representing a point estimate with resampling-based uncertainty. The surrogates show how such pipelines under-cover without re-implementing each one.

\clearpage
\section{Posterior calibration and sensitivity plots}
\label{app:calibration}

\begin{figure}[H]
  \centering
  \includegraphics[width=0.48\linewidth]{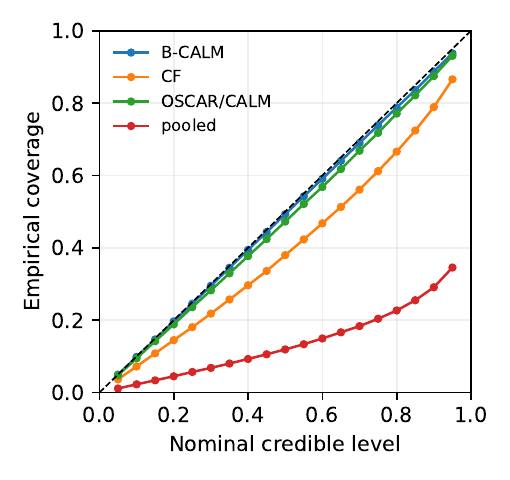}
  \caption{Posterior calibration on the balanced DGP under comparative bias: empirical coverage against nominal credible level. B-CALM tracks the diagonal while pooled and CF under-cover; the OSCAR/CALM family (R-OSCAR, MR-OSCAR, and CALM agree to within Monte Carlo noise and are plotted as one line) sits just below nominal at the 90\% mark.}
  \label{fig:calibration}
\end{figure}

\section{Real-data application diagnostics}
\label{app:gps-real}

This section reports the diagnostics supporting the pediatric-obesity external-control analysis in Section~\ref{sec:experiments}. Table~\ref{tab:gps-main} gives the full real-data baseline comparison, including Bayesian causal forest and OSCAR-family baselines omitted from the compact main-text table. Table~\ref{tab:gps-loo} checks sensitivity to masking one EHR control covariate at a time. Tables~\ref{tab:gps-bias} and~\ref{tab:gps-bias-susceptibility} report where the EHR controls disagree most with the trial control surface and how much hidden-baseline-bias risk each method's interval would absorb under selected and skeptical priors. Figure~\ref{fig:gps-real} visualizes the $\sigma_0$ borrowing path and the leave-one-covariate diagnostics, and Figure~\ref{fig:gps-bias-main} shows the corresponding baseline-bias recovery.

\begin{table}[H]
\centering
\footnotesize
\setlength{\tabcolsep}{3pt}
\caption{Real-data external-control augmentation. ATE on the trial covariate distribution; width reduction is relative to RCT-only. B-CALM uses $\sigma_0=0.05$, selected over a $\sigma_0$ grid (Appendix~\ref{app:reproducibility}). With one-arm EHR controls, $\sigma_\Delta$ is held diffuse and $\sigma_0$ is the active borrowing knob. The OSCAR/CALM row merges R-OSCAR, MR-OSCAR, and CALM, which coincide here (identical one-arm external-control calibration). The last column is the bias-susceptibility ratio under the skeptical $\sigma_0=0.6$ prior: 90\% range of the implied B-CALM-calibrated bias risk on each method's ATE, divided by that method's nominal width. A value above $1$ means a method's nominal interval is narrower than the bias it would absorb under a skeptical EHR-control prior.}
\label{tab:gps-main}
\begin{tabular}{@{}lccc@{}}
\toprule
Method & ATE [90\% CI] & Width (Red.) & Skep./w. \\
\midrule
B-CALM     & $-0.184\,[-0.336,-0.031]$ & $0.305\,(9.4\%)$  & $0.34$ \\
RCT-only   & $-0.174\,[-0.345,-0.008]$ & $0.337$           & $0.00$ \\
pooled     & $-0.191\,[-0.313,-0.074]$ & $0.239\,(28.9\%)$ & $1.05$ \\
CF         & $-0.162\,[-0.211,-0.112]$ & $0.099\,(70.6\%)$ & $2.91$ \\
BCF        & $-0.165\,[-0.242,-0.087]$ & $0.155\,(53.9\%)$ & --     \\
OSCAR/CALM & $-0.178\,[-0.331,-0.005]$ & $0.327\,(3.0\%)$  & $0.00$ \\
\bottomrule
\end{tabular}
\end{table}

\begin{table}[H]
\centering
\footnotesize
\setlength{\tabcolsep}{2pt}
\caption{Leave-one-covariate masking in the EHR controls. The final column checks whether the full-model 90\% CI contains the masked posterior mean.}
\label{tab:gps-loo}
\begin{tabular}{@{}lcccc@{}}
\toprule
Mask & ATE & 90\% CI & Width & Hit \\
\midrule
full & -0.184 & [-0.336, -0.031] & 0.305 & -- \\
sex & -0.184 & [-0.336, -0.029] & 0.306 & \checkmark \\
race\_ethnicity & -0.152 & [-0.303, -0.000] & 0.302 & \checkmark \\
language & -0.173 & [-0.325, -0.021] & 0.303 & \checkmark \\
insurance & -0.195 & [-0.361, -0.028] & 0.333 & \checkmark \\
wflz\_baseline & -0.186 & [-0.337, -0.031] & 0.306 & \checkmark \\
\bottomrule
\end{tabular}
\end{table}

\begin{table}[H]
\centering
\footnotesize
\setlength{\tabcolsep}{2pt}
\caption{Largest EHR-versus-RCT control-outcome discrepancies by posterior $|b_0(z)|$ under the skeptical commensurability prior $\sigma_0=0.6$. The skeptical prior gives $b_0$ enough room to absorb between-source baseline disagreement so that the diagnostic surfaces the strata where pooling would be most misleading; the trial-anchored ATE in Table~\ref{tab:gps-main} uses the BMA-selected $\sigma_0=0.05$ and yields much smaller posterior $b_0$ shrinkage.}
\label{tab:gps-bias}
\begin{tabular}{@{}p{0.46\textwidth}rcc@{}}
\toprule
Stratum & $n$ & Mean & 90\% CI \\
\midrule
Duke; No/other payer; base low & 3027 & -0.567 & [-1.178, 0.044] \\
Duke; Medicaid; base low & 542 & -0.501 & [-0.916, -0.086] \\
Duke; No/other payer; base mid & 3022 & -0.458 & [-1.002, 0.086] \\
\bottomrule
\end{tabular}
\end{table}

\begin{table}[H]
\centering
\footnotesize
\setlength{\tabcolsep}{2pt}
\caption{B-CALM-calibrated hidden-bias susceptibility in the GPS external-control analysis. The uniform column is the ATE change under a $+0.1$ WFLz shift of all EHR control outcomes. Bias-risk ranges are posterior 5th--95th percentiles of $\sum_s L_{m,s}b_{0,s}$ or row-level analogues.}
\label{tab:gps-bias-susceptibility}
\resizebox{\textwidth}{!}{%
\begin{tabular}{@{}lcccccccc@{}}
\toprule
Method & ATE & Nominal 90\% CI & Width & $+0.1$ shift & Selected risk & Sel./width & Skeptical risk & Skep./width \\
\midrule
B-CALM & -0.184 & [-0.336, -0.031] & 0.305 & -0.038 & [-0.038, 0.031] & 0.23 & [-0.064, 0.040] & 0.34 \\
RCT-only & -0.174 & [-0.345, -0.008] & 0.337 & 0.000 & [0.000, 0.000] & 0.00 & [0.000, 0.000] & 0.00 \\
pooled & -0.191 & [-0.313, -0.074] & 0.239 & -0.096 & [-0.094, 0.083] & 0.74 & [-0.142, 0.110] & 1.05 \\
CF & -0.162 & [-0.211, -0.112] & 0.099 & -0.088 & [-0.086, 0.121] & 2.08 & [-0.020, 0.268] & 2.91 \\
R-OSCAR & -0.178 & [-0.331, -0.005] & 0.327 & 0.000 & [-0.000, 0.000] & 0.00 & [-0.000, 0.000] & 0.00 \\
MR-OSCAR & -0.178 & [-0.331, -0.005] & 0.327 & 0.000 & [-0.000, 0.000] & 0.00 & [-0.000, 0.000] & 0.00 \\
CALM & -0.178 & [-0.331, -0.005] & 0.327 & 0.000 & [-0.000, 0.000] & 0.00 & [-0.000, 0.000] & 0.00 \\
B-CALM $\sigma_0=0.6$ & -0.170 & [-0.331, 0.004] & 0.336 & -0.000 & [-0.001, 0.000] & 0.00 & [-0.002, 0.001] & 0.01 \\
B-CALM $\sigma_0=\infty$ & -0.170 & [-0.340, 0.000] & 0.340 & -0.000 & [-0.000, 0.000] & 0.00 & [-0.000, 0.000] & 0.00 \\
\bottomrule
\end{tabular}%
}
\end{table}

\begin{figure}[t]
  \centering
  \begin{subfigure}[b]{0.48\textwidth}
    \centering
    \includegraphics[width=\linewidth]{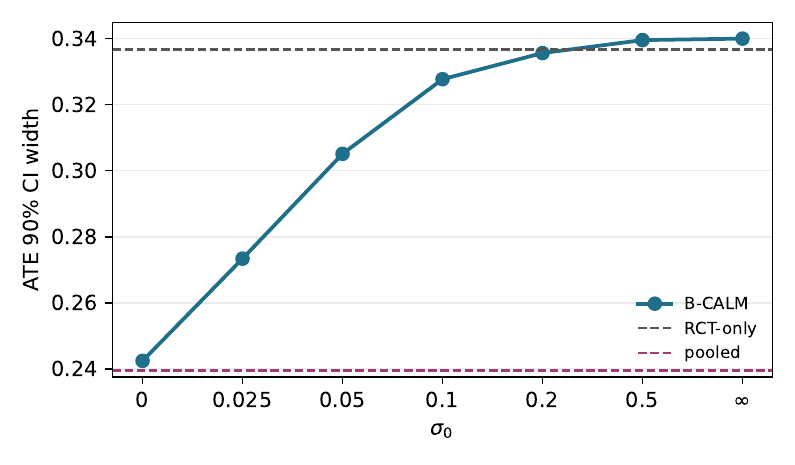}
    \caption{ATE interval width across $\sigma_0$.}
  \end{subfigure}\hfill
  \begin{subfigure}[b]{0.48\textwidth}
    \centering
    \includegraphics[width=\linewidth]{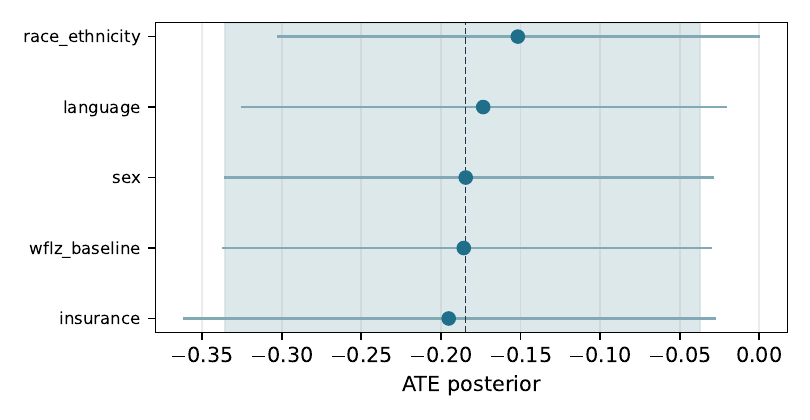}
    \caption{Leave-one-covariate masking.}
  \end{subfigure}
  \caption{Real-data external-control augmentation diagnostics. With one-arm EHR controls, $\sigma_0$ is the active borrowing knob and $\sigma_\Delta$ is held diffuse.}
  \label{fig:gps-real}
\end{figure}

\begin{figure}[t]
  \centering
  \includegraphics[width=0.90\textwidth]{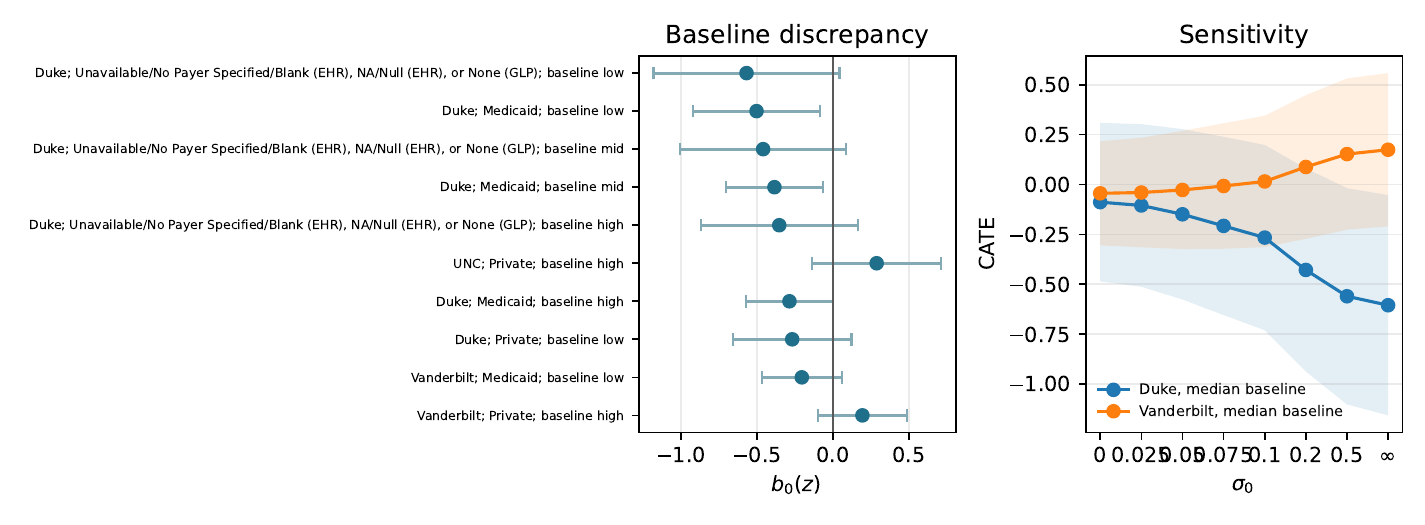}
  \caption{Baseline-bias recovery and $\sigma_0$ sensitivity in the pediatric-obesity application. With $A_j^o\equiv 0$ on every EHR row, $\sigma_0$ is the identified borrowing knob; $\sigma_\Delta$ is held diffuse.}
  \label{fig:gps-bias-main}
\end{figure}

\clearpage

\end{document}